\def\operator#1{\mathfrak{O}[#1]}
\def\dotoverparen#1{\mathop{\vbox{\ialign{##\crcr\hfill$\cdot $\hfill\crcr
\noalign{\kern-2.3ex}
\downparenfill\crcr\noalign{\kern0.4ex\nointerlineskip}
$\hfil\displaystyle{#1}\hfil$\crcr}}}\limits}
\newcommand{\RR}{\mathbb R}
\newcommand{\CC}{\mathbb C}
\newcommand{\ZZ}{\mathbb Z}
\newcommand{\NN}{\mathbb N}
\newcommand{\C}{\mathcal{C}}
\newcommand{\F}{\mathcal{F}}
\newcommand{\Q}{\mathcal{Q}}
\newcommand{\cZ}{\mathcal{Z}}
\newcommand{\N}{\mathcal{N}}
\newcommand{\R}{\mathcal{R}}
\newcommand{\cS}{\mathcal{S}}
\newcommand{\cB}{\mathcal{B}}
\newcommand{\cL}{\mathcal{L}}
\def\boundr{\mathbf{r}}
\def\bounde{\mathbf{e}}
\def\boundf{\mathbf{f}}
\def\boundq{\mathbf{q}}
\def\boundx{\mathbf{x}}
\def\boundz{\mathbf{z}}
\def\boundzeta{\mbox{$\pmb\zeta$}}
\def\boundxi{\mbox{$\pmb\xi$}}
\def\im{{\rm i}}
\newcommand{\blkdiag}{{\rm blkdiag}}
\def\init{{\rm init}}
\def\ypsilon{{\mathchoice%
{{\mbox{$\scriptstyle \mathfrak{M} $}}}%
{{\mbox{$\scriptstyle \mathfrak{M} $}}}%
{{\mbox{$\scriptscriptstyle \mathfrak{M} $}}}%
{{\mbox{\tiny$\scriptscriptstyle \mathfrak{M} $}}}%
}}
\newcommand{\overbar}[1]{\mkern 1.5mu\overline{\mkern-1.5mu#1\mkern-1.5mu}\mkern 1.5mu}
\DeclareMathAlphabet\EuScript{U}{eus}{m}{n}
\DeclareMathAlphabet\EuScriptB{U}{eus}{b}{n}
\DeclareMathAlphabet{\mathcursive}{U}{esstixcal}{m}{n}
\def\downparenfill{$\m@th\braceld\leaders\vrule\hfill\bracerd$}
\def\overparen#1{\mathop{\vbox{\ialign{##\crcr\crcr \noalign{\kern0.4ex}
\downparenfill\crcr\noalign{\kern0.4ex\nointerlineskip}
$\hfil\displaystyle{#1}\hfil$\crcr}}}\limits}
\def\h@uteurmax{\dp \strutbox}
\def\noteenmarge#1{\strut\vadjust{\kern-\h@uteurmax\textedenote{#1}}}
\def\textedenote#1{\vtop to \h@uteurmax{\baselineskip\h@uteurmax\vss\llap{#1}\null}}
\definecolor{modifcolor}{rgb}{1,0,0}
 \journalname{Mathematics of Control, Signals, and Systems}
\begin{document}

\title{Nonlinear Robust Periodic Output Regulation of
  Minimum Phase Systems
}


\author{Daniele Astolfi        \and
        Laurent Praly \and Lorenzo Marconi 
}


\institute{D. Astolfi  \at
              Univ Lyon, Universit\'e Claude Bernard Lyon 1, CNRS, LAGEPP UMR 5007, 43 boulevard du 11 Novembre 1918, F-69100, Villeurbanne, France \\
              \email{daniele.astolfi@univ-lyon1.fr}           
           \and
        L. Praly \at
      MINES ParisTech, PSL Research University, CAS - Centre automatique et syst\'emes, Paris 75006, France
         \and
         L. Marconi \at 
       CASY-DEI,  University of Bologna, Italy 
}

\date{Received: \today / Accepted: date}

\maketitle

\begin{abstract}
In linear systems theory it's a well known fact that a regulator  given by the cascade of an oscillatory dynamics, driven by some regulated variables, and of a stabiliser stabilising the cascade of the plant and of the oscillators has the ability of  blocking on the steady state of the regulated variables any harmonics matched with the ones of the oscillators.  This is the well-celebrated internal model principle. In this paper we are interested to follow the same design route for a controlled plant that is a nonlinear and periodic system  with period $T$: we add a bunch of linear oscillators, embedding  $n_o$ harmonics that are multiple of $2 \pi/T$, driven by a ``regulated variable'' of the nonlinear system, we  look for a stabiliser for the nonlinear cascade of the plant and the oscillators, and we study the asymptotic properties of the resulting closed-loop regulated variable.  In this framework the contributions of the paper are multiple: for  specific class of minimum-phase systems we present a systematic way of  designing a stabiliser, which is uniform with respect to $n_o$,  by using a mix of high-gain and forwarding techniques; we prove that the resulting closed-loop system has a periodic steady state with period $T$ with a domain of attraction not shrinking with $n_o$; similarly to the linear case, we also show that the spectrum of the steady state closed-loop regulated variable does not contain the $n$ harmonics embedded in the bunch of oscillators and that the $L_2$ norm of the regulated variable is a monotonically decreasing function of $n_o$. The results are robust, namely the asymptotic properties on the regulated variable hold also in presence of any uncertainties in the controlled plant not destroying closed-loop stability.    

 \keywords{Nonlinear output regulation \and Repetitive Control \and Minimum Phase Systems \and Harmonic rejection}
\end{abstract}

\section{Introduction}
\label{intro}

The problem of  rejecting or tracking asymptotically periodic or quasi-periodic 
 signals
is of primary importance in many applications \cite{kurniawan2014survey,wang2009survey,longman2000iterative}.
Among them,  
robotics \cite{kasac2008passive,omata1987nonlinear}, 
power electronics \cite{mattavelli2004repetitive} and
bio-medics engineering \cite{gentili2008disturbance}, 
 just to cite a few. Such a  problem is commonly known in 
control system theory as 
\textit{robust output regulation}, see
\cite{davison1976robust,francis1976internal,byrnes1997output}, where the 
adjective robust refers to the fact that 
the  asymptotic properties are desired to hold not only for the nominal
model of the system but also for small perturbations of it. 
The solution to the robust output regulation problem for 
finite-dimensional linear time-invariant systems  is accredited 
to Francis, Wonham and Davison who at the same time,
but independently, published their main works
 during the 70's, see, e.g., \cite{davison1976robust,francis1976internal}.
The proposed solution relies on the so-called
\textit{internal model principle} coined by Francis and Wonham
in their celebrated work \cite{francis1976internal}, 
stating that output regulation
property is insensitive to plant parameter variations
 ``only if the controller utilises feedback
of the regulated variable, and incorporates in the feedback path a suitably reduplicated
model of the dynamic structure of the exogenous signals which the regulator is
required to process''.
In turns, if some overall stability properties are guaranteed, 
the presence of a copy of the exogenous dynamics 
(also denoted as exosystem) 
in the regulator provides a ``blocking-zero'' effect on 
the desired regulated output at the dynamics excited by 
such exogenous signals.
In other words, the regulated output cannot 
contain any mode of the exosystem if the overall trajectories 
are bounded.
In practice, an integral action in the controller
allows one to achieve zero-DC value of the regulated output, 
while a given oscillator at a certain frequency, 
ensures to have zero spectral component
at it \cite{astolfi2015approximate,ghosh2000nonlinear}.

In our preliminary contribution 
\cite{astolfi2015approximate}, focused on the problem of 
nonlinear robust output regulation in presence of 
periodic exosignals, we have proposed a solution 
for input-affine nonlinear systems 
that mimics the aforementioned linear paradigm \cite{davison1976robust,francis1976internal}.
It involves the following two main components.
\begin{enumerate}
\item
An internal-model unit processing the regulated
output which is  composed
by a bunch of oscillators at a given fundamental frequency  and
a certain number $n_o$, possibly infinite, of its multiples.
Such an internal model unit
guarantees a blocking-zero property on the regulated error 
in terms of spectral components, i.e. 
the regulated output cannot have harmonics at the frequencies 
embedded in the internal model unit, see
\cite[Proposition 3]{astolfi2015approximate} or 
\cite[Proposition 1]{ghosh2000nonlinear}.
Such a property is insensitive to model perturbations
as long as system trajectories are bounded. In doing so,
(contrary to large 
part of nonlinear output regulation theory, see, e.g.,
\cite{byrnes2003limit,byrnes2004nonlinear,marconi2007output})
a  precise description of the generator of the exosignals is not 
needed as long as the fundamental period
characterising all the spectrum is known. This motivates us
in describing the plant dynamics as a $T$-periodic time-varying 
nonlinear system, without need of a precise description of the 
exogenous signals dynamics.

\item
A state feedback stabiliser composed by two parts:
a preliminary state feedback in charge of stabilising
the equilibrium assumed to exist when the exosignals 
are zero, augmented by a
forwarding feedback which serves at stabilising
the overall extended dynamics composed by plant
and the internal-model unit
(see, e.g., \cite{astolfi2015approximate,astolfi2017integral,
astolfi2019francis}).
\end{enumerate}

The motivation for such a solution comes from the well-known fact that 
an input-to-state stable system driven by a periodic input
admits (at least locally) periodic solutions of the same period
\cite[\S 12]{sontag2008input}.
In our preliminary work \cite{astolfi2015approximate}, however, 
some important issues were still open. In particular, 
it was not clear whether the domain of attraction 
of the steady-state periodic solution shrinks to zero by increasing
the dimension of the internal-model unit (i.e. the number of oscillators), 
and whether asymptotic regulation can be achieved
by means of a (countable) infinite-dimensional internal-model 
(i.e. by using an infinite number of oscillators).
We gave a partial answer to this latter question 
in our second 
preliminary contribution \cite{astolfi2019francis}.

The objective of this work is therefore to  
give an exhaustive answer to both open questions
by providing a unifying result and by showing the 
practical interest of the proposed approach in  periodic output regulation frameworks.
For this,
 as in  \cite{astolfi2019francis},
 we restrict our attention to the particular class of
minimum-phase nonlinear systems, 
that is, systems possessing
a well-defined relative degree with constant 
high-frequency gain, which are described in 
 normal form (possibly after a change of coordinates), 
see \cite[Chapter 4]{isidori1995nonlinear}, and with locally exponentially 
stable zero-dynamics. This allows us to choose
an elementary 
\textit{high-gain} feedback as
a  preliminary stabilising feedback
(see, e.g., 
\cite[Lemma 2.2]{teel1995tools}) and a linear
\textit{forwarding} feedback. In this simplified context,
it is shown that 
the behaviour of the proposed regulator is
\textit{robust} with respect to model uncertainties and 
 \textit{uniform} 
in the dimension of the internal model unit (i.e., the number 
$n_o$ of oscillators), in the following sense:

\begin{itemize}
\item[$\bullet$] The high-gain feedback doesn't need to be re-parametrized
if the number of oscillators vary. It is chosen 
before hand, based on the Lipschitz system properties
(i.e., the precise knowledge of the plant's dynamics is not needed).

\item[$\bullet$] The proposed regulator ensures boundedness of the overall closed-loop 
system trajectories and the existence of  a (locally) 
exponentially stable $T$-periodic solution, with a domain of 
attraction which is uniform in the number of oscillators $n_o$, and independent
of their frequencies.
In other words, a precise knowledge of the period $T$
characterising the plant's dynamics 
(that is, the period of the exosignals)
 is not needed to ensure such a  boundedness property.
 
 \item[$\bullet$] If the fundamental frequency of the internal-model unit 
 oscillators
 is selected exactly as the one characterising 
 the plant's dynamics  $\tfrac{2\pi}T$, then the regulated output cannot
 contain harmonics at the frequencies of such oscillators, and 
 approximate regulation is achieved\footnote{Note that the notion of approximate
  considered here is different from the 
  notion of    $k$-th order approximate regulation 
   defined in  \cite{huang1994robust}, 
\cite[Chapter 2.5]{byrnes1997output}.}.
 Furthermore, the $L_2$ norm (over a period) of the steady-state output 
 is inversely proportional to the number of oscillators: 
 the quality of approximate regulation can be therefore improved
 by augmenting the number of oscillators (and not by increasing 
 the parameters of the high-gain feedback regulator as in 
 high-gain feedback control).
 
 \item[$\bullet$] If the number of oscillators is infinite, then
  asymptotic\footnote{In this case, 
exponential stability cannot be anymore guaranteed 
in view of the presence of an infinite number
of poles on the imaginary axis and the use of a bounded
(in the sense of \cite[Page 24]{tucsnak2009observation}) control operator. See also 
\cite{paunonen2010internal,paunonen2017robust,astolfi2021repetitive} for
other examples of such a phenomenon.}  output regulation is achieved. This 
infinite dimensional regulator
 preserves the same bounds in terms of high-gain feedback and 
 domain of attraction.
\end{itemize}
In conclusion, we show that the problem of periodic robust output 
regulation can be generically solved  by an
infinite-dimensional internal-model based regulator 
containing oscillators at all multiples of the basic periodic.
Moreover, a good  finite-dimensional approximation 
of such a regulator can be obtained by 
using a finite number of oscillators.
In doing so, one obtains 
a desired accuracy (in terms of $L^2$ norm) with 
bounds of the high-gain term and the domain of attraction 
which are uniform (i.e., independent) in the desired 
approximation (i.e., the number of oscillators).

 From a technical point of view, the main difficulty to deal with 
 is the fact that each time we modify the dimension of the internal 
 model unit (i.e. we vary the number of oscillators), we have 
 to study a system with different state-space dimension. 
While most of mathematical tools are well-suited 
 to study the effect of parameter's variations in system dynamics
 (see, e.g., \cite{khalil2002nonlinear} or 
  \cite[Appendix]{astolfi2017integral}), 
 it is very hard to compare objects with different dimensions
 and we are unaware of generic tools developed 
 for this specific purpose. 
 For this reasons, in this work we analyse all the aforementioned
 properties 
(all these contributions are new with respect to 
our preliminary works \cite{astolfi2015approximate,astolfi2019francis}) 
 by carefully re-doing all the proofs concerning
 existence of periodic solutions (which mainly relies on 
 fixed-point theorems) and their stability properties
 (which mainly relies on Lyapunov analysis), by  showing 
 that all these features are uniform in the internal-model system
 dimension. 
In such a perspective, the fact of focusing on 
minimum-phase systems with constant 
high-frequency gain and unitary relative degree
allows us to conceptually simplify
most of the (already complex) proofs.  The case of 
higher relative degree can be easily dealt with 
by means of partial change of coordinates
as show in \cite{serrani2001semi}. Details are 
given in Section~\ref{sec:higher_degree}.
The case of square multi-input multi-output 
(i.e. same number of inputs and outputs) 
systems with constant (and invertible) high-frequency
matrix gain is also straightforward and not considered in this work.

This  rest of the article is organised as follow.
In Section ~\ref{sec:prob} we state the problem formulation and, in Section ~\ref{sec:main}, we provide the main results of this work.With these precise elements at hand, we are in a better position to compare our results with what is available in the literature. This is done in Section ~\ref{sec_LR}. A numerical example is proposed in 
Section~\ref{sec:example}.
Conclusions are drawn in Section~\ref{sec:conclusions}.
All proofs are postponed in the Appendix.

\paragraph*{Notation.}
\label{sec:notation}
$\;$ $\RR$ is the set of real numbers and $\RR_{\geq0}:=[0,+\infty)$;
$\ZZ$ is the set of integers; 
$\NN$ is the set of non-negative integers, 
and $\NN_{>0}$ is the set of positive integers,
 $\CC$ is the set of complex
number and $\im= \sqrt{-1}$. Given $x\in\CC$, we denote 
with $\bar x$ its conjugate.
We denote by $\C^k(X;Y)$ the set of $C^k$ functions from $X$ to 
$Y$, and with 
$\C^k_T([0,T];X)$ the set of $C^k$ $T$-periodic functions from $[0,T]$ to $X$. For compactness, in the following, $\C^k_T([0,T];X)$ 
will be simply denoted as $\C_T^k(X)$.

\section{Problem Statement}
\label{sec:prob}

The objective of this work is to state and prove very precise results concerning the output regulation problem for the particular systems  that can be rewritten, under suitable change
of coordinates, in the form
\begin{equation*}\label{eq:sys0}
\begin{array}{rcl}
\dot x & =& f(w,x,e)
\\
\dot e & =& q(w,x,e) + u
\end{array}
\end{equation*}
where $(x,e)\in\RR^n\times \RR$ is the state,
with $x$ unmeasured
 $u\in\RR$ is the control input, 
 $e\in\RR$ is the
measured output
 to be regulated 
to zero
and $w\in\RR^{n_x}$ are exogenous signals
representing references to be tracked
or disturbances to be rejected. The number of 
control problems that can be recast in such a form 
is very large and examples may be found, for instance, 
in \cite{mahmoud1996asymptotic,khalil2000design,byrnes2003limit,byrnes2004nonlinear,
marconi2007output,astolfi2021repetitive}
and references therein.
As discussed in \cite[Remark 1]{byrnes2004nonlinear},
the considered class of systems may look very particular,
  as it has relative degree
1 between control input u and regulated output e.
However, the 
 design methodology described in 
what follows lends itself to a 
straightforward extension to systems with higher 
relative degree \cite[eq. (33)]{byrnes2003limit}. 
More details are postponed to Section~\ref{sec:higher_degree}.
In this work, 
we suppose that $w$ is $T$-periodic,
in other words $w$ is a sufficiently smooth function fulfilling $w(t+T)= w(t)$. 
To simplify our notations,
throughout the rest of this paper
we will replace $w$ by $t$ and we
will assume that the functions $f,q$ satisfies
$$
f(t+T,x,e) = f(t,x,e), 
\quad
q(t+T,x,e) = q(t,x,e)
$$
for any $t$. 
We are interested, moreover, in systems which are strongly minimum-phase.
In particular, we suppose that when $e=0$, the system 
$$
\dot x = f(t,x,0)
$$
admits a unique periodic solution $x_0(\cdot)$ which is exponentially 
stable with 
some
 domain of attraction.
Since, in what follows, the knowledge of $x_0$ is not required and $x$ is not accessible,
there is no loss of generality in assuming that $x_0(t)$ is the origin of the coordinates for $x$ at time $t$.
This allows us to
formulate our output 
regulation problem for the following class of systems
\begin{subequations}\label{eq:sys}
\begin{align}
\label{eq:sys_zero_dyn}
\dot x & = f(t,x,e)
\\
\dot e & = q(t,x,e) + u \label{eq:sys_e}
\end{align}
\end{subequations}
where $f:\RR\times \RR^n\times \RR\to \RR^n$
and $q:\RR\times \RR^n\times \RR\to \RR$  are $C^2$ and $T$-periodic in their first argument,  $f$ is such that $f(t,0,0)= 0$ for all $t\geq0$ and 
the function $q$  satisfies $\sup_{t\in[0,T]}|q(t,0,0)|> 0$ (otherwise 
the problem would trivially boil down to a stabilisation 
context that would be solved, 
in a semi-global context,
 by a simple high-gain 
feedback $u=-\sigma e$, with $\sigma>0$ sufficiently large, 
see, e.g., \cite{teel1995tools}).

Our approximate output regulation objective is
\begin{equation}
\label{eq:objective}
\limsup_{t\to \infty } |e(t)|\; \leq \; \bounde_p
\end{equation}
with $\bounde_p$ arbitrarily chosen.

Evidently, a simple way to achieve such a practical regulation property
could be that of implementing a high-gain 
 controller of the form $u=-\sigma e$, 
with  $\sigma>0$ large enough, see, e.g., \cite[Example 2.1]{teel1995tools}.
The drawback of this controller is that $\sigma $ is the 
only tunable parameter and it has the
undesirable property of amplification
of possible (high-frequency) measurement noise, 
thus being  unsuited in practical applications.
The 
regulator we propose incorporates also a tunable internal model allowing
to satisfy
\eqref{eq:objective} but also stronger properties
that a standard high-gain controller
could not achieve. 
Also, as explained in the introduction, we want a result as less dependent on $f$ and $q$ as possible.
The aim of this weak dependency is to make the property 
\eqref{eq:objective} robust to uncertainties in $f$ and $q$.  This is achieved by asking not the exact 
knowledge of the pair $(f,q)$ but only that it belongs to a family. Precisely,
say that we have a model pair $(f_m,q_m)$. We define from it
the set of bounding functions 
$$
 \begin{array}{rclrcl}
\displaystyle 
\sup_{(t,x,e)\in \cS_T(\boundx,\bounde)}
\left|
f_m(t,x,e)
\right|
 &\leq &\boundf (\boundx,\bounde),
 &
 \displaystyle 
\sup_{(t,x,e)\in \cS_T(\boundx,\bounde)}
\left|
\frac{\partial f_m}{\partial e} (t,x,e)
\right|
&\leq &
\boundf_{e} (\boundx,\bounde)
,
\\
\displaystyle
\sup_{(t,x,e)\in \cS_T(\boundx,\bounde)}
\left|
\frac{\partial ^2f_m}{\partial x\partial x} (t,x,e)
\right|
&\leq & \boundf _{xx} (\boundx,\bounde),
&
 \displaystyle 
\sup_{(t,x,e)\in \cS_T(\boundx,\bounde)}
\left|
\frac{\partial ^2f_m}{\partial e\partial x} (t,x,e)
\right|
& \leq &
\boundf _{ex} (\boundx,\bounde),
\\
\displaystyle 
\sup_{(t,x,e)\in \cS_T(\boundx,\bounde)}
\left|
q_m(t,x,e)
\right|
 &\leq &\boundq (\boundx,\bounde),
 &
 \displaystyle 
\sup_{(t,x,e)\in \cS_T(\boundx,\bounde)}
\left|
\frac{\partial q_m}{\partial e} (t,x,e)
\right|
&\leq &
\boundq_{e} (\boundx,\bounde)
,
\\
\displaystyle
\sup_{(t,x,e)\in \cS_T(\boundx,\bounde)}
\left|
\frac{\partial q_m}{\partial x} (t,x,e)
\right|
&\leq & \boundq _{x} (\boundx,\bounde),
&
 \displaystyle 
\sup_{(t,x,e)\in \cS_T(\boundx,\bounde)}
\left|
\frac{\partial q_m}{\partial t} (t,x,e)
\right|
& \leq &
\boundq _{t} (\boundx,\bounde).
 \end{array}
 $$
where we have introduced the sets
$$
\begin{array}{rclrcl}
\cS_x(\boundx)  &:=& \{ x  : 
|x|\leq \boundx\},
& 
\cS_e(\bounde) &:= &\{ x  : 
|e|\leq \bounde\},
\\
\cS(\boundx,\bounde) & :=& \cS_x(\boundx)\times\cS_e(\bounde),
\qquad
& 
\cS_T(\boundx,\bounde)&:=&[0,T]\times \cS_x(\boundx)\times\cS_e(\bounde),
\end{array}
$$
Then we forget $(f_m,q_m)$ and instead
consider the families of functions $\F$
and $\Q$ defined as follows. 

\begin{definition}[Family $\F$]\label{def:familyF}\itshape
 Given a triplet
of positive numbers 
 $\overline P_x,\underline P_x,\alpha>0$
and a pair of positive numbers $\boundx,\bounde>0$, 
we say that the function  $f:\RR_{\geq0}\times \RR^n\times\RR\mapsto \RR^n$ belongs to the 
family 
$\F(\overline P_x,\underline P_x,\alpha,\boundx,\bounde)$
if the following statements holds.
 \begin{itemize}
 \item[$\bullet$] 
The function $f$  is $C^2$, 
$T$-periodic in the
first
  argument and satisfies the following 
 set of inequalities
 $$
\hskip -\leftmargin
 \begin{array}{rclrcl}
\displaystyle 
\sup_{(t,x,e)\in \cS_T(\boundx,\bounde)}
\left|
f(t,x,e)
\right|
 &\leq &\boundf (\boundx,\bounde),
 &
 \displaystyle 
\sup_{(t,x,e)\in \cS_T(\boundx,\bounde)}
\left|
\frac{\partial f}{\partial e} (t,x,e)
\right|
&\leq &
\boundf_{e} (\boundx,\bounde)
,
\\
\displaystyle
\sup_{(t,x,e)\in \cS_T(\boundx,\bounde)}
\left|
\frac{\partial ^2f}{\partial x\partial x} (t,x,e)
\right|
&\leq & \boundf _{xx} (\boundx,\bounde),
&
 \displaystyle 
\sup_{(t,x,e)\in \cS_T(\boundx,\bounde)}
\left|
\frac{\partial ^2f}{\partial e\partial x} (t,x,e)
\right|
& \leq &
\boundf _{ex} (\boundx,\bounde).
 \end{array}
$$
\vskip -2em\refstepcounter{equation}\label{LP1}\null \hfill$(\theequation)$\vskip -2em
$$
\hskip -\leftmargin
\begin{array}{rclrcl}
\displaystyle 
\sup_{(t,x,e)\in \cS_T(\boundx,\bounde)}
\left|
f(t,x,e)
\right|
 &\leq &\boundf (\boundx,\bounde),
 &
 \displaystyle 
\sup_{(t,x,e)\in \cS_T(\boundx,\bounde)}
\left|
\frac{\partial f}{\partial e} (t,x,e)
\right|
&\leq &
\boundf_{e} (\boundx,\bounde)
,
\\
\displaystyle
\sup_{(t,x,e)\in \cS_T(\boundx,\bounde)}
\left|
\frac{\partial ^2f}{\partial x\partial x} (t,x,e)
\right|
&\leq & \boundf _{xx} (\boundx,\bounde),
&
 \displaystyle 
\sup_{(t,x,e)\in \cS_T(\boundx,\bounde)}
\left|
\frac{\partial ^2f}{\partial e\partial x} (t,x,e)
\right|
& \leq &
\boundf _{ex} (\boundx,\bounde).
 \end{array}\quad \null 
$$
 \item[$\bullet$]
  There exists a $T$-periodic $C^1$ positive definite matrix
   $P_x:\RR_{\geq0}\to\RR^{n\times n}$,  satisfying
  \begin{eqnarray}
&\displaystyle 0  \; < \; \underline P_x I \; \leq \;  P_x(t) \leq 
\;  \overline P_x I,
\label{eq:Pass1}
\\[.5em]
\label{eq:Pass2}
&\displaystyle 
\dot P_x(t) + P_x(t) \frac{\partial f}{\partial x}(t,0,0)+ 
 \frac{\partial f^\top}{\partial x}(t,0,0) P_x(t) \; \leq  \;  - 2\alpha P_x.
  \end{eqnarray}
 \end{itemize}
\end{definition}

\begin{definition}[Family $\Q$]
\label{def:familyQ}\itshape
Given a pair of positive numbers $\boundx,\bounde>0$, 
we say that the function  $q:\RR_{\geq0}\times \RR^n\times\RR\mapsto \RR^n$ belongs to the 
family 
$\Q(\boundx,\bounde)$
if it
 is  $C^2$, $T$-periodic in its first argument and satisfies the following 
 set of inequalities
\begin{equation}
\label{LP2}
 \begin{array}{rclrcl}
\displaystyle 
\sup_{(t,x,e)\in \cS_T(\boundx,\bounde)}
\left|
q(t,x,e)
\right|
 &\leq &\boundq (\boundx,\bounde),
 &
 \displaystyle 
\sup_{(t,x,e)\in \cS_T(\boundx,\bounde)}
\left|
\frac{\partial q}{\partial e} (t,x,e)
\right|
&\leq &
\boundq_{e} (\boundx,\bounde)
,
\\
\displaystyle
\sup_{(t,x,e)\in \cS_T(\boundx,\bounde)}
\left|
\frac{\partial q}{\partial x} (t,x,e)
\right|
&\leq & \boundq _{x} (\boundx,\bounde),
&
 \displaystyle 
\sup_{(t,x,e)\in \cS_T(\boundx,\bounde)}
\left|
\frac{\partial q}{\partial t} (t,x,e)
\right|
& \leq &
\boundq _{t} (\boundx,\bounde).
 \end{array}
\end{equation}
\end{definition}

Note that systems of the form 
\eqref{eq:sys} satisfying $f,q\in \F,\Q$
are typically obtained 
when deriving a normal form
in presence of smooth periodic reference to be tracked 
or perturbation to be rejected.
See, for instance, 
\cite{mahmoud1996asymptotic,khalil2000design,astolfi2021repetitive}
and references therein.

\noindent
\textit{Remark}\;
Note that in light of \eqref{eq:Pass1}, \eqref{eq:Pass2}, the set
$\F(\overline P_x,\underline P_x,\alpha, \boundx,\bounde)$ characterises
functions for which the zero dynamics 
of \eqref{eq:sys}, namely 
 $\dot x = f(t,x,0)$, is locally exponentially
stable, with a given 
decreasing rate $-\alpha$. Indeed, 
the function $V(t,x) = x^\top P_x(t) x$ 
can be used as Lyapunov function to establish such 
 stability properties.
Since 
in this work we are not interested in establishing 
semi-global results, the properties 
\eqref{eq:Pass1} and \eqref{eq:Pass2} will be the only assumptions
made on the zero-dynamics \eqref{eq:sys_zero_dyn}.
Such assumptions are indeed milder than those commonly stated  in semi-global 
output regulation results, 
where typically the zero-dynamics
\eqref{eq:sys_zero_dyn} is 
asked to be input-to-state stable (in short, ISS)
or integral ISS (in short, iISS)
with respect to $e$, 
see, e.g., 
\cite{serrani2001semi,khalil2000design,byrnes2004nonlinear,xu2013global}.

\section{Main Results}
\label{sec:main}

\subsection{Internal-Model Based Regulator Design}\label{sec:controller}
The feedback law we propose is made of two sets of
tunable
parameters:
\begin{itemize}
\item[$\bullet$] 
an integer $n_o\in \NN$ and two  positive real numbers 
$\sigma,\mu>0$;
\item[$\bullet$] 
two sequences of positive real numbers $n_{z\ell }$ and $\omega _\ell $ satisfying
\begin{subequations}\label{eq:conditions_series}
\begin{align} \label{eq:sequence_nzk}
\sum_{\ell =0}^\infty n_{z\ell } 
\;=\;
\overline{N}_z^2  & \; <  \; +\infty \ ,
\\\label{27}
n_{z(\ell+1)} & \;< \; n_{z\ell} & \forall \; 0 \leq \ell  \ ,
\\ \label{26}
\ell\,  n_{z\ell} &\;\leq \; m\,  n_{zm}  &\forall (\ell,m):\,  0 < m \leq  \ell \ ,
\\ \label{23}
\ell^2\,  n_{z\ell} &\; \leq\;  m^2\,  n_{zm} & \forall (\ell,m):\,  0 \leq  \ell \leq  m \ ,
\\
\label{eq:basic_omega}
\omega_\ell & \; = \; \ell \widehat \omega & \forall\ell >0 \ ,
\end{align}
\end{subequations}
for some $\widehat \omega>0$.
Actually,
as an illustration or for more specificity, we consider often the particular case
\begin{equation}
\label{eq:nzk}
\begin{array}{rcl}
n_{z0}&=&2, 
\\
n_{z\ell } &= &\dfrac{1}{\ell ^{1+\varepsilon}}, 
\quad \forall \ \ell \in \NN_{>0}
\quad
\varepsilon\in (0,1].
\end{array}
\end{equation}
\end{itemize}
The proposed dynamic controller takes
the form
\begin{subequations}
\label{eq:controller}
\begin{align}
\label{eq:imu_feedback}
\dot z &=  \Phi z +\Gamma  e
\\
u&= -\sigma e + \mu M^\top N_z (z-Me)  \ ,
\label{eq:hg_feedback}
\end{align}
\end{subequations}
where $z = (z_0, \ldots,z_{n_o})\in\RR^{2n_o+1}$,  is the state
of the controller  and 
the matrices $\Phi, N_z\in \RR^{(2n_o+1) \times (2n_o+1)}$
and $\Gamma,M\in \RR^{2n_o+1}$ are defined as 
\begin{equation}
\label{eq:defPhiM}
\renewcommand{\arraystretch}{1.7}
\begin{array}{rclrcl}
\Phi& := &\blkdiag  \big(0, \Phi_1, \ldots, \Phi_{n_o} \big), 
\quad &
\Phi_\ell  &=& 
\renewcommand*{\arraystretch}{.8}
\begin{pmatrix}
0 & \omega_\ell 
\\[.5em]
-\omega_\ell  & 0
\end{pmatrix},
\\[.5em]
N_z & = &
 \blkdiag \big(n_{z0}, N_{z1}, \ldots, N_{zn_o}\big),
\quad
&
N_{z\ell }&=&
 n_{z\ell}
\renewcommand*{\arraystretch}{.8}\;
I_2
\quad \forall \ \ell  = 1, \ldots, n_o, 
\\[.5em]
M & = & (1, M_1^\top, \ldots, M_{n_o}^\top)^\top ,
\quad
&M_\ell  &=&  (1, 0)^\top  \quad \forall \ \ell  = 1, \ldots, n_o,
\\[.5em]
\Gamma & = &-(\Phi + \sigma I) M\ .
\end{array}
\end{equation}
It can be readily seen from the definition 
of \eqref{eq:defPhiM}, that the regulator 
\eqref{eq:controller} is composed of two parts:
an internal-model unit \eqref{eq:imu_feedback}, 
that is, the $z$-dynamics, characterised by an integrator
and a bunch of linear oscillators at frequencies 
$\omega_\ell$, and a linear stabilizing term
\eqref{eq:hg_feedback} having embedded the  high-gain feedback law $-\sigma e$ needed for 
stability purposes 
(see, e.g., 
\cite{teel1995tools,marconi2007output}). 
As shown in the rest of the paper,  the
feedback law  
\eqref{eq:controller}
guarantees that \eqref{eq:objective} holds with 
any (arbitrarily small) 
$\bounde_p$ 
chosen a priori and independent of 
$n_o$, and moreover
\begin{equation}
\label{eq:objective_inf}
\lim_{n_o\to + \infty }\limsup_{t\to +\infty } |e(t)|\;=\; 0,
\end{equation}
when the basic frequency characterizing the oscillators 
$\Phi_\ell$ of the internal-model unit
 is selected as $\widehat\omega = \tfrac{2\pi}T$, that is
$\omega_\ell = \ell\tfrac{2\pi}T$ for any $\ell= 1, \ldots, n_o$.

\subsection{Approximate Regulation}\label{sec:finite}

First, we study the interconnection of the finite-dimensional
controller \eqref{eq:controller}
 in closed-loop with the system \eqref{eq:sys}
 and we establish a certain number of properties concerning
 the existence and stability of a steady-state trajectory
 and the norm of the corresponding regulated output.
The proof of the following theorem is postponed to 
Section~\ref{sec:proof1}. 
 
\begin{theorem}\label{thm1}
Given a triplet 
$(\overline P_x,\underline P_x,\alpha)$
and any fixed $\mu\geq1$, there exist 
positive real numbers
 $\boundx, \bounde,\boundz,\sigma ^\star$,
independent of $n_o$, such that, 
 the closed-loop system 
\eqref{eq:sys}, \eqref{eq:controller}, 
with
any
$f\in \F(2\boundx, 2\bounde,\overline P_x,\underline P_x,\alpha)$
 and
any
$q\in\Q(2\boundx, 2\bounde)$,
satisfies the following statements:
\begin{enumerate}[label=\arabic*)]
\item for any $\sigma>\sigma^\star$ and any $n_o>0$,
the closed-loop system 
\eqref{eq:sys}, \eqref{eq:controller}
admits a $T$-periodic solution
$ (x_p, e_p, z_p)\in \C^2_T(\RR^n\times \RR \times \RR^{2n_o+1})
 $
 satisfying
\begin{equation}\label{eq:theorem_periodic_bound}
\sup_{t\in[0,T]}|x_p(t)|
\leq \boundx \, , 
\qquad
\sup_{t\in[0,T]}|e_p(t)| \leq \bounde \, , 
\qquad
\sup_{t\in[0,T]}\sqrt{z_p(t)^\top N_z z_p(t)}
\leq \boundz \, .
\end{equation} 
Moreover $(x_p, e_p, z_p)$ is locally 
exponentially stable with a 
domain of attraction that  includes the set  
\begin{equation}
\N(\boundx, \bounde,\boundz)=
\Big\{(x,e,z): \, |x| \leq  2\boundx,
\, |e| \leq  2\bounde,\,
\sqrt{z^\top N_z z} \leq 2\boundz
\Big\}
\label{eq:theorem_periodic_domain}
\end{equation}
which is independent of $n_o$.

\item There exists  $\psi_1, \psi_x>0$,
independent of $n_o,\sigma, \mu$, such that, 
for any $\sigma>\sigma^\star$ and any $n_o>0$,
the corresponding $T$-periodic solution  $(x_p, e_p, z_p)$
of the closed-loop system 
\eqref{eq:sys}, \eqref{eq:controller}
established in item 1)
satisfies
\begin{equation}
\label{eq:theorem_2_bound_e1}
\sup_{t\in[0,T]}|e_p(t)| 
\leq \dfrac{\psi_1}{\sigma}, 
\qquad \sup_{t\in[0,T]}|x_p(t)| 
\leq \dfrac{\psi_x}{\sigma}.
\end{equation}

\item Let $\sigma>\sigma^\star$
and $n_o$ be fixed.
If, for some $\ell$ in $\{1, \ldots, n_o\}$, there exists an integer $\mathfrak{K}_\ell > 0$
such that $\omega _\ell$
in  \eqref{eq:controller},
satisfies 
 $$
\omega_\ell = \mathfrak{K}_\ell \tfrac{2\pi}{T},
$$
then
the corresponding $T$-periodic solution  $ (x_p, e_p, z_p)$
of the closed-loop system 
\eqref{eq:sys}, \eqref{eq:controller}
established in item 1) satisfies
\begin{equation}
\label{eq:FourierCoeff}
\int_{0}^T \sin(\mathfrak{K}_\ell\tfrac{2\pi}T t)e_p(t) = 
\int_{0}^T \cos(\mathfrak{K}_\ell\tfrac{2\pi}T t)e_p(t) = 0.
\end{equation}

\item
Suppose that we have $\widehat \omega = \tfrac{2\pi}T$ in \eqref{eq:basic_omega}.
Then, there exists  $\psi_2>0$,
independent of $n_o,\sigma,\mu$, such that, 
for any $\sigma>\sigma^\star$ and $n_o>0$,
the corresponding $T$-periodic solution  $(x_p, e_p, z_p)$
of the closed-loop system 
\eqref{eq:sys}, \eqref{eq:controller}
established in item 1)
satisfies
\begin{equation}
\label{eq:theorem_2_bound_e2}
\int_0^T|e_p(t)|^2 dt 
\leq \dfrac{\psi_2}{(n_o+1)^2},
\end{equation}
$$
\int_0^T \left(\begin{array}{@{\,}c@{\,}}
\cos(\omega_\ell t) \\ \sin(\omega_\ell t)
\end{array}\right)e_p(t) dt\;=\; 0
\qquad \forall \ell \in \{1, \ldots, n_o\}.
$$
\end{enumerate}
\end{theorem}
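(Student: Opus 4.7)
The plan is to begin with the change of variables $\eta := z - M e$, which turns the closed-loop system \eqref{eq:sys}--\eqref{eq:controller} into
\begin{equation*}
\dot x = f(t,x,e),\quad \dot e = q(t,x,e) - \sigma e + \mu M^\top N_z \eta,\quad \dot \eta = (\Phi - \mu M M^\top N_z)\eta - M\, q(t,x,e).
\end{equation*}
Two structural facts will make every bound uniform in $n_o$: first, each block $\Phi_\ell$ is skew while $N_{z\ell}=n_{z\ell} I_2$, so $\Phi^\top N_z + N_z \Phi = 0$ and the weighted norm $\|\eta\|_{N_z}^2 := \eta^\top N_z \eta$ is conserved by $\Phi$; second, $M^\top N_z M = n_{z0}+\sum_{\ell=1}^{n_o} n_{z\ell}\leq \overline{N}_z^2$ thanks to \eqref{eq:sequence_nzk}, whence $|M^\top N_z \eta|\leq \overline{N}_z \|\eta\|_{N_z}$ holds with a constant independent of $n_o$. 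All $\eta$-estimates will be stated in this weighted norm.

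For item~1 I would use the composite Lyapunov function
\begin{equation*}
V(t,x,e,\eta) = \kappa_x\, x^\top P_x(t) x + \tfrac12 e^2 + \kappa_\eta \|\eta\|_{N_z}^2 - \kappa_c\, e\,(M^\top N_z \eta),
\end{equation*}
where the forwarding-type cross term lifts the purely dissipative bound $-\mu(M^\top N_z \eta)^2$ produced by $\dot\eta$ into damping of the full weighted norm of $\eta$. Fixing $\kappa_\eta$, then $\kappa_c$ small, and finally $\sigma\geq\sigma^\star$ large enough, the minimum-phase inequality \eqref{eq:Pass2}, the Lipschitz bounds of Definitions~\ref{def:familyF}--\ref{def:familyQ} and the two structural facts above yield $\dot V\leq -\gamma V$ on a sublevel set containing $\N(\boundx,\bounde,\boundz)$, with $\gamma,\sigma^\star,\boundx,\bounde,\boundz$ depending only on $(\overline P_x,\underline P_x,\alpha)$, on the bounding functions $\boundf_\cdot,\boundq_\cdot$ and on $\overline{N}_z$---not on $n_o$. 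Forward invariance of $\N$ on $[0,T]$ gives a continuous Poincaré map $\N\to\N$; Brouwer's theorem then produces a $T$-periodic solution $(x_p,e_p,z_p)$, and the same Lyapunov estimate applied to the deviation from this orbit provides local exponential stability with $\N$ inside the basin. For item~2, multiplying $\dot e_p = q-\sigma e_p + \mu M^\top N_z\eta_p$ by $e_p$ and integrating over a period (the left-hand side integrates to zero by periodicity) yields $\sigma\int_0^T e_p^2\leq C$ with $C$ independent of $\sigma,\mu,n_o$; a bootstrap through the $e_p$-equation upgrades this to $\sup_t|e_p(t)|\leq \psi_1/\sigma$, and $\sup_t|x_p(t)|\leq \psi_x/\sigma$ follows from \eqref{eq:Pass2} seeing $x_p$ as the periodic response of $\dot x=f(t,x,e_p)$ to the small input $e_p$.

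For item~3 I would exploit the internal-model structure in the frequency domain. On the periodic orbit the $\ell$-th block satisfies $\dot z_{p,\ell} = \Phi_\ell z_{p,\ell} + \Gamma_\ell e_p$; expanding $e_p$ in Fourier series $e_p(t)=\sum_k \hat e_k\, e^{\im k(2\pi/T)t}$ and projecting on the harmonic $k=\mathfrak K_\ell$ gives $(\im\omega_\ell I_2 - \Phi_\ell)\hat z_{p,\ell,\mathfrak K_\ell} = \Gamma_\ell \hat e_{\mathfrak K_\ell}$. A left eigenvector of $\Phi_\ell$ at eigenvalue $\im\omega_\ell$, namely $v=(1,-\im)$, annihilates the left-hand side, while a direct computation gives $v^\top \Gamma_\ell = -\sigma - \im\omega_\ell\neq 0$, so $\hat e_{\mathfrak K_\ell}=0$; reading real and imaginary parts reproduces \eqref{eq:FourierCoeff}. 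For item~4 this blocking argument, applied to every $\ell=1,\ldots,n_o$ together with the integrator block $\Phi_0=0$, $\Gamma_0=-\sigma$ (which forces $\int_0^T e_p = 0$ by periodicity of $z_{p,0}$), gives $\hat e_k=0$ for all $|k|\leq n_o$ when $\widehat\omega=2\pi/T$; Parseval's identity then yields
\begin{equation*}
\int_0^T |e_p(t)|^2\, dt = T\sum_{|k|>n_o}|\hat e_k|^2 \;\leq\; \frac{T^2}{(2\pi(n_o+1))^2}\int_0^T|\dot e_p(t)|^2\, dt,
\end{equation*}
and item~1 together with $|M^\top N_z\eta_p|\leq \overline{N}_z\|\eta_p\|_{N_z}$ bounds $\|\dot e_p\|_{L^2(0,T)}$ by a constant independent of $n_o$, from which \eqref{eq:theorem_2_bound_e2} follows.

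The main obstacle will be the uniform-in-$n_o$ Lyapunov estimate underlying item~1. The $\eta$-subsystem lives in $\RR^{2n_o+1}$ and the only damping entering it flows through the scalar channel $M^\top N_z\eta$; obtaining exponential decay of the full weighted $\eta$-norm is a detectability / persistency-of-excitation issue in disguise, and the monotonicity conditions \eqref{27}--\eqref{23} on the weights $n_{z\ell}$ will be needed precisely to tune the cross term $-\kappa_c e\,(M^\top N_z\eta)$ so that $\gamma$ and the size of $\N$ do not degrade with $n_o$. Working in the weighted norm $\|\cdot\|_{N_z}$ is essential: any unweighted Euclidean bound on $\eta$ would carry constants depending on $n_o$, and all four statements would collapse.
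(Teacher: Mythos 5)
Your overall architecture (the shift $\eta=z-Me$, the use of the weighted norm $\|\eta\|_{N_z}$, the observation that $\Phi^\top N_z+N_z\Phi=0$ and $M^\top N_zM\leq\overline N_z^2$, and the frequency-domain blocking arguments for items 3--4) matches the paper, and your items 3--4 are essentially correct: your Parseval route via $\|\dot e_p\|_{L^2}$ is a legitimate variant of the paper's explicit closed-loop transfer-function bound, and the left-annihilator computation for item 3 is exactly what the paper does.

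The gap is in item 1: the claim that a quadratic Lyapunov function of the form $\kappa_x x^\top P_x x+\tfrac12 e^2+\kappa_\eta\|\eta\|_{N_z}^2-\kappa_c\,e\,(M^\top N_z\eta)$ can satisfy $\dot V\leq-\gamma V$ with $\gamma$ \emph{independent of} $n_o$ is false, and no choice of cross term can repair it. The reason is spectral: $\operatorname{trace}(\Phi-\mu MM^\top N_z)=-\mu\bigl(n_{z0}+\sum_{\ell=1}^{n_o}n_{z\ell}\bigr)\geq-\mu\overline N_z^2$ is bounded while the dimension is $2n_o+1$, so at least one eigenvalue has real part no smaller than $-\mu\overline N_z^2/(2n_o+1)\to0$. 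A uniform $\dot V\leq-\gamma V$ restricted to the $\eta$-subsystem would force all eigenvalues to have real part $\leq-\gamma/2$, contradicting the trace bound for large $n_o$. The paper flags exactly this as the "annoying feature" and consequently splits the argument: existence of the periodic orbit is obtained by a contraction-mapping fixed point whose uniformity rests only on the $n_o$-independent $L^2$-gain $\mu^2\int|M^\top N_z\zeta_p|^2\leq\int|v|^2$ and on the transfer-function bound $\bar\zeta^\top N_z\zeta\leq(\kappa_0+\kappa_1\omega^2)|v|^2$ of Lemmas~\ref{lemma:hurwitzPhiM}--\ref{lemma:periodic_zeta} (the latter also requires $\dot v\in L^2$ precisely because a sup-norm ISS bound on $\|\eta\|_{N_z}$ is unavailable without uniform decay); stability is then proved with $U=V_x+V_e+hV_\zeta$, whose derivative is only $\leq-\nu(|\tilde x|^2+|\tilde e|^2+|M^\top N_z\tilde\zeta|^2)$ uniformly in $n_o$ -- negative \emph{semi}definite in $\tilde\zeta$ -- and full negative definiteness is recovered by adding $c\,\tilde\zeta^\top(N_z+\kappa P_\zeta)\tilde\zeta$ with $\kappa,P_\zeta$ depending on $n_o$, which degrades the convergence rate but not the domain of attraction. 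Your Poincar\'e/Brouwer route could still deliver existence from mere forward invariance of $\N$ (which needs only $\dot U\leq0$), but as written your proposal uses the nonexistent uniform decay both to bound $\|\eta_p\|_{N_z}$ and to get the basin estimate, so the central uniformity claim of the theorem is not actually established.
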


Theorem~\ref{thm1} establishes several properties
about the solutions of the closed-loop system 
\eqref{eq:sys}, \eqref{eq:controller}.
Item 1) states that for $\sigma$ large enough, 
the closed-loop system admits an exponentially stable
$T$-periodic steady-state trajectory.
Existence, stability and domain of 
attraction of such a steady-state is 
robust to model uncertainties and
independent of the  parameters
of the internal-model unit in   \eqref{eq:controller}, 
that it is independent of 
$n_o$, and the sequences $n_{z\ell}, \omega_\ell$
characterising the frequencies of the oscillators
of the  $z$-dynamics, provided the 
conditions  \eqref{eq:conditions_series} hold.
Furthermore, it is shown that  the initial 
condition $z(0)=0$ for the internal-model unit
\eqref{eq:controller} 
 is always a ``good initial solution''
as it is always contained in the domain of attraction
\eqref{eq:theorem_periodic_domain}. 

Item 2 establishes that the controller \eqref{eq:controller}
preserves the high-gain property of a feedback without 
internal-model: 
the infinity norm of the steady-state trajectory 
of the regulated output $e$ can be arbitrarily made small
by augmenting the parameter $\sigma$, see 
\eqref{eq:theorem_2_bound_e1}. In other words, 
we don't loose the properties of a simple high-gain feedback 
$u=-\sigma e$. Moreover, such a high-gain property is robust to 
model uncertainties and independent of the  parameters 
$n_{z\ell}$, $\omega_\ell$ in \eqref{eq:conditions_series}.

Item 3 characterises the behaviour of the steady-state 
of the regulated output $e$ when the frequency 
of one oscillator is a multiple of the basic frequency 
$\tfrac{2\pi}{T}$ characterising the periodicity 
of the frequencies $f,q$. In this case, the
Fourier coefficient of $e$ corresponding to that frequency
 is zero. Such a property evidently suggests a 
 strategy to select the parameters $\omega_\ell$ in 
 \eqref{eq:controller} when the  periodicity 
 $T$ of the functions $f,q$ is known. 
 This is  well established in item 4. 
 
 In particular, the inequality 
 \eqref{eq:theorem_2_bound_e2} shows that the 
 $L_2$ norm of the steady-state regulated output $e$
 can be made arbitrarily small by augmenting the 
 number of oscillators, if those are chosen so that
 their frequency is multiple of the basic frequency 
$\tfrac{2\pi}{T}$. This is a consequence of the fact that
each corresponding Fourier coefficient is zero, 
as established by \eqref{eq:FourierCoeff}.
Note that although  a similar result 
were already proved in 
\cite{astolfi2015approximate}, 
\cite{ghosh2000nonlinear}, 
the novelty of item 4) is that here we are able
to show that the inequality  \eqref{eq:theorem_2_bound_e2}
is uniform in the parameters of the controller \eqref{eq:controller}. 
This implies that, from a practical point of view, 
one can first fix the parameters $\mu,\sigma$, 
and then arbitrarily increase $n_o$ so that 
to reduce the $L_2$ norm of the regulated output.
Furthermore, in doing so,  the 
 domain of attraction
is  guaranteed to always exist and contain a 
prescribed set of initial conditions independent of 
the parameters of the internal-model.
 In other words, when the period $T$ is known, 
 the approximate  regulation objective
 \eqref{eq:objective} can be satisfied 
 by augmenting the number of oscillators and not 
 the high-gain parameter.

In the next section, we will show that  the bound 
\eqref{eq:objective_inf} hold when considering
the  non-implementable infinite-dimensional case, 
corresponding to the limit case
of the regulator \eqref{eq:controller}
 in which $n_o=+\infty$.

\subsection{Exact Regulation}\label{sec:infinite}

In this section we want to study the limit case
(non-implementable)
 in which the period $T$ is perfectly known and the
  number of oscillators in the regulator 
 \eqref{eq:controller} is chosen as infinite 
$n_o=\infty$,  with
$\widehat\omega= \tfrac{2\pi}T$, that is
 $\omega_\ell =\tfrac{2\pi}T$ for all 
 $\ell \in\{0,1, \ldots, \infty\}$. 
  In particular we aim at establishing that, 
in such a case, 
exact regulation is achievable and 
\eqref{eq:objective_inf} is satisfied. 
 To this end, 
let us define the linear operators $\Phi, N_z, M, \Gamma$ 
as 
\begin{equation}
\label{eq:defPhiM-inf}
\begin{array}{rclrclrcl}
\Phi& := &\big(\Phi_\ell  \big)_{\ell \in \NN_0}, 
\quad &
\Phi_0 & := &0, \quad &
\Phi_\ell  & := & \ell \dfrac{2\pi}T
\begin{pmatrix}
0 & 1 \\ -1& 0
\end{pmatrix},
\\[1em]
N_z & := &  \big(N_{z\ell }\big)_{\ell \in \NN_0},
\quad
& N_{z0} & := &n_{z0},  &
\quad 
N_{z\ell } & := &  n_{z\ell }
\begin{pmatrix}
1 & 0 \\ 0 & 1
\end{pmatrix},
\\[1em]
M & := & \big(M_{\ell }\big)_{\ell \in \NN_0},
\quad
& M_0 & := & 1, 
& M_\ell  &:=& (1, 0)^\top ,
\\[1em]
\Gamma &: = &-(\Phi + \sigma I) M
\end{array}
\end{equation} 
 with $(n_{z\ell })_{\ell \in \NN_{\geq0}}$ being the sequence
 defined in \eqref{eq:sequence_nzk}.
We denote with 
 $\cZ$ the space of sequences
\begin{equation}
\label{eq:def_space_Z}
\cZ := \{ z=(z_\ell )_{\ell \in \NN_0}, \; z_0\in \RR, \;
\; z_\ell  \in \RR^2, \; \ell \in \NN
\}
\end{equation}
and we define the space 
 $\cL^2_{N_z}$ as
\begin{equation}\label{eq:def_space_L2Nz}
\begin{array}{rcl}
\cL^2_{N_z} & :=&\displaystyle
\left\{ z\in \cZ: \|z\|_{N_z}^2 :=
 {z^\top N_z z} =  \sum_{\ell =0}^\infty  n_{z\ell }|z_\ell |^2 < \infty
\right\}
\end{array}
\end{equation}
with $N_z$ being defined in \eqref{eq:defPhiM-inf}. 
This space, being linearly isometric with the standard $\cL^2$ space, is complete.
 In this section, we
 address 
 the specific case in which the regulator is selected as 
 \begin{equation}
\label{eq:controller-inf}
\begin{array}{rcl}
\dot z &=&  \Phi z +\Gamma  e
\\[.5em]
u&=& -\sigma e + \mu M^\top N_z (z-Me)  \ ,
\end{array}
\end{equation}
where $z \in \cZ$,  is the state 
with initial condition $z(0) \in \cL^2_{N_z}$, and 
the linear operators $\Phi, M, N_z$ are now defined as in
\eqref{eq:defPhiM-inf}.
As now the state $z$ is a vector of infinite,
but countable
dimension, 
we will consider only solutions 
in the space $\cL^2_{N_z}$.
We have the following result showing that exact regulation can be achieved.
The proof of the following theorem is postponed to 
Section~\ref{sec:proof1}.

\begin{theorem}\label{thm2}
Let the triplet 
$(\overline P_x,\underline P_x,\alpha)$ be given 
and  fix any $\mu\geq1$.
Consider the real numbers $\boundx, \bounde, \boundz$ and $\sigma^\star$
given by Theorem~\ref{thm1}.
There exists $\sigma^\star_\infty < \sigma^\star$
such that, 
for any $\sigma  > \sigma^\star_\infty$,
 any trajectory
$(x(t),e(t),z(t))$
of the closed-loop system 
\eqref{eq:sys}, \eqref{eq:controller}, 
with $f\in \F(2\boundx, 2\bounde,\overline P_x,\underline P_x,\alpha)$ 
and $q\in\Q(2\boundx, 2\bounde)$,
starting in the set
$$
\{(x,e,z)\in \RR^n\times \RR\times \cL_{N_z}^2: |x|\leq 2\boundx, \; |e|\leq 2\bounde, \;
\|z\|_{N_z} \leq 2\boundz\},
$$
is defined and complete forward in time, 
bounded  in $\RR^n\times\RR\times \cL^2_{N_z}$,  
and satisfies
$\lim_{t\to\infty}e(t) = 0$ and $\lim_{t\to\infty}x(t) = 0$.
\end{theorem}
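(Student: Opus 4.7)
The plan is to reduce the infinite-dimensional problem to the finite-dimensional one of Theorem~\ref{thm1} by a truncation/passage-to-the-limit argument, and then to extract asymptotic regulation from the blocking-zero property applied to every harmonic. I would organise the proof in four stages.

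\textbf{Stage 1: well-posedness on $\cL^2_{N_z}$.} First I would observe that $\Phi$ in \eqref{eq:defPhiM-inf} is block-diagonal with $2\times2$ skew-symmetric blocks and is skew-adjoint on $\cL^2_{N_z}$ (the weights $n_{z\ell}$ commute with each block), hence generates a unitary $C^0$-semigroup. The feedback terms $M^\top N_z z = n_{z0}z_0 + \sum_{\ell\ge1} n_{z\ell}z_{\ell,1}$ and $\Gamma e = -(\Phi+\sigma I)Me$ define bounded operators from/into $\cL^2_{N_z}$ because $\sum n_{z\ell}=\overline N_z^2<\infty$ and the sequence $(\ell\,n_{z\ell})_\ell$ is bounded by \eqref{26}. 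Combined with the local Lipschitzness of $f,q$ coming from $\F,\Q$, this yields local existence, uniqueness, and continuous dependence of mild solutions.

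\textbf{Stage 2: uniform boundedness via truncation.} Denote by $\pi_n$ the orthogonal projection of $\cL^2_{N_z}$ onto the first $n+1$ blocks and consider, for each $n\le n_o$, the truncated closed-loop system governed by the finite-dimensional controller \eqref{eq:controller}. Theorem~\ref{thm1} already gives the existence of constants $\boundx,\bounde,\boundz,\sigma^\star$ \emph{independent of $n_o$} such that trajectories starting in $\N(\boundx,\bounde,\boundz)$ remain in a fixed compact, and by item 2) the $T$-periodic solution is $O(1/\sigma)$ in $(x,e)$. I would re-run the same Lyapunov inequality used in the proof of Theorem~\ref{thm1}, which is based on a candidate of the form
\begin{equation*}
W(x,e,z)\;=\;V_x(t,x)+\tfrac12 e^2+\tfrac{\mu}{2}(z-Me)^\top N_z(z-Me),
\end{equation*}
and verify that each term is well-defined and finite on $\cL^2_{N_z}$ thanks to $\sum n_{z\ell}<\infty$. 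The key point is that \emph{all coefficients appearing in the Lyapunov estimate are already controlled by $\overline N_z$, not by $n_o$}, so the same $\sigma^\star$ works, and in fact the absence of the worst (highest) mode allows a slightly smaller threshold $\sigma^\star_\infty<\sigma^\star$. This proves forward completeness and boundedness of trajectories in $\cL^2_{N_z}$ for $\sigma>\sigma^\star_\infty$.

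\textbf{Stage 3: identification of the $\omega$-limit set.} Here is the main obstacle: in infinite dimension, exponential stability fails (the spectrum of the linearisation accumulates on the imaginary axis), so convergence cannot be obtained by a contraction argument. I would instead combine (i) the uniform bound of Stage 2, which shows that the trajectory lies in a bounded subset of $\cL^2_{N_z}$, (ii) a pre-compactness argument: the $z$-component satisfies $\dot z=\Phi z+\Gamma e$ with $\Phi$ generating a unitary group and $\Gamma e$ uniformly bounded, so $z(t)$ is equicontinuous and, since the tail $\sum_{\ell\ge L} n_{z\ell}|z_\ell(t)|^2$ is controlled uniformly in $t$ by the Lyapunov estimate, Arzelà--Ascoli together with an $\varepsilon$-tail truncation gives pre-compactness in $\cL^2_{N_z}$. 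Then a LaSalle-type argument applied to $W$ (whose derivative along trajectories is $\le -c(e^2+|x|^2)$ up to a cross-term that vanishes on the steady state) forces any $\omega$-limit trajectory to satisfy $e\equiv 0$, hence $\dot x=f(t,x,0)$, and by the minimum-phase assumption $x\equiv 0$.

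\textbf{Stage 4: convergence to zero.} Finally, on any $\omega$-limit trajectory $e\equiv0$ and $x\equiv0$, and the residual $z$-dynamics reduces to $\dot z=\Phi z$ with the algebraic constraint $\mu M^\top N_z z=-q(t,0,0)$. Since $\Phi$ generates $T$-periodic orbits and item 3) of Theorem~\ref{thm1}, applied through the truncated approximation to every harmonic $\ell\in\NN_{>0}$, kills every Fourier coefficient of the candidate $e_p$, the unique admissible $\omega$-limit trajectory has $e=0$ identically. This gives $\lim_{t\to\infty}e(t)=0$; combined with the exponential ISS of $\dot x=f(t,x,e)$ from \eqref{eq:Pass2}, it yields $\lim_{t\to\infty}x(t)=0$. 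The delicate point throughout is to make the pre-compactness in $\cL^2_{N_z}$ and the LaSalle step compatible with the \emph{bounded} (non-compact) input operator $\Gamma$, which is why the summability conditions \eqref{eq:conditions_series} on $n_{z\ell}$ are used repeatedly.
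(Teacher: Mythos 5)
There is a genuine gap, and it is the central idea of the paper's proof. Your Lyapunov candidate $W$ is centred at $\zeta:=z-Me=0$, but this is not the steady state of the closed loop: in the $\zeta$-coordinates the internal-model dynamics reads $\dot\zeta=(\Phi-\mu MM^\top N_z)\zeta-Mq(t,x,e)$, and since $\sup_t|q(t,0,0)|>0$ by standing assumption, the derivative of $W$ contains the persistent forcing term $-\mu\,\zeta^\top N_z M\,q(t,0,0)$ which does not vanish at $(x,e,\zeta)=(0,0,0)$. So the inequality $\dot W\le -c(e^2+|x|^2)$ "up to a cross-term that vanishes on the steady state" is precisely the step that fails; the cross-term vanishes only after you re-centre the coordinates at the true steady state of the internal model. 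That re-centering is the paper's Lemma~\ref{lemma:friend}: one must first construct $z_p\in\cL^2_{N_z}$ such that the free motion $\dot z=\Phi z$, $z(0)=z_p$ satisfies $\mu M^\top N_z z(t)=-q(t,0,0)$ for all $t$, and then use the shift $\zeta:=z-z_p(t)-Me$, after which the forcing becomes $\Delta(t,x,e)=q(t,x,e)-q(t,0,0)$, which \emph{does} vanish at the origin, and only then does the Lyapunov argument (closed via Barbalat, see below) go through. The existence of such a $z_p$ in $\cL^2_{N_z}$ is itself nontrivial: the required coefficients are $c_\ell/(\mu n_{z\ell})$, whose weighted norm is $\mu^{-2}\sum_\ell |c_\ell|^2/n_{z\ell}$, and its finiteness uses condition \eqref{23} together with the $C^1$ regularity of $q(\cdot,0,0)$. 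You invoke the constraint $\mu M^\top N_z z=-q(t,0,0)$ only in Stage 4 as a property of the $\omega$-limit set, without establishing that it is solvable in $\cL^2_{N_z}$, and without using it where it is actually needed, namely to define the Lyapunov function.

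Two secondary problems. First, in Stage 1 the operator $\Gamma=-(\Phi+\sigma I)M$ is \emph{not} bounded into $\cL^2_{N_z}$: $\|\Phi M\|_{N_z}^2=\sum_\ell n_{z\ell}\,\omega_\ell^2=\widehat\omega^2\sum_\ell \ell^2 n_{z\ell}$, and by \eqref{23} the terms $\ell^2 n_{z\ell}$ are nondecreasing (e.g.\ $\ell^{1-\varepsilon}$ for the choice \eqref{eq:nzk}), so the series diverges; well-posedness must be argued in the $\zeta$-coordinates, where the input operator is just $M$ (which is in $\cL^2_{N_z}$ since $\sum_\ell n_{z\ell}<\infty$) and $\Phi$ appears only as the generator. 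Second, your LaSalle step requires pre-compactness of the orbit in $\cL^2_{N_z}$, and the claim that the tails $\sum_{\ell\ge L}n_{z\ell}|z_\ell(t)|^2$ are controlled uniformly in $t$ does not follow from the energy bound alone (a uniform bound on the full sum gives no equi-smallness of tails). The paper avoids this entirely: from $\dot U\le-\varepsilon(|x|^2+e^2+|M^\top N_z\zeta|^2)$ it integrates, notes that $t\mapsto(x(t),e(t))$ is uniformly continuous because $(\dot x,\dot e)$ is bounded, and applies Barbalat's lemma to conclude $x(t)\to0$, $e(t)\to0$ — with no claim whatsoever about convergence of $\zeta(t)$, which is exactly what the theorem's statement asks for. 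I would also note that your truncation/passage-to-the-limit programme of Stage 2 is announced but never executed; the comparison with Theorem~\ref{thm1} that the paper actually performs is only at the level of the scalar bounds $(\boundx_a,\bounde_a,h,\sigma^\star)$, to show the infinite-dimensional thresholds are less conservative, not a limit of finite-dimensional trajectories.
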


Theorem~\ref{thm2} establishes that with an
infinite-dimensional regulator, exact
regulation can be achieved if the  period 
$T$ characterising the functions $f,q$ 
is perfectly known, and if the regulator embeds
an infinite-number of oscillators at $\tfrac{2\pi}T$
and all its multiples. In practice, 
such a result confirms  
the property  \eqref{eq:objective_inf} of the 
finite dimensional  regulator 
\eqref{eq:controller}.
It is noticed that in statement we showed that
the domain of attraction 
in terms of  $(x,e)$-coordinates
is not reduced with respect to  those given 
by Theorem~\ref{thm1}. Furthermore, 
for the same values of $\boundx,\bounde$, the resulting
high-gain parameter $\sigma$
 can be chosen smaller with an exact 
 infinite-dimensional regulator. 
 The main motivation to this fact
is that, 
in the approximate case, the steady-state solution $x_p$
does not coincide with the origin,
thus reducing the
stability margins ensured by \eqref{eq:Pass2}. 
In turn, $\sigma$ has to be chosen larger 
to compensate
such a loss.

\subsection{Literature Review}
\label{sec_LR}
The problem of periodic output regulation 
has been studied in the past decades by many authors with 
many different tools and ideas. Although various approaches as ours allow to cope with more general dynamics, for the sake of precision, we restrict our discussion to systems which admit a normal form like the one at the beginning of 
Section \ref{sec:prob} or more generally like (\ref{eq:sys_rel_deg_r}). We revisit the 
following main approaches based on the use of ``smooth 
regulators''.

\begin{itemize}
\item \textit{Nonlinear output regulation.}
Starting from
the notable results on the so-called 
non-linear regulator equations 
\cite{byrnes2003limit} in a non-equilibrium context,
the development of output regulation theory has been 
mainly pursued in the context of minimum-phase systems 
of the form \eqref{eq:sys}. 
The design of internal-models has been focused mainly 
in the sense of input-cancellation/observation, that is,
with the purposes of reproducing the asymptotic behaviour of the 
zeroing steady-state input $-q(t,0,0)$
for system \eqref{eq:sys_e},
see, e.g., 
\cite{byrnes2004nonlinear,marconi2007output}.
Although these approaches can ensure 
asymptotic regulation with finite-dimensional 
regulators, it is not clear whether they 
can be extended in a non-matching case, i.e. without the use of 
a normal form \eqref{eq:sys}. Furthermore, 
as discussed in 
see \cite{bin2018robustness,bin2022robustness}, 
asymptotic regulation is lost as 
soon as unstructured 
model uncertainties are considered.
Approximate asymptotic solutions have been  also 
proposed in \cite{marconi2008uniform,bin2019adaptive,
gao2017learning}, but again, the extension to 
more general classes of systems (as in \cite{astolfi2015approximate})
is not clear.

\item \textit{Repetitive control.} 
Based on the fact that a delay
can be used  as a universal generator of periodic signal, 
the repetitive control approach was first proposed at the end of the 80's, 
\cite{hara1988repetitive} for linear systems in order to solve
periodic output regulation problems, with remarkable 
results in the context of 
discrete-time systems
\cite{longman2000iterative} and  practical applications 
\cite{kurniawan2014survey}.
Nonlinear extensions were proposed 
for instance in \cite{ghosh2000nonlinear,
mattavelli2004repetitive,astolfi2021repetitive}.
Similarly to our approach, finite-dimensional approximations
based on Fourier approximation were used 
in \cite{ghosh2000nonlinear}, 
based on the fact that a delay can be equivalently represented 
by an infinite-number of oscillators\footnote{This can be shown, for instance, by using Riesz bases, 
 see Example 2.6.12 in \cite{tucsnak2009observation}.}
 In this spirit, our 
work contributes also to the repetitive control theory 
clarifying the uniformity aspects in terms of
basin of attraction of periodic solutions with respect 
to the approximation of the internal-model unit.

\item \textit{Adaptive learning control.}
An approach similar to repetitive control is also the one 
denoted as adaptive learning control, 
see, e.g., 
\cite{del2003adaptive,marino2009iterative,verrelli2016larger}, 
developed in the context of minimum-phase systems \eqref{eq:sys}, 
with the objective
of estimating the Fourier coefficients of the 
zeroing steady-state input $-q(t,0,0)$
\cite{del2003adaptive}, or canceling it 
by means of delays \cite{marino2009iterative,verrelli2016larger}.
Extensions to systems not possessing a  normal forms 
and practical implementations issues related to 
the delay (and therefore
the asymptotic properties of 
a discretised regulator) 
have not been discussed.

\item \textit{Input disturbance observers}. Finally, for systems in normal form, 
input disturbance observers can be used in output regulation 
of minimum-phase systems \eqref{eq:sys}, see, e.g., 
\cite{han2009pid}. Again, the extension of such an approach to 
more general classes of systems (as in \cite{astolfi2015approximate}) is not clear.
\end{itemize}

\subsection{Higher-Relative Degree Case via Partial-State Feedback}
\label{sec:higher_degree}

Consider now  a system of relative degree higher than one
and described by 
\begin{equation}
\label{eq:sys_rel_deg_r}
\begin{array}{rcl}
\dot \chi & =& f_0(t,\chi,\xi_1)
\\
\dot \xi_i & = & \xi_{i+1} \qquad i = 1,\ldots, r-1,
\\
\dot \xi_{r} & =& q_0(t,\chi,\xi) + u ,
\end{array}
\end{equation}
with $\chi\in \RR^n$, $\xi = (\xi_1, \ldots, \xi_r)^\top\in \RR^n$, 
and suppose that our output regulation objective is now given by 
\begin{equation}
\label{eq:objective_1}
\limsup_{t\to \infty } |\xi_1(t)|\; \leq \; \boundxi_{1p}.
\end{equation}
Following for instance \cite{serrani2001semi}, 
we consider the change of coordinates 
$$
\xi_r \mapsto e := \xi_r + \sum_{i=1}^{r-1}a_i \xi_i
$$
where $a_i$ are chosen so that 
$\lambda^{r-1} + a_1 \lambda^{r-2} + \ldots + a_{r_2}\lambda + a_{r-1}$
is a Hurwitz polynomial.
In the new coordinates, system
\eqref{eq:sys_rel_deg_r} reads as
\begin{equation}
\label{eq:sys_rel_deg_r2}
\begin{array}{rcl}
\dot x & = & f(t,x,e)
\\
\dot e & =& q(t,x,e) + u 
\end{array}
\end{equation}
with $x  := (\chi^\top,y^\top)^\top$, 
$y  :=  (\xi_1, \ldots, \xi_{r-1})^\top$, 
$$
\begin{array}{rcl}
f(t,x,e) & := & 
\begin{pmatrix}
f_0(t,\chi,Cy)
\\[.5em]
 A y + B e
\end{pmatrix}
\\[2em]
A & := & \begin{pmatrix}
0_{r-2,1} & I_{r-2}
\\
-a_1 & -a_2 \cdots - a_{r-1}
\end{pmatrix}, 
\quad
B := \begin{pmatrix}
0_{r-2, 1} \\ 1
\end{pmatrix}, 
\quad
C := \begin{pmatrix}
1 & 0_{1, r-1}
\end{pmatrix},
\\[1em]
q(t,x,e)& := & \displaystyle
q_0(t,\chi,\xi_1, \ldots, \xi_{r-1}, 
e-  \sum_{i=1}^{r-1}a_i \xi_i)
+
\sum_{i=1}^{r-2}a_i \xi_{i+1}
+a_{r-1}(e - \sum_{i=1}^{r-1}a_i \xi_i).
\end{array}
$$
By construction, 
if $f_0, q_0$ are $C^2$ and $T$-periodic, 
so are  
$q$ and $f$. 
 As a consequence, we can apply the control design proposed in 
 Sections~\ref{sec:finite}, \ref{sec:infinite}
 to system \eqref{eq:sys_rel_deg_r2}. 
 In particular, concerning the result  of Theorem~\ref{thm1}, 
 it is straightforward to see that the control law
 \eqref{eq:controller} applies to system 
 \eqref{eq:sys_rel_deg_r2}
insuring the desired properties
for the new regulated output $e$ of system \eqref{eq:sys_rel_deg_r2}.
Then, by linearity of the change of coordinates, the properties
on $y,e$ can be used to analyse the behaviour of $\xi_1$.
Indeed, the $r$-th derivative of 
$\xi_1$, i.e., $\xi_1^{(r)}$, is given by
$$
\xi _1^{(r)}\;=\; -\sum_{k=1}^{r-1} a_k \xi _1^{(r-1-k)} + e\,.
$$
With  $\lambda^{r-1} + a_1 \lambda^{r-2} + \ldots + a_{r_2}\lambda + a_{r-1}$
being a Hurwitz polynomial, we have
$$\limsup_{t\to \infty }|\xi_1 (t)|\leq 
\mathcursive{k}_\infty \bounde_p
\quad ,\qquad 
\int_0^T \xi _1(t)^2 dt \leq
\mathcursive{k}_2 \int_0^T e(t)^2 dt$$
where the real numbers $\mathcursive{k}_\infty $ and 
$\mathcursive{k}_2 $ depends only on the $a_k$.
The regulation objective \eqref{eq:objective_1}
is then satisfied 
by selecting $\bounde_p< \dfrac{\boundxi_{1p}}{\mathcursive{k}_\infty}$.

A formal theorem concerning the design of a regulator 
for the output regulation problem for system 
\eqref{eq:sys_rel_deg_r} is therefore not given as it can 
be directly inherited by combining the computations
 of this section
and the results of  Theorems~\ref{thm1} and \ref{thm2}.

\section{Example}
\label{sec:example}

\subsection{Linear Bode Analysis}

In this section we provide a numerical example
to corroborate the theoretical results of 
Section~\ref{sec:main}.
First, in order to have a deeper insight 
of the proposed algorithm, we consider 
a simple linear case 
with no zero-dynamics
given by 
\begin{equation}
\label{eq:example-linear}
\dot e = u  + q(t)
\end{equation}
and we analyse the difference between the transfer functions
of a high-gain feedback  regulator
\begin{equation}\label{eq:example-hgf}
u = -\sigma e
\end{equation}
and the internal model based regulator 
\eqref{eq:controller}. 
Figure~\ref{fig:bode} 
shows the transfer function between the input 
$q$ and the output $e$
for the closed-loop systems 
\eqref{eq:example-linear}, \eqref{eq:example-hgf}, 
respectively
 \eqref{eq:example-linear}, \eqref{eq:controller},
when $\sigma=2$ and the other parameters are selected as 
$\mu=1$, $n_{z\ell}$ selected as in \eqref{eq:nzk} with
$\varepsilon=0.5$, $\widehat{\omega}=2\pi$, and $n_o=10$.
As expected, it is readily seen that the effect of the internal model 
is to add blocking zeros at the desired frequencies 
$\ell\widehat\omega$, $\ell=0, 1, \ldots, 10$, while preserving
the same transfer function of the high-gain feedback 
\eqref{eq:example-hgf} at higher frequencies. 
In the next section, we will show in a numerical simulation
that this blocking effect is preserved in the nonlinear 
context, confirming the results of 
Theorems~\ref{thm1} and \ref{thm2}.

\subsection{Numerical Example}

Consider here as a simple example 
a system with unitary relative degree of the form 
\eqref{eq:sys}. For the simulations, the nominal functions $f,q$ are selected as 
\begin{equation}\label{eq:example}
\begin{array}{rcl}
f(t,x,e) &= &
\begin{pmatrix}
-\tfrac{1}{5} x_1+\sqrt3 x_2 + \tfrac1{10}\sin(x_2) 
\\
-\sqrt3 x_1 -x_2  +  \tfrac1{10} x_2^2
\end{pmatrix}
+
\begin{pmatrix}
0 \\ x_2 e
\end{pmatrix} 
+
\begin{pmatrix}
\sin(2\pi t)
\\
\cos(4\pi t)(1+\sin(2\pi t))
\end{pmatrix},
\\
q(t,x,e) & =& 1+x_1+ \arctan(e x_2)+ 
\sum_{k=1}^4\cos^k(2\pi t) 
\end{array}
\end{equation}
The functions $f,q$ have been randomly chosen
``ugly''  
 so that  other frameworks for
asymptotic regulation 
such as \cite{byrnes2004nonlinear,marconi2007output}
cannot be explicitly applied.
It is readily seen that, around the origin, the $x$-dynamics is locally exponentially stable, while for large values of $x$, it is unstable
due to the term $x_2^2$.
Furthermore, $f,q$ are $T$-periodic smooth functions with period 
$T = 1$.
Hence, the  $f,q$ are included in the Families
 $\F$ and $\Q$ of Definitions~\ref{def:familyF} and
\ref{def:familyQ}.

We then evaluated the steady-state performances 
of a simple high-gain feedback \eqref{eq:example-hgf},
and the internal model based  regulator \eqref{eq:controller}, 
in two scenarios: without any measurement noise, 
and in presence of high-frequency measurement noise $v$.
In the noisy scenario, 
the  high-gain feedback \eqref{eq:example-hgf}
becomes
 $$
 u = -\sigma (e + v)
 $$
while
 the internal model based  regulator \eqref{eq:controller} 
 reads as 
 $$
 \begin{array}{rcl}
\dot z & = & \Phi z+ \Gamma (e+ v)
\\ 
 u & = & - \sigma e + \mu M^\top N_z [z - M(e+v)].
 \end{array}
 $$
We then performed
different simulations.
Since we are not interested in transient response, 
the initial conditions are selected always as 
$x(0)=(1,-2)$, $e(0)=4$ and $z(0)=0$.
For the high-gain feedback, we selected different values of 
$\sigma=\{2,10,20,40\}$.
For the controller \eqref{eq:controller}
 we selected $\sigma=2$, $\mu=1$
and $n_{z\ell}$ selected as in \eqref{eq:nzk} with
$\varepsilon=0.5$ The number of oscillators is varied
in the range $\{0,\ldots, 4\}$,
with $0$ corresponding to a pure integral action, 
while the  basic frequency \eqref{eq:basic_omega} is selected 
as $\widehat\omega=\{\tfrac{2\pi}T, 0.99\tfrac{2\pi}T, 
0.95\tfrac{2\pi}T, \tfrac{2\pi}T \varphi_g\}$
with  $\varphi_g =\tfrac{1+\sqrt5}2$ being the golden number.
Simulations have been performed with 
Matlab-Simulink, with a fixed-step time algorithm, 
sampling time $10^{-5}$, over a time length of 150 seconds.

As established in Theorem~\ref{thm1}, 
convergence to a steady-state trajectory $e_p$ is guaranteed
no matter the dimension $n_o$ and the frequency $\hat \omega$.
Table~\ref{table:high-gain} lists the values of the
$L_\infty$ norm and $L_2$ norm of the steady-state error $e_p$
with the high-gain feedback \eqref{eq:example-hgf} for
the different values of $\sigma$. Applying a 
FFT (fast Fourier transform) on the last 
20 seconds of simulations (so that solutions reached
their steady-state)
 we identified the 
corresponding main frequencies 
Figure~\ref{fig:Fourier_hgf} of the steady-state error
$e_p$ for $\sigma=2$, showing that the main Fourier 
coefficients are at frequencies $2k\pi$, with $k=0,1,\ldots, 4$, 
as expected since $T=1$.
Then, Table~\ref{table:example-norms}
lists the values of the
$L_\infty$ norm and $L_2$ norm of the steady-state error $e_p$
for the controller \eqref{eq:controller} in the different 
scenarios.
The corresponding Fourier coefficients
for $\widehat{\omega}=\{1,0.99,0.95\}\tfrac{2\pi}T$
 are shown 
in Figures~\ref{fig:Fourier_nominal}, \ref{fig:Fourier_1percent}
and \ref{fig:Fourier_5percent}. We can observe
that when the basic frequency $\widehat\omega$ is 
less accurate, the errors on the higher 3-4  frequencies 
 becomes more important and therefore the blocking effect of the zero
becomes less useful. As a consequence, 
when the frequency is not perfectly known, 
it may be not so interesting to put only multiple 
frequencies of $\widehat{\omega}$.

 From Tables~\ref{table:high-gain}
 and \ref{table:example-norms}, 
 it is immediately seen the remarkable 
 increase of performances (in terms of steady-state $L_\infty$
 and $L_2$ norms)
 with the proposed controller, even with a very bad knowledge
 of $T$, with respect to a pure high-gain feedback:
  for the same value of $\sigma$, 
we obtain a sensible reduction of both norms when 
$\widehat{\omega}=\tfrac{2\pi}{T}\varphi_g$, 
and to have the similar performances with a high-gain feedback
we should take  a much higher value of $\sigma$.  
However, note that when $\hat\omega$ is taken very distant from the nominal value
$\tfrac{2\pi}T$, adding extra oscillators is not very useful:
indeed the sup and $L_2$ norms are not reduced anymore.
On the contrary, when $T$ is perfectly known, with only 4 oscillators
we are already able to achieve almost perfect tracking 
as the remaining error is nearly negligible.

 Finally, the same scenarios have been 
 performed in presence 
 of high-frequency measurement noise $v$ 
generated by colouring some random white noise 
with a high-pass filter. In simulations we used 
a Simulink ``Band-Limited
White Noise'' block
with noise power $10^{-3}$ and  sampling 
time  $10^{-5}$ and the following transfer function 
$$
 H(s) = \dfrac{s^2}{s^2+3s+2}
$$ 
with $s$ being the Laplace operator.
Simulations show that  $\sup_{t\in[0,\infty)}|v(t)| \leq  40$.
The values of the $L_\infty$ and $L_2$ norms 
(computed over one random period $T$ among the last 30 seconds
of simulations)  
 of the asymptotic value of the error 
are listed in Tables~\ref{table:high-gain}
and \ref{table:example-norms}.
Again, simulations confirm the advantages 
(in term of $L_\infty$ and $L_2$ norms) of the 
proposed internal model based regulator
\eqref{eq:controller}
over a high-gain feedback controller 
\eqref{eq:example-hgf}.

\section{Conclusions}
\label{sec:conclusions}

In this work we addressed the problem of 
exact and approximate periodic  robust output regulation 
for minimum-phase systems. We investigated 
the use of an internal-model based regulator
which is a straightforward  
extension of  the linear case established by 
Francis, Wonham and Davison, and conceptually 
similar to what is used in practical repetitive 
control scheme approaches. In practice, the internal model
unit is composed by a bunch of linear oscillators processing
the regulated output.  

The main contribution of this work is to establish
that the domain of attraction of an exponentially stable
periodic solution is uniform in the  parameters characterising 
the proposed regulator (i.e. the number of oscillators 
and their frequencies). 
The result is also robust to model uncertainties
as only the Lipschitz properties of the system 
dynamics (and not the exact expressions) 
are used in the computations.
Is is shown that the quality 
(in terms of $L_2$ norm)
of the  periodic steady-state of the regulated
error improves by augmenting the number of oscillators 
and that exact regulation can be achieved when the number of oscillator
is infinite and the period is perfectly known.
Simulations confirm the theoretical findings
and shows the improvements of the proposed regulator 
with respect 
to a pure high-gain feedback controller or a simple PI controller.

The main results of this work shows that the best performances
with the proposed regulator
 can be obtained when the period characterising
the periodicity of disturbances/references is perfectly known.
This consideration leads to many open questions 
concerning the knowledge of the frequencies
of perturbations/references in practical applications, 
optimal choices of the parameters of the regulator
in terms of number of oscillators and their frequencies,
and  possible strategies for offline/online identification
of such frequencies.  
For all these problems, adaptive, identification or learning
 techniques
may be a key  tool, see 
\cite{nikiforov1998adaptive,kocan2020supervised,afshar2019adaptive,
bin2019adaptive,marino2016hybrid,serrani2001semi,
marino2015online,xu2013global,gao2017learning} as few examples.

Finally, from the theoretical point of view, 
an exhaustive analysis and  extension 
of the results presented in this article
to systems not in normal form
and possibly non-minimum phase, as in 
\cite{astolfi2015approximate,astolfi2017integral},
remains
 an open problem.

     \begin{figure}\sidecaption
\resizebox{1\hsize}{!}{\includegraphics*{./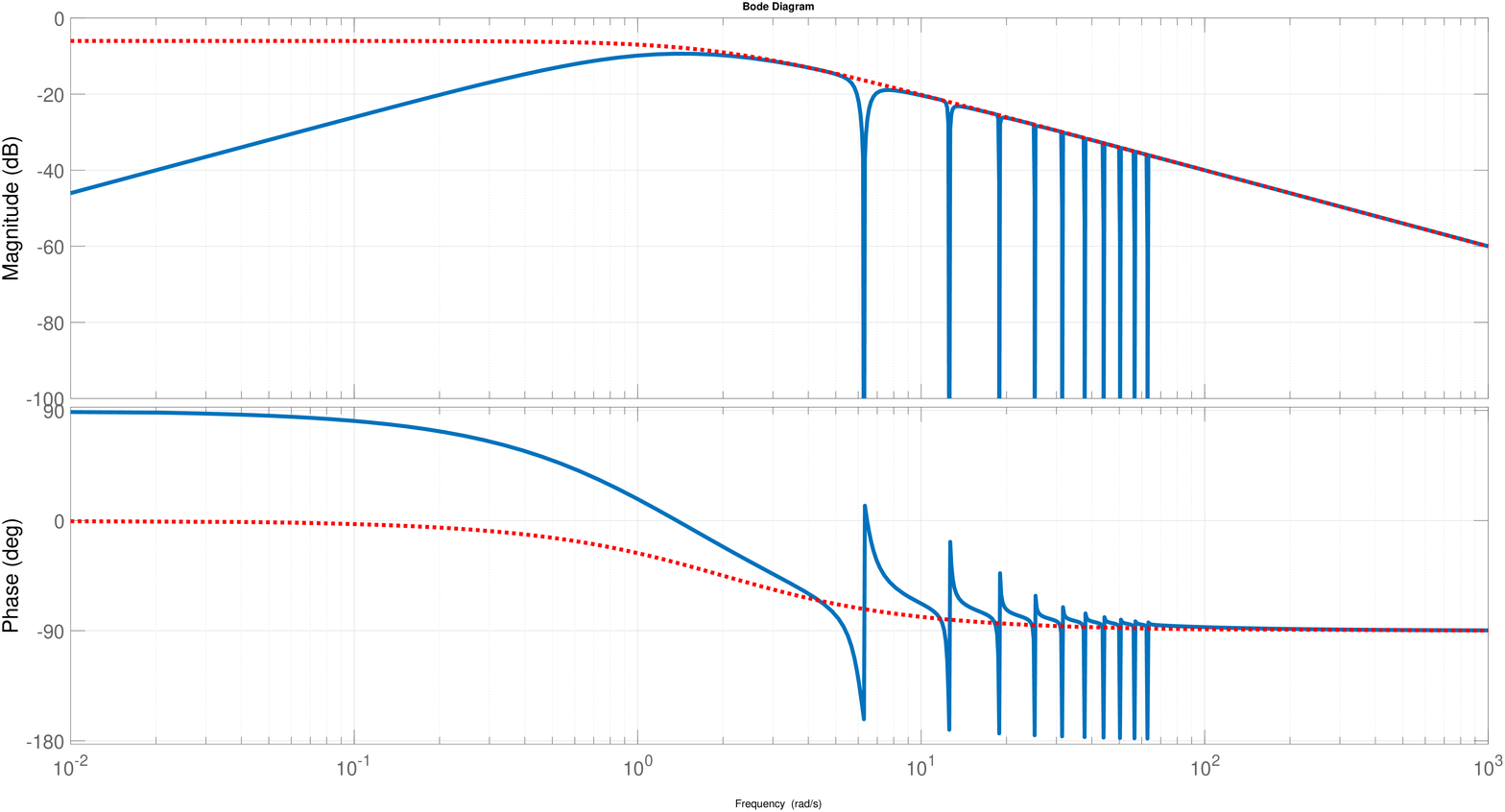}}
\caption{Bode diagram of  the transfer between $e$
and $q$ for the closed-loop systems 
\eqref{eq:example-linear}, \eqref{eq:example-hgf}, 
in blue, 
respectively
 \eqref{eq:example-linear}, \eqref{eq:controller},	in red. }\label{fig:bode}
\end{figure}
   
\begin{center}
\begin{table}\renewcommand{\arraystretch}{1.5}
\begin{tabular}{p{.5cm}|p{2cm}p{2cm}|p{2cm}p{2cm}}
\multicolumn{1}{c}{ } & \multicolumn{2}{c}{no measurement noise} & 
\multicolumn{2}{c}{with measurement noise}
\\
\toprule
$\sigma$  & $\sup_{t\in[0,T]}|e_p(t)|$ & $\tfrac1T\int_0^T|e_p(t)|^2dt$
& $\sup_{t\in[0,T]}|e_p(t)|$ & $\tfrac1T\int_0^T|e_p(t)|^2dt$
\\
\toprule
2 & 1.2555 &   0.9657  & 1.3023 & 0.9743  \\ 
5 &    0.6577 &     0.4083   & 0.7995 & 0.4244 \\ 
10   & 0.400 &   0.2166 &  0.6619 &     0.2489 \\
20   & 0.2248 &    0.1126 & 0.6201 &       0.1765\\
40  & 0.1181 &    0.0572 &  0.6736 &       0.1735\\ 
\bottomrule
\end{tabular}
\caption{Simulation of the example \eqref{eq:example}
with high-gain feedback control \eqref{eq:example-hgf}, for different 
selections of $\sigma$, with and without measurement noise. 
The noise is generated as random
white noise coloured with a high-pass filter.
 }\label{table:high-gain}
 \end{table}
\end{center}

\begin{center}\renewcommand{\arraystretch}{1.5}
\begin{table}
\begin{tabular}{p{.1cm}p{.1cm}p{1cm}p{.1cm}|p{1.7cm}p{1.7cm}|p{1.7cm}p{1.7cm}}
 \multicolumn{4}{c}{} & 
 \multicolumn{2}{c}{no measurement noise} &
  \multicolumn{2}{c}{with measurement noise}\\
\toprule
$\sigma$ & $\mu$ &$\widehat\omega$   & $n_o$  & $\sup_{t\in[0,T]}|e_p(t)|$ & $\tfrac1T\int_0^T|e_p(t)|^2dt$
 & $\sup_{t\in[0,T]}|e_p(t)|$ & $\tfrac1T\int_0^T|e_p(t)|^2dt$
\\
\toprule
2 & 1 & - & 0 & 0.3074 &  0.1777  & 0.3915  & 0.2024 \\ 
\toprule
2 & 1& $2\pi$ & 1 & 0.0917  &   0.0549 & 0.2453 &   0.0872 \\
2 & 1&$2\pi$ & 2 & 0.0178  & 0.0099 & 0.1822 & 0.0545  \\
2 & 1&$2\pi$  & 3 &    0.0049 &    0.0035  &  0.1679 &  0.0514\\
2 &1& $2\pi$  & 4 &    3.04  $\cdot 10^{-8}$   &    
 1.66 $\cdot 10^{-8}$  &    0.1730 &     0.0523
\\
\midrule
2 &1& $0.99\cdot2\pi$ & 1 & 0.1145  &    0.0587 & 0.2716 & 0.0985\\
2 &1& $0.99\cdot2\pi$    & 2 &       0.0835  &     0.0371 
&  0.2391 &   0.0814  \\
2 & 1&$0.99\cdot2\pi$    & 3 &    0.0837
 &       0.0369 & 0.2292 &  0.0790
\\ 
\midrule
2 &1& $0.95\cdot2\pi$   & 1 & 0.2045  &  0.0996  &   0.3693 & 
0.1478\\
2 &1&  $0.95\cdot2\pi$   & 2 &     0.2038  &     0.0980 
& 0.3818 &    0.1498\\
2 & 1& $0.95\cdot2\pi$   & 3 &   0.2041  &      0.0982
&  0.3876 &    0.1499
\\ 
\midrule
2 & 1&$2\pi \varphi_g$ & 1 & 0.2915   &   0.1788 &   0.3764 &    0.2020\\
2 &1& $2\pi\varphi_g$  & 2 &  0.2928 & 0.1790 &  0.3761 & 0.2017\\
2 & 1&$2\pi\varphi_g$   & 3 & 0.2929  & 0.1790 &   0.3792 &     0.2015\\
\bottomrule
\end{tabular}
\caption{Simulation of the example \eqref{eq:example}
with the regulator \eqref{eq:controller}, for different 
selections of $n_o$ and $\hat\omega$, with 
$n_{z\ell}$ selected as in \eqref{eq:nzk} with
$\varepsilon=0.5$. Note that $\varphi_g=\tfrac{1+\sqrt{5}}2$ corresponds to the golden number.
 }\label{table:example-norms}
 \end{table}
\end{center}

     \begin{figure}\sidecaption
\resizebox{1\hsize}{!}{\includegraphics*{./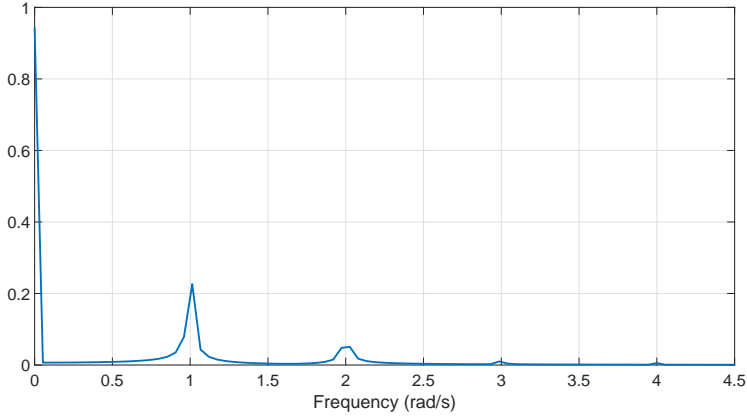}}
\caption{FFT (fast Fourier transform) of the steady-state regulated error
 $e_p(t)$ for the example~\eqref{eq:example} with high-gain 
 feedback \eqref{eq:example-hgf}, with $\sigma=2$. }\label{fig:Fourier_hgf}
\end{figure}

     \begin{figure}\sidecaption
\resizebox{1\hsize}{!}{\includegraphics*{./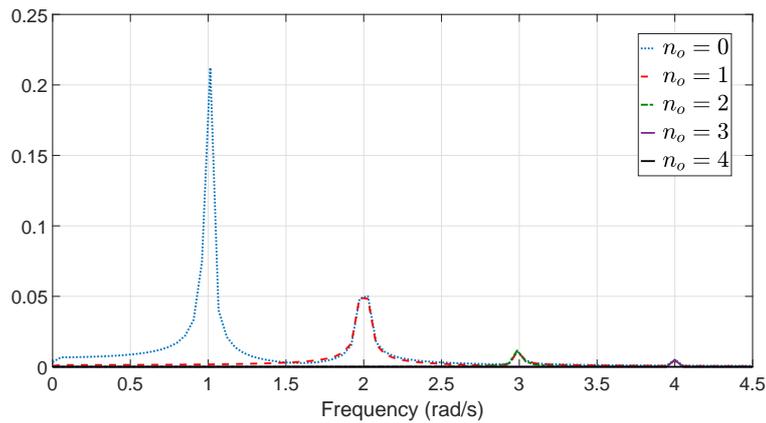}}
\caption{FFT (fast Fourier transform) of the steady-state regulated error
 $e_p(t)$ for the example~\eqref{eq:example} with controller 
 \eqref{eq:controller},  and parameters selected as $\sigma=2$, $\mu=1$, $\varepsilon=0.5$, 
 $n_o=\{0,\ldots, 4\}$ and $\widehat{\omega}=\tfrac{2\pi}{T}$.	 }\label{fig:Fourier_nominal}
\end{figure}

         \begin{figure}\sidecaption
\resizebox{1\hsize}{!}{\includegraphics*{./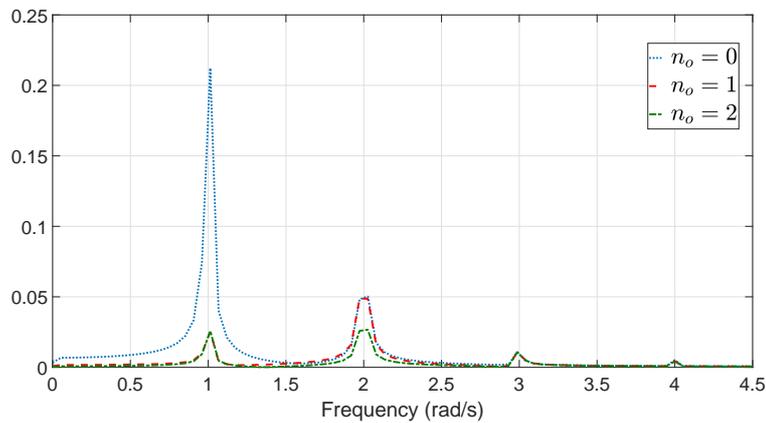}}
\caption{FFT (fast Fourier transform) of the steady-state regulated error
 $e_p(t)$ for the example~\eqref{eq:example} with controller 
 \eqref{eq:controller}, and parameters selected as $\sigma=2$, $\mu=1$, $\varepsilon=0.5$, 
 $n_o=\{0,1, 2\}$ and $\widehat{\omega}=0.99\tfrac{2\pi}{T}$.
 For $n_o=3$, the line is nearly overlapped with 	the one 
 of $n_o=2$. }\label{fig:Fourier_1percent}
\end{figure}

             \begin{figure}\sidecaption
\resizebox{1\hsize}{!}{\includegraphics*{./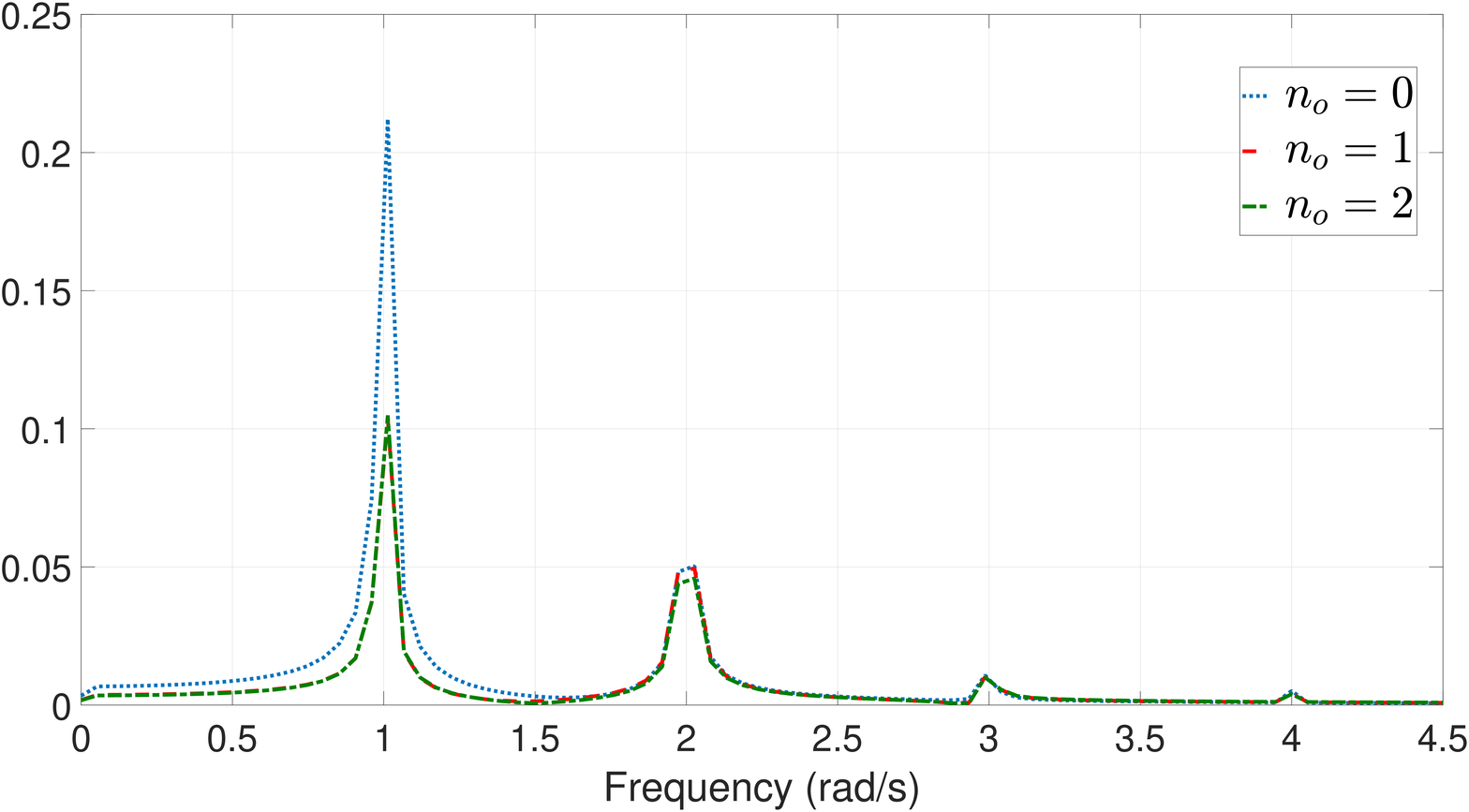}}
\caption{FFT (fast Fourier transform) of the steady-state regulated error
 $e_p(t)$ for the example~\eqref{eq:example} with controller 
 \eqref{eq:controller}, and parameters selected as $\sigma=2$, $\mu=1$, $\varepsilon=0.5$, 
 $n_o=\{0,1, 2\}$ and $\widehat{\omega}=0.95\tfrac{2\pi}{T}$.
 From $n_o>1$ the curves are nearly overlapped.	 }\label{fig:Fourier_5percent}
\end{figure}

   \begin{appendix}
\section{Proof of Theorem~\ref{thm1}}
\label{sec:proof1}

A simple way to establish Theorem~\ref{thm1} could be by
showing that the origin of  \eqref{eq:closed-loop}
for $q=0$ is exponentially stable, and then perturb 
such solution with  a small $q$.
Fixed point theorems and exponential stability arguments
would prove the desired result, 
see, e.g.,  Theorem 3.1, Chapter 8.3, in \cite{miller1982ordinary}.
However, in doing so, all the results would be 
$n_o$-dependent.
Since the objective of this proof is 
to show that this is not the case, namely
the existence of a stable periodic solution
is verified for any choice of $n_o$, 
with bounds that do not depend on $n_o$, 
we are forced to redo  the proof, following the classical route but
re-entering into the details and being careful and precise with the bounds.

\subsection{Preliminaries}
First, let us make the following change of coordinates
\begin{equation}
\label{eq:change_of_coordinates_z_zeta}
z\mapsto \zeta := z- Me
\end{equation}
which transform the closed-loop system 
\eqref{eq:sys}, \eqref{eq:controller}
into
\begin{subequations}\label{eq:closed-loop}
\begin{align}
\label{eq:closed-loop_x}
\dot x & = f(t,x,e)
\\
\dot e & =  q(t,x,e) - \sigma e + \mu M^\top N_z \zeta \label{eq:closed-loop_e}
\\
\dot \zeta & = (\Phi - \mu M M^\top N_z)\zeta  - Mq(t,x,e).
\label{eq:closed-loop_zeta}
\end{align}
\end{subequations}
Our approach to study this system is to decompose it as~:
\begin{equation}
\label{2}
\renewcommand{\arraystretch}{1.5}
\begin{array}{rcl}
\dot x ^+& = &\frac{\partial f}{\partial x}(t,0,0) x^+
\;+\; 
\left[f(t,x^-,e^-)-\frac{\partial f}{\partial x}(t,0,0) x^-\right]
\\
\dot e^+ & = &- \sigma e^+ \;+\;  \left[q(t,x^-,e^-) + \eta  \right]
\end{array}
\end{equation}
where $x^-$ and $e^-$ are inputs and $\eta $ is the output of the system
\begin{equation}
\label{1}
\dot \zeta  = (\Phi - \mu M M^\top N_z)\zeta  - Mq(t,x^-,e^-)
\quad ,\qquad  \eta \;=\; \mu M^\top N_z \zeta  
\end{equation}
Indeed we recover the system \eqref{eq:closed-loop} when inputs $(x^-,e^-)$ equal outputs $(x^+,e^+)$ and
we can benefit from
the following properties~:
\begin{itemize}
\item[$\bullet$] The origin of the subsystem $\dot x= f(t,x,0)$
and therefore of $\dot x ^+=\frac{\partial f}{\partial x}(t,0,0) x^+$
is locally 
exponentially stable in light of 
\eqref{eq:Pass1}, \eqref{eq:Pass2}.

\item[$\bullet$] The origin of subsystem $\dot e =  - \sigma e$
is  exponentially stable.

\item[$\bullet$] The (linear) subsystem \eqref{1} with input $q$ and output $\eta $ is linear and we shall show in
Lemma~\ref{lemma:hurwitzPhiM} below it is stable.
\end{itemize}
We give the $\zeta$ subsystem  (\ref{1}) a special treatment because of its strong dependence on $n_o$. 
Annoying features are, for example,
\begin{itemize}
\item[$\bullet$]
the 
dimension of $\zeta $ is $2n_o+1$;
\item[$\bullet$]
\null  $\displaystyle 
\textsf{trace}(\Phi - \mu M M^\top N_z)\;=\; \mu \,  \left[1 + \sum_{k=1}^{n_o} n_{zk}\right]
$\hfill \null \\[0.5em]
which, with (\ref{eq:sequence_nzk}) and Lemma~\ref{lemma:hurwitzPhiM}, implies the real part of the eigen
values of $(\Phi - \mu M M^\top N_z)$ 
tends to $0$ as $n_o$ tends to infinity.
\end{itemize}

In the following, we start by studying the $\zeta$-subsystem. Then we show that, with a
suitable choice of $\sigma, \mu$
and bounds $\boundx, \bounde$, arguments of our bounding functions
in \eqref{LP1} and \eqref{LP2}
for $f$, $\frac{\partial f}{\partial x}$ 
and $q$, there is a periodic solution
$(x_p,e_p)$ satisfying~:
$$
(x_p,e_p)\;=\; (x^-,e^-)\;=\; (x^+,e^+)
$$
Finally we prove it is exponentially stable and study its domain of attraction and its properties.

\subsection{Study of the system (\ref{1})}

\begin{lemma}\label{lemma:hurwitzPhiM}
Let $\Phi$, $M, N_z$ be 
defined as in \eqref{eq:defPhiM}.
Then, for any $n_o\in \NN$ and any
$\mu>0$, the pair $(\Phi,M^\top N_z)$ is observable and the matrix 
$(\Phi- \mu MM^\top N_z)$ is Hurwitz.
In particular, there exist a symmetric positive definite matrix $P_\zeta $,
depending on $n_o$,
and, for any strictly positive real number $\mu $, there exists a strictly positive real number $\kappa $,
depending on $n_o$,
such that we 
have
\begin{multline}
\label{app:ineq_PhiM}
 (N_z  + \kappa P_\zeta)(\Phi- \mu MM^\top N_z) 
+ 
(\Phi- \mu MM^\top N_z)^\top (N_z  + \kappa P_\zeta)
\;\leq \;- \mu N_z M M^\top N_z-\kappa N_z 
\  .
\end{multline}
\end{lemma}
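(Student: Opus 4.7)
The plan is to establish the three assertions of Lemma~\ref{lemma:hurwitzPhiM} in sequence: observability of $(\Phi,M^\top N_z)$, Hurwitzness of $A:=\Phi-\mu MM^\top N_z$, and existence of $(P_\zeta,\kappa)$ satisfying \eqref{app:ineq_PhiM}. The whole argument rests on the block-diagonal/skew-symmetric structure encoded in \eqref{eq:defPhiM}.

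For observability, $\Phi$ is block-diagonal with spectra $\{0\}$ and $\{\pm\im\omega_\ell\}$, $\ell=1,\ldots,n_o$; since $\omega_\ell=\ell\widehat\omega$ these are pairwise distinct, so by the PBH test it suffices to check each subsystem $(\Phi_\ell,n_{z\ell}M_\ell^\top)$ separately. The integrator block ($\ell=0$) is trivial because $n_{z0}>0$. For $\ell\geq 1$ the observability matrix evaluates to $n_{z\ell}\begin{pmatrix}1 & 0\\ 0 & \omega_\ell\end{pmatrix}$, which is manifestly nonsingular.

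For Hurwitzness, the pivotal observation I would record first is the identity $N_z\Phi+\Phi^\top N_z=0$, which holds blockwise because each $N_{z\ell}$ is a scalar multiple of the identity while each $\Phi_\ell$ is skew-symmetric. Taking $V(\zeta):=\zeta^\top N_z\zeta$ as a Lyapunov candidate, this identity turns $\dot V$ along $\dot\zeta=A\zeta$ into $-2\mu(M^\top N_z\zeta)^2\leq 0$. LaSalle's invariance principle, combined with the observability established in the previous step, forces every trajectory to the origin; for a finite-dimensional linear system this is equivalent to $A$ being Hurwitz.

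For the matrix inequality \eqref{app:ineq_PhiM}, Hurwitzness of $A$ delivers, through the standard Lyapunov equation, a unique symmetric positive definite $P_\zeta$ satisfying $P_\zeta A+A^\top P_\zeta=-N_z$. Combining this with the already-computed identity $N_z A+A^\top N_z=-2\mu N_z MM^\top N_z$, for any $\kappa>0$ one obtains $(N_z+\kappa P_\zeta)A+A^\top(N_z+\kappa P_\zeta)=-2\mu N_z MM^\top N_z-\kappa N_z$, which is majorized by $-\mu N_z MM^\top N_z-\kappa N_z$ since $N_z MM^\top N_z\geq 0$. I do not anticipate a substantive obstacle: the construction is a routine Lyapunov/LaSalle exercise, and the only point requiring attention is that the identity $N_z\Phi+\Phi^\top N_z=0$ holds \emph{uniformly} in $n_o$, which is transparent from the block structure. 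Both $P_\zeta$ and $\kappa$ so produced depend on $n_o$ (and implicitly on $\mu$, through $A$), consistent with the statement of the lemma.
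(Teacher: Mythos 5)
Your proposal is correct, but it reaches the matrix inequality by a genuinely different route than the paper, and the comparison is worth recording. The paper does not first prove Hurwitzness and then solve a Lyapunov equation for the closed-loop matrix; instead it takes $P_\zeta$ as the solution of
$P_\zeta(\Phi-KM^\top N_z)+(\Phi-KM^\top N_z)^\top P_\zeta=-2N_z$
for an output-injection gain $K$ supplied by observability (so $P_\zeta$ is independent of $\mu$), and then absorbs the resulting cross terms $\kappa(P_\zeta KM^\top N_z+\cdots)-\kappa\mu(P_\zeta MM^\top N_z+\cdots)$ by a completion of squares, which forces $\kappa$ to be taken small enough that $N_z\geq\frac{\kappa}{\mu}P_\zeta(\mu M-K)(\mu M-K)^\top P_\zeta$. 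Hurwitzness of $\Phi-\mu MM^\top N_z$ then falls out of the final inequality. You instead establish Hurwitzness first (the $V=\zeta^\top N_z\zeta$ plus LaSalle-plus-observability argument is sound, and your blockwise PBH check of observability matches the paper's), and then define $P_\zeta$ by $P_\zeta A+A^\top P_\zeta=-N_z$ with $A=\Phi-\mu MM^\top N_z$ itself. This makes \eqref{app:ineq_PhiM} an exact identity (up to dropping half of $-2\mu N_zMM^\top N_z$) valid for \emph{every} $\kappa>0$, with no completion of squares and no smallness condition — a cleaner computation. The price, which you correctly flag, is that your $P_\zeta$ depends on $\mu$, so you prove the quantifier order $\forall\mu\,\exists P_\zeta\,\forall\kappa$ rather than the stated $\exists P_\zeta\,\forall\mu\,\exists\kappa$. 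Since $\mu$ is fixed once and for all in Theorem~\ref{thm1} and the lemma is only invoked for that fixed $\mu$ (in the construction of $V_{\zeta,s}$ in Proposition~\ref{prop:exponential_solution}), this weakening is harmless for the paper's purposes, but it does deviate slightly from the literal statement.
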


\begin{proof}
The pair $(\Phi_\ell,M_\ell)$
in \eqref{eq:defPhiM}
is observable.
Then, observability of $(\Phi,M^\top N_z)$
is a direct consequence of the block-diagonal structure
of the matrix $\Phi$ and the fact that $N_z$
is diagonal. 
Since the pair $(\Phi,M^\top N_z)$ is observable
 and $N_z$ is positive definite,
there exist a matrix $K$ and a positive definite matrix $P_\zeta $ 
satisfying
$$
P_\zeta  (\Phi -K M^\top N_z) + (\Phi -K M^\top N_z) ^\top P_\zeta  \;=\; -2N_z 
\  .
$$
On another hand,
since \eqref{eq:defPhiM}
implies 
$N_z \Phi + \Phi^\top N_z = 0$, 
we obtain, 
adding and subtracting the term $\mu N_zMM^\top N_z$,
$$
N_z(\Phi- \mu MM^\top N_z) + (\Phi- \mu MM^\top N_z) ^\top N_z
+ 2 \mu N_z M M^\top N_z\;=\; 0
\  .
$$
By combining these two equations, we get, with $\kappa $ any strictly positive real number,
\begin{multline}\label{eq:ineq_temp}
 (N_z  + \kappa P_\zeta )(\Phi- \mu MM^\top N_z)
+
 (\Phi- \mu MM^\top N_z)^\top(N_z  + \kappa P_\zeta ) 
\\ =  - 2\mu N_z M M^\top N_z  -2\kappa N_z 
 - \kappa \mu ( P_\zeta MM^\top N_z+
N_z  MM^\top  P_\zeta ) 
\\
+ \kappa ( P_\zeta KM^\top N_z + N_z  M  K^\top P_\zeta).
\end{multline}
But, for any matrices $A$ and $B$ and  real numbers $\kappa $ and $\mu $,
we have the  following identity
$$
\mu AA^\top + \kappa \left(BA^\top + A B^\top\right)
=  \left(\sqrt{\mu } A + \frac{\kappa }{\sqrt{\mu }} B\right)
\left(\sqrt{\mu } A + 
\frac{\kappa }{\sqrt{\mu }} B\right)^\top 
- \frac{\kappa ^2}{\mu } BB^\top .
$$
By using previous identity 
in which $A = N_zM$ and $B=P_\zeta (\mu M - K)$, we obtain
\begin{multline*}
- \mu N_z M M^\top N_z - \kappa \mu \left(P_\zeta MM^\top N_z+ N_z MM^\top P_\zeta \right)
+ \kappa \left[P_\zeta KM^\top N_z+ N_z MK^\top P_\zeta \right]
\\
= -\left(\sqrt{\mu }N_zM +\frac{\kappa}{\sqrt{\mu }} P_\zeta (\mu M-K)\right)
\left(\sqrt{\mu }N_zM +\frac{\kappa}{\sqrt{\mu }} P_\zeta 
(\mu M-K)\right)^\top
\\
+ 
\frac{\kappa^2}{\mu } P_\zeta (\mu M-K)(\mu M-K)^\top P_\zeta
\  .
\end{multline*}
Hence, by combining such identity with 
\eqref{eq:ineq_temp},
we obtain
\begin{multline*}
 (N_z  + \kappa P_\zeta )(\Phi- \mu MM^\top N_z)
 +
 (\Phi- \mu MM^\top N_z) ^\top (N_z  + \kappa P_\zeta ) 
 \\
 =
-\left(\sqrt{\mu }N_zM +\frac{\kappa}{\sqrt{\mu }} P_\zeta (\mu M-K)\right)
\left(\sqrt{\mu }N_z M + \frac{\kappa}{\sqrt{\mu }} P_\zeta (\mu M-K)\right)^\top
\\
- \kappa \left(N_z-\frac{\kappa}{\mu } P_\zeta (\mu M-K)(\mu M-K)^\top P_\zeta\right)
- \mu N_z M M^\top N_z-\kappa N_z.
\end{multline*}
Finally, 
by selecting 
$\kappa $ small enough so that
$$
N_z \geq \frac{\kappa}{\mu } P_\zeta (\mu M-K)(\mu M-K)^\top P_\zeta,
$$
we obtain
\eqref{app:ineq_PhiM}.\qed
\end{proof}

\begin{lemma}\label{lemma:transfer_zeta}
 Consider system 
\begin{equation}
\label{pf:subsys_zeta}
\dot \zeta = (\Phi- \mu MM^\top N_z)
\zeta 
- Mv
\end{equation}
with $\Phi$, $M, N_z$ be 
defined as in \eqref{eq:defPhiM} and 
$n_{zk}$
satisfying \eqref{eq:sequence_nzk}, e.g. as in \eqref{eq:nzk}. 
The transfer function between $v$ and  
$\N_z^\frac{1}{2} \zeta$ satisfies
\begin{equation}
\label{eq:transfer_function}
\bar{\zeta }(\im\omega) ^\top N_z \zeta (\im\omega) 
=
\dfrac{\displaystyle\sum_{\ell=0}^{n_o} \frac{\omega^2+\omega_\ell^2}
{(\omega_\ell^2-\omega^2)^2 } n_{z\ell}}
{1 + \displaystyle\mu^2 \left(\sum_{\ell=0}^{n_o} \frac{\omega}
{\omega_\ell^2-\omega^2 } n_{z\ell}\right)^2} \; |v(\im\omega)|^2.
\end{equation}
Furthermore, there exists $\kappa_0,\kappa_1>0$ independent of $n_o$
such that
\begin{equation}
\label{eq:transfer_function_bound}
\bar{\zeta }(\im\omega) ^\top N_z \zeta (\im\omega) \; \leq \;
( \kappa_0 + \kappa_1\omega^2)|v(\im\omega)|^2 \qquad \forall \, \omega\in \RR.
\end{equation}
\end{lemma}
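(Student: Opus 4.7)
The plan is first to compute the transfer function in closed form by a direct frequency-domain analysis, exploiting the block structure of $\Phi$ and $N_z$, and then to derive the bound \eqref{eq:transfer_function_bound} by a careful decomposition of the resulting quotient.

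For the identity \eqref{eq:transfer_function}, I would Fourier-transform \eqref{pf:subsys_zeta} to obtain the algebraic equation $(\im\omega I-\Phi+\mu MM^\top N_z)\zeta=-Mv$. Since $\Phi$ is block-diagonal, introducing the scalar coupling term $y:=M^\top N_z\zeta$ decouples the blocks: solving the $2\times 2$ block for $\ell\geq 1$ gives $(\zeta_\ell)_1=-\im\omega(\mu y+v)/(\omega_\ell^2-\omega^2)$ and $(\zeta_\ell)_2=\omega_\ell(\mu y+v)/(\omega_\ell^2-\omega^2)$, and the scalar block gives $z_0=-(\mu y+v)/(\im\omega)$. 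Inserting these into $y = n_{z0}z_0+\sum_{\ell\geq1}n_{z\ell}(\zeta_\ell)_1$ yields a scalar self-consistency relation $y(1+\im\omega\mu S(\omega))=-\im\omega S(\omega)v$ with $S(\omega):=\sum_{\ell=0}^{n_o}n_{z\ell}/(\omega_\ell^2-\omega^2)$, so that $|\mu y+v|^2=|v|^2/(1+\mu^2\omega^2 S(\omega)^2)$. Summing $n_{z\ell}|\zeta_\ell|^2 = n_{z\ell}(\omega^2+\omega_\ell^2)|\mu y+v|^2/(\omega_\ell^2-\omega^2)^2$ (with the $\ell=0$ contribution $n_{z0}|z_0|^2 = n_{z0}|\mu y+v|^2/\omega^2$, noting $1/\omega^2=(\omega^2+\omega_0^2)/(\omega_0^2-\omega^2)^2$) produces exactly \eqref{eq:transfer_function}.

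For the bound \eqref{eq:transfer_function_bound}, the starting point is the partial-fraction identity
$$\frac{\omega^2+\omega_\ell^2}{(\omega_\ell^2-\omega^2)^2}=\tfrac{1}{2}\left[\tfrac{1}{(\omega-\omega_\ell)^2}+\tfrac{1}{(\omega+\omega_\ell)^2}\right],$$
which is valid for every $\omega_\ell\geq 0$, including $\omega_0=0$. Assume $\omega\geq 0$ (the case $\omega\leq 0$ is symmetric) and let $L\in\{0,\dots,n_o\}$ be such that $\omega_L$ is the closest resonance, so that $|\omega-\omega_L|\leq\hat\omega/2$. Split the numerator as $N(\omega)=N_L(\omega)+\widetilde N(\omega)$. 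For $\ell\neq L$ one has $|\omega-\omega_\ell|\geq|\ell-L|\hat\omega/2$ and $|\omega+\omega_\ell|\geq\ell\hat\omega$; combined with the monotonicity \eqref{27} ($n_{z\ell}\leq n_{z1}$), the summability \eqref{eq:sequence_nzk}, and the elementary bound $\sum_{k\geq1}k^{-2}<\infty$, this yields $\widetilde N(\omega)\leq\kappa_0$ for an $n_o$-independent constant. For the resonant term, the denominator $D(\omega)=1+\mu^2(\omega S(\omega))^2$ carries the matching singularity: the $L$-th summand of $S$ dominates $S$ near $\omega_L$ up to a residual that is uniformly bounded by the same summability argument, giving $D(\omega)\geq c\,\mu^2\omega^2 n_{zL}^2/(\omega_L^2-\omega^2)^2$ on a neighborhood of $\omega_L$. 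Combining with \eqref{23}, which forces $1/n_{zL}\leq L^2/n_{z1}$, produces $N_L/D\leq C\,L^2/n_{z1}\leq C\,\omega^2/(\hat\omega^2 n_{z1})$, which furnishes the $\kappa_1\omega^2$ term.

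The main obstacle I anticipate is the sign interaction between $S_L$ and $S-S_L$ in the denominator: the signs of $n_{z\ell}/(\omega_\ell^2-\omega^2)$ change across each $\omega_\ell$, so one must ensure that the resonant contribution to $\omega S$ is not cancelled by the residual. The remedy is either to shrink the resonant window, so that $|S_L(\omega)|$ majorises the uniform bound on $|S(\omega)-S_L(\omega)|$ obtained from \eqref{eq:sequence_nzk}--\eqref{23}, or to absorb a portion of the residual into the $\kappa_0$ term via a triangle-inequality argument. Once this technical point is settled, the constants $\kappa_0$ and $\kappa_1$ depend only on $\mu$, $\hat\omega$, $\bar N_z$ and $n_{z1}$, and the assembly into \eqref{eq:transfer_function_bound} is immediate.
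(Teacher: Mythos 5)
Your derivation of the identity \eqref{eq:transfer_function} is correct and essentially equivalent to the paper's (the scalar self-consistency relation for $y=M^\top N_z\zeta$ is the Woodbury/matrix-inversion-lemma computation in disguise), so that part is fine. The gap is in the bound \eqref{eq:transfer_function_bound}, and it sits exactly at the point you flag as the "main obstacle": your proposed remedy does not close the argument.

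Concretely, you split $[\omega_L-\widehat\omega/2,\omega_L+\widehat\omega/2]$ into a resonant core, where the denominator dominance $D\geq c\,\mu^2\omega^2 n_{zL}^2/(\omega_L^2-\omega^2)^2$ holds, and its complement. Your criterion for the core is that $|S_L|$ majorise the \emph{unnormalised} residual $\big|\sum_{\ell\neq L}n_{z\ell}/(\omega_\ell^2-\omega^2)\big|\leq \mathbf{S}'$, i.e.\ $|\omega_L^2-\omega^2|\lesssim n_{zL}/\mathbf{S}'$. This window shrinks with $L$ (like $L^{-(1+\varepsilon)}$ for \eqref{eq:nzk}). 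On the annulus $n_{zL}/\mathbf{S}'\lesssim|\omega_L^2-\omega^2|\leq O(1)$ you only have $D\geq 1$, and there the resonant numerator satisfies merely
$$
N_L=\frac{(\omega^2+\omega_L^2)\,n_{zL}}{(\omega_L^2-\omega^2)^2}\;\lesssim\;\frac{\omega^2}{n_{zL}}\;\lesssim\;\frac{L^2}{n_{z1}}\,\omega^2\;\sim\;\omega^4,
$$
which destroys the claimed $\kappa_0+\kappa_1\omega^2$ bound (and an $\omega^4$ bound would not suffice downstream, since Lemma 4 only has $\int|\dot v|^2$ available). To make the two regimes meet one needs the denominator-dominance window to have width \emph{independent of $n_{zL}$} in the $\omega^2$ variable, because only then is $|\omega_L^2-\omega^2|$ bounded below by a constant on the complement, giving $N_L=O(\omega^2 n_{zL})=O(\omega^2)$ there. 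That in turn requires bounding the residual of $S$ \emph{normalised by} $n_{zL}$, i.e.\ showing $\frac{1}{n_{zL}}\sum_{\ell\neq L}\frac{n_{z\ell}}{|\omega_\ell^2-\omega^2|}\leq\mathbf{S}$ uniformly in $L$ and $n_o$. Since $n_{z\ell}/n_{zL}$ can be as large as $L^2/\ell^2$, this is not a consequence of summability alone; it is precisely where the structural hypotheses \eqref{27}, \eqref{26}, \eqref{23} enter, via the integral-comparison estimates that occupy most of the paper's supplementary Lemma (the quantities $\underline S_m$, $\overline S_m$ and the constant $\mathbf{S}$ in \eqref{29}). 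Your sketch replaces this hard estimate with an easy one and thereby picks the wrong window; the alternative remedy you mention ("absorb a portion of the residual into the $\kappa_0$ term") cannot work either, since the problematic residual lives in the denominator, not in the non-resonant part of the numerator.
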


\begin{proof}
Consider  the 
change of coordinates $\eta\mapsto y := N_z^{\frac12}\zeta$
giving
$$
\dot y = (\Phi -\mu N_z^{\frac12} M M^\top 
N_z^{\frac12})y - N_z^{\frac12}M v.
$$
By defining with $G(\im\omega)$
the transfer function between $v$ and $y$,
it is readily seen that it can be computed as
$$
G(\im\omega) = \left[\im\omega I - \Phi +\mu N_z^{\frac12} M M^\top N_z^{\frac12}\right]^{-1}N_z^{\frac12}M
$$
and its transpose conjugate given by 
$$
G^*(\im\omega) = M^\top N_z^{\frac12}\left[-\im\omega I + \Phi +\mu N_z^{\frac12} M M^\top N_z^{\frac12}\right]^{-1}
$$
where we used the fact that $\Phi^\top = - \Phi$.
By temporarily using the compact notation
$\Lambda := \im\omega I -\Phi$, and  
$\Upsilon := \mu^{\frac12} N_z^{\frac12}M$,
and by 
Woodbury matrix identity\footnote{Also known as 
matrix inversion lemma, the Woodbury matrix identity states that 
for $A$ invertible and $u,v$ column vectors, the following holds:
$(A+uv^\top)^{-1}= A^{-1} - \tfrac{1}{1+v^\top A^{-1}u}A^{-1}u v^\top A^{-1}$.} (recall that $\Upsilon $ is a vector), 
we obtain
$$
\begin{array}{rcl}
G^*(\im\omega)G(\im\omega)
& = & \dfrac{1}{\mu}
\Upsilon^\top(-\Lambda + \Upsilon\Upsilon^\top)^{-1}(\Lambda+\Upsilon\Upsilon^\top)^{-1}\Upsilon
\\
& = &\dfrac{1}{\mu}
 \Upsilon^\top\left[
 (-\Lambda)^{-1} - \dfrac{\Lambda^{-1}\Upsilon\Upsilon^\top \Lambda^{-1}}{1+\Upsilon^\top(-\Lambda^{-1})\Upsilon}
 \right]
 \left[
 \Lambda^{-1} - \dfrac{\Lambda^{-1}\Upsilon\Upsilon^\top \Lambda^{-1}}{1+\Upsilon^\top \Lambda^{-1}\Upsilon}
 \right]\Upsilon
 \\
 & = &\dfrac{1}{\mu}
  \dfrac{- \Upsilon^\top \Lambda}{1-\Upsilon^\top \Lambda^{-1}\Upsilon}\; \dfrac{\Lambda \Upsilon}{1+\Upsilon^\top \Lambda^{-1}\Upsilon}
\; = \; -\dfrac{1}{\mu}
 \dfrac{\Upsilon^\top \Lambda^{-2}\Upsilon}{1-(\Upsilon^\top \Lambda^{-1}\Upsilon)^2}
  \\
 & = & 
\dfrac{-M^\top N_z^{\frac12}(\im\omega I-\Phi)^{-2}N_z^{\frac12} M}
{1-\mu^2(M^\top N_z^{\frac12}(\im\omega I-\Phi)^{-1}N_z^{\frac12}M)^2 }.
\end{array}
$$
Now, note that we have
algebraically
$$
\begin{array}{l}
(\im\omega I-\Phi_\ell)^{-1} =
\begin{pmatrix}
\im\omega & -\omega_\ell
\\
\omega_\ell & \im\omega
\end{pmatrix}^{-1}
= 
\dfrac{1}{\omega_\ell^2-\omega^2}
\begin{pmatrix}
\im\omega & \omega_\ell
\\
-\omega_\ell & \im\omega
\end{pmatrix}
\\
\\
M_\ell ^\top N_{z\ell}^{\frac12} (\im\omega I-\Phi_\ell)^{-1}
 N_{z\ell}^{\frac12} M_\ell
=
\dfrac{\im\omega}{\omega_\ell^2-\omega^2} |M_\ell|^2 n_{z\ell}
\  .
\end{array}
$$
Moreover, 
$$
\begin{array}{l}
(\im\omega I-\Phi_\ell)^{-2} =
\begin{pmatrix}
\im\omega & -\omega_\ell
\\
\omega_\ell & \im\omega
\end{pmatrix}^{-2}
= 
\dfrac{-1}{(\omega_\ell^2-\omega^2)^2}
\begin{pmatrix}
\omega^2+\omega_\ell^2 & -2\im\omega\omega_\ell
\\
2\im\omega\omega_\ell & \omega^2+\omega_\ell^2
\end{pmatrix}
\\
\\
M_\ell ^\top N_{z\ell}^{\frac12} (\im\omega I-\Phi_\ell)^{-2}
 N_{z\ell}^{\frac12} M_\ell
=
-\dfrac{\omega^2+\omega_\ell^2}{(\omega_\ell^2-\omega^2)^2} |M_\ell|^2 n_{z\ell}
\  .
\end{array}
$$
This yields
$$
M_\ell ^\top N_{z\ell}^{\frac12} (\im\omega I-\Phi_\ell)^{-1}
 N_{z\ell}^{\frac12} M_\ell
\;=\; \im\sum_{\ell=0}^{n_o} \frac{\omega}
{\omega_\ell^2-\omega^2 }|M_\ell |^2 n_{z\ell}
$$
$$
M_\ell ^\top N_{z\ell}^{\frac12} (\im\omega I-\Phi_\ell)^{-2}
 N_{z\ell}^{\frac12} M_\ell
\;=\;- \sum_{\ell=0}^{n_o} \frac{\omega^2+\omega_\ell^2}
{(\omega_\ell^2-\omega^2)^2 }|M_\ell |^2 n_{z\ell}
$$
which finally gives 
the expression \eqref{eq:transfer_function}
in which we used also the definition $M_\ell=(1, 0)$
and the block diagonal form of the matrix $\Phi$.

%
%

We are left with proving inequality \eqref{eq:transfer_function_bound}.
By letting
$x= \frac{\omega }{\widehat\omega}$, and recalling 
\eqref{eq:basic_omega}, 
the fraction in \eqref{eq:transfer_function} reads
\begin{equation}
\label{eq:zeta_pk_simplified}
\mathcursive{T}(x)\;=\; \dfrac{\displaystyle\sum_{\ell=0}^{n_o} \frac{\omega^2+\omega_\ell^2}
{(\omega_\ell^2-\omega^2)^2 } n_{z\ell}}
{1 + \displaystyle\mu^2 \left(\sum_{\ell=0}^{n_o} \frac{\omega}
{\omega_\ell^2-\omega^2 } n_{z\ell}\right)^2} 
\;=\; 
\dfrac{\displaystyle
\sum_{\ell=0}^{n_o} 
\frac{\ell^2+x^2}{(\ell^2 - x^2)^2}n_{z\ell}
}
{\displaystyle
1 + \mu^2   x^2
\left(
\sum_{\ell=0}^{n_o}
 \frac{1}{\ell^2-x^2} n_{z\ell}
\right)^2
} \ .
\end{equation}
The rest of the proof follows by direct application of 
Lemma~\ref{lemma:transfer_function} given in the 
Supplementary Material at the end of this article.
\qed
\end{proof}

\begin{remark}
The expression  
\eqref{eq:transfer_function} can be also rewritten as 
\begin{equation}
\label{eq:transfer_function2}
 \dfrac{\bar{\zeta }(\im \omega) ^\top N_z \zeta (\im \omega)}{|v(\im \omega)|^2}  
=
\frac{\displaystyle 
\sum_{\ell=0}^{n_o}
\left(\omega _\ell ^2 + \omega^2\right)\,  n_{z\ell}
\left[\prod_{m=0,\neq \ell}^{n_o}\left(
\omega_m^2-\omega^2\right)\right]^2
}{\displaystyle 
\left[\prod_{m=0}^{n_o}\left(
\omega_m^2-\omega^2\right)\right]^2
+
\mu ^2 \omega^2 \left[\sum_{\ell=0}^{n_o}
n_{z\ell}
\prod_{m=0,\neq \ell}^{n_o}\left(
\omega_m^2-\omega^2\right)\right]^2
}  
\end{equation}
showing that there is actually no singularity at $\omega = \omega_\ell$, and we have in particular
$$
 \dfrac{\bar{\zeta }(\im \omega _\ell) ^\top N_z \zeta (\im \omega_\ell)}{|v(\im \omega_\ell)|^2}  
=
\frac{\displaystyle 2
}{\displaystyle 
\mu ^2 
n_{z\ell}
}   \ .
$$
\end{remark}

\begin{lemma}
\label{lemma:periodic_zeta}
Consider
again
system
\begin{equation}
\label{pf:subsys_zeta2}
\dot \zeta = (\Phi- \mu MM^\top N_z)\zeta - Mv(t)
\end{equation}
with $v\in \C^1_T(\RR)$ and $\Phi$, $M, N_z$ be 
defined as in \eqref{eq:defPhiM}  and $n_{zk}$ 
satisfying \eqref{eq:sequence_nzk}, e.g. as in \eqref{eq:nzk}.
For any $n_o\in \NN$, it has a unique periodic solution $\zeta_p$ satisfying
\begin{equation}
\label{pf:L2L2gain_zeta}
\mu  ^2\,  \int_0^T |M^\top N_z\zeta _p (t)| ^2 dt \leq   \int_0^T|v(t)|^2 dt,
\end{equation}
\begin{equation}
\label{pf:L2_zeta}
\sup_{t\in[0,T]}
\zeta_p(t)^\top N_z \zeta_p(t)
\; \leq  \;
 \left(\dfrac{1}\mu +  \dfrac{\kappa_0}{T} \right) \int_0^T|v(t)|^2 dt 
 + \dfrac{\kappa_1}{T}\int_0^T|\dot v(t)|^2 dt 
\end{equation}
with $\kappa_0,\kappa_1$ given by Lemma~\ref{lemma:transfer_zeta}.
\end{lemma}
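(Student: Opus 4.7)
The plan is to combine the Lyapunov function $V(\zeta) := \zeta^\top N_z \zeta$ with the harmonic-by-harmonic transfer bound \eqref{eq:transfer_function_bound} already proved in Lemma~\ref{lemma:transfer_zeta}. For existence and uniqueness of a $T$-periodic solution $\zeta_p$, I would invoke the fact that Lemma~\ref{lemma:hurwitzPhiM} makes $A := \Phi - \mu MM^\top N_z$ Hurwitz: the Poincar\'e map $\zeta_0 \mapsto e^{AT}\zeta_0 + w(v)$, where $w(v)$ is the forced response over one period, is then a strict contraction and has a unique fixed point, which gives $\zeta_p$.

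For the $L^2$-to-$L^2$ bound \eqref{pf:L2L2gain_zeta}, I would differentiate $V$ along \eqref{pf:subsys_zeta2} and use $N_z \Phi + \Phi^\top N_z = 0$ (an identity already established inside the proof of Lemma~\ref{lemma:hurwitzPhiM}) to obtain
$$\dot V \;=\; -2\mu\, y^2 \;-\; 2\, y\, v, \qquad y := M^\top N_z \zeta_p.$$
Integrating over one period and exploiting $V(T) = V(0)$ gives $\mu \int_0^T y^2\, dt = -\int_0^T y v\, dt$, from which Cauchy--Schwarz immediately yields $\mu^2 \int_0^T y^2\, dt \le \int_0^T v^2\, dt$, i.e.\ \eqref{pf:L2L2gain_zeta}.

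For the pointwise estimate \eqref{pf:L2_zeta} I would proceed in two independent steps. \emph{Step 1:} completing the square in the above expression for $\dot V$ produces $\dot V \le v^2/\mu$. Letting $t_0 \in [0,T]$ be a minimizer of $V$, one has $V(t_0) \le \bar V := (1/T)\int_0^T V(s)\, ds$; integrating $\dot V$ from $t_0$ to any other time, using $T$-periodicity to wrap around if needed, yields $V(t) \le \bar V + (1/\mu)\int_0^T v^2\, ds$ for every $t$. \emph{Step 2:} to bound $\bar V$, expand $v$ and $\zeta_p$ in Fourier series with base frequency $\widehat\omega$; applying \eqref{eq:transfer_function_bound} harmonic-by-harmonic to each Fourier coefficient and invoking Parseval's identity gives $\int_0^T V\, ds \le \kappa_0 \int_0^T v^2\, ds + \kappa_1 \int_0^T |\dot v|^2\, ds$. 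Adding the two bounds produces \eqref{pf:L2_zeta}.

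The delicate point, which is really what the lemma is designed to secure, is that every constant in the final estimate must be independent of $n_o$. The completing-the-square step contributes only $1/\mu$, so it is safe. A direct Gronwall-type bound on the trajectory, however, would incur a decay rate inherited from the Hurwitz property of $\Phi - \mu MM^\top N_z$, and that rate shrinks to zero as $n_o \to \infty$ (as was noted after Lemma~\ref{lemma:hurwitzPhiM}). Routing the average estimate through \eqref{eq:transfer_function_bound}---whose constants $\kappa_0,\kappa_1$ were explicitly shown to be $n_o$-independent---and then through Parseval, which simply sums over the harmonics and so automatically preserves this uniformity, is what makes the final bound \eqref{pf:L2_zeta} uniform in $n_o$.
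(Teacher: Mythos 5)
Your proposal is correct and follows essentially the same route as the paper's proof: existence/uniqueness from the Hurwitz property of $\Phi-\mu MM^\top N_z$, the $L^2$ gain \eqref{pf:L2L2gain_zeta} from integrating $\dot V$ with $V=\zeta^\top N_z\zeta$ over one period (the paper uses Young's inequality where you use Cauchy--Schwarz, which is equivalent here), and the sup bound \eqref{pf:L2_zeta} obtained exactly as you describe, by adding the oscillation estimate $\sup V \le \bar V + \tfrac{1}{\mu}\int_0^T v^2$ to the mean estimate coming from Parseval and the $n_o$-uniform transfer-function bound \eqref{eq:transfer_function_bound}. Your closing remark about why the average must be routed through $\kappa_0,\kappa_1$ rather than through the (vanishing) decay rate is precisely the point of the paper's construction.
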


\begin{proof}
According to Lemma \ref{lemma:hurwitzPhiM}, 
the matrix $\Phi- \mu MM^\top N_z$ is Hurwitz. 
Hence (see Lemma~\ref{lemma:periodic-solution}) 
system \eqref{pf:subsys_zeta2} admits a unique periodic solution 
given by 
$$
\zeta_p(t)= \Psi(\Phi- \mu MM^\top N_z, -Mv(t))
$$
with $\Psi$ defined below in \eqref{3}.
Then, compute
\begin{equation}\label{pf_lemma_eq0}
\begin{array}{rcl}
\frac{1}{2}\dot{\overparen{\zeta _p (t)^\top N_z \zeta _p (t)}}
 &=&\zeta _p (t)^\top N_z [\Phi- \mu MM^\top N_z] \zeta _p (t) - \zeta _p (t)^\top N_z M v(t)
\\[.5em]
& =& -\mu |M^\top N_z\zeta _p (t)| ^2 - \zeta _p (t)^\top N_z M v(t)
\\[.5em]
& \leq  &
-\frac{\mu}{2}|M^\top N_z\zeta _p (t)| ^2+
 \frac{1}{2\mu } |v(t)|^2.
\end{array}
\end{equation}
By integrating and using periodicity, this yields
to \eqref{pf:L2L2gain_zeta}. 

In order to show the second
inequality
of the 
statement of the lemma, the function $v$ being in $\C^1_T(\RR)$ can be expressed as the sum
of its Fourier series as
$$
v(t) =  \sum_{k\in \ZZ}v_k \exp(\im k\tfrac{2\pi}{T} t).
$$
By using the Bessel-Parseval identity
and continuity of the derivative $\dot v$, we have
\begin{eqnarray}\label{pf_lemma_eq1}
\left[{\textstyle\frac{2\pi}{T}}\right]^2 \sum_{k=0}^\infty  k^2 |v_k|^2 
\leq \dfrac{1}{T} \int_0^T |\dot v(t)|^2 dt.
\end{eqnarray}
We have similarly
\begin{equation}\label{pf_lemma_eq2}
\dfrac{1}{T}\int_{0}^T \zeta_p(t)^\top N_z \zeta_p(t) dt =
\sum_{k=0}^\infty
\bar{\zeta }_{pk} ^\top N_z \zeta _{pk},
\quad
\zeta_{pk} = \dfrac{1}{T}\int_{0}^{T}
\zeta_p(t)\exp(\im k\tfrac{2\pi}{T} t)dt .
\end{equation}
Now, by applying the expression of the transfer function
 \eqref{eq:transfer_function} to each element 
 of the Fourier series of the product
 $\zeta_p(t)^\top N_z \zeta_p(t)$, we  obtain 
 $$
\bar{\zeta }_{pk} ^\top N_z \zeta _{pk}  = 
\dfrac{\displaystyle\sum_{\ell=0}^{n_o} 
\frac{\omega_\ell^2+\left[{\textstyle\frac{2\pi}{T}}\right]^2 k^2}
{(\omega_\ell^2-\left[{\textstyle\frac{2\pi}{T}}\right]^2 k^2)^2 } n_{z\ell}}
{1 + \displaystyle\mu^2 \left(\sum_{\ell=0}^{n_o} \frac{\left[{\textstyle\frac{2\pi}{T}}\right]^2 k^2}
{\omega_\ell^2-\left[{\textstyle\frac{2\pi}{T}}\right]^2 k^2 } n_{z\ell}\right)^2} \; |v_k|^2.
 $$
As a consequence, by using inequality
\eqref{eq:transfer_function_bound} on previous
identity, 
and by using again equations 
\eqref{pf_lemma_eq1} and \eqref{pf_lemma_eq2},
we further obtain
\begin{align}
\sum_{k=0}^\infty \notag
\bar{\zeta }_{pk} ^\top N_z \zeta _{pk} 
 &\leq  \sum_{k=0}^\infty
\left(\kappa_0 + \kappa_1 \left[{\textstyle\frac{2\pi}{T}}\right]^2 k^2\right) |v_{k}|^2
\\ 
& \leq  \dfrac{\kappa_0}{T} \int_0^T |v(s)|^2ds
+ \dfrac{\kappa_1}{T}\int_0^T |\dot v(s)|^2ds \ .
\label{pf_lemma_eq3}
\end{align}
We note also that, as a consequence of \eqref{pf_lemma_eq0},
we have
$$
\displaystyle \sup_{t\in [0,T]}
\zeta _p(t)^\top N_z \zeta _p(t) 
\;-\; 
\displaystyle \inf_{t\in [0,T]}
\zeta _p(t)^\top N_z \zeta _p(t) 
\; \leq \; \frac{1}{\mu }\,  \int_0^T |v(s)|^2 ds
\  .
$$
On another hand, we have
$$
\displaystyle \inf_{t\in [0,T]}\zeta _p(t)^\top N_z \zeta _p(t)
\; \leq \;  \frac{1}{T}\,  \int_{0}^{T} \zeta _p(t)^\top N_z \zeta _p(t) ds
\  .
$$
This yields
\begin{equation}
\label{pf_lemma_eq4}
\displaystyle \sup_{t\in [0,T]}
\zeta _p(t)^\top N_z \zeta _p(t) 
\; \leq \; 
\frac{1}{T}\,  \int_{0}^{T} \zeta _p(t)^\top N_z \zeta _p(t) ds
\;+\; 
\frac{1}{\mu }\,  \int_0^T |v(s)|^2 ds
\  ,
\end{equation}
Finally, 
by combining \eqref{pf_lemma_eq4}
with  \eqref{pf_lemma_eq2} and \eqref{pf_lemma_eq3},
we obtain
\eqref{pf:L2_zeta}. \qed
\end{proof}

\subsection{Existence of a Periodic Solution}

\begin{proposition}\label{prop:periodic_solution}
For any triplet $(\overline P_x,\underline P_x,\alpha)$,
there exist
strictly positive
real numbers  $\boundx_p, \bounde_p,\boundzeta_p,\boundr_x, \boundr_e,\sigma ^\star_p>0$
(independent of $n_o$)
 such that, for any $n_o>0$, any $\sigma>\sigma ^\star_p$,
and $\mu\geq1$, system 
\eqref{eq:closed-loop},
with $f\in \F(\boundx_p, \bounde_p,\overline P_x,\underline P_x,\alpha)$ and $q\in\Q(\boundx_p, \bounde_p)$,
admits 
a $T$-periodic solution 
$(x_p, e_p, \zeta_p) \in \C^2_T([0,T];\RR^n\times \RR \times \RR^{2n_o+1})$
 satisfying, for all $t\in[0,T]$, 
 the following inequalities
\begin{subequations}\label{eq:prop_periodic_bounds_xezeta}
\begin{align}\label{eq:prop_periodic_bound_x}
|x_p(t)|\leq & \min\left\{ \boundx_p,  \frac{\boundr_x}{\sigma}
\right\} \ ,
\\ \label{eq:prop_periodic_bound_e}
|e_p(t)|\leq & \min\left\{ \bounde_p,  \frac{\boundr_e}{\sigma} \right\} \ ,
\\ \label{eq:prop_periodic_bound_zeta}
\sqrt{\zeta_p(t)^\top N_z \zeta_p(t)} \leq  & \boundzeta_p \ .
\end{align}
\end{subequations} 
 \end{proposition}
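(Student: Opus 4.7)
The plan is to apply Schauder's fixed-point theorem on the Banach space $\C^1_T(\RR^n\times\RR)$, using the decomposition \eqref{2}--\eqref{1} suggested by the authors. Concretely, I would define a nonlinear operator $\Psi:(x^-,e^-)\mapsto (x^+,e^+)$ obtained by first integrating the $\zeta$-subsystem \eqref{1} driven by $v(t):=q(t,x^-(t),e^-(t))$ to get its unique $T$-periodic response $\zeta_p$, then plugging $\eta:=\mu M^\top N_z\zeta_p$ into \eqref{2} and taking its unique $T$-periodic response $(x^+,e^+)$. Any fixed point of $\Psi$ is a periodic solution of \eqref{eq:closed-loop}.

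First, given $(x^-,e^-)$ bounded by $\boundx_p,\bounde_p$, the bounds in family $\Q$ yield uniform estimates on $v$ and on $\dot v=\frac{\partial q}{\partial t}+\frac{\partial q}{\partial x}\dot x^-+\frac{\partial q}{\partial e}\dot e^-$, depending only on $(\boundx_p,\bounde_p)$. Lemma~\ref{lemma:periodic_zeta} then produces a unique periodic $\zeta_p$ satisfying \eqref{pf:L2_zeta}, giving a bound on $\sup_t\sqrt{\zeta_p^\top N_z\zeta_p}$ whose constants $\kappa_0,\kappa_1$ are independent of $n_o$. Since by Cauchy--Schwarz $|\eta|\leq\mu\sqrt{M^\top N_z M}\sqrt{\zeta_p^\top N_z\zeta_p}$ and, thanks to \eqref{eq:sequence_nzk}, $M^\top N_z M=n_{z0}+\sum_{\ell=1}^{n_o}n_{z\ell}\leq \overline N_z^{\,2}$, the bound on $\eta$ is uniform in $n_o$ as well. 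This takes care of the estimate \eqref{eq:prop_periodic_bound_zeta} and defines $\boundzeta_p$.

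Next, I would estimate the outputs of \eqref{2}. For the $x^+$ subsystem, the Lyapunov function $V(t,x)=x^\top P_x(t)x$ combined with \eqref{eq:Pass1}--\eqref{eq:Pass2} shows that the unique periodic solution driven by $g(t):=f(t,x^-,e^-)-\tfrac{\partial f}{\partial x}(t,0,0)x^-$ satisfies $\sup_t|x_p^+|\leq c_x\sup_t|g(t)|$ with $c_x$ depending only on $\alpha,\overline P_x,\underline P_x$. A second-order Taylor expansion bounds $|g(t)|$ by $\boundf_e\bounde_p+\tfrac12\boundf_{xx}\boundx_p^{\,2}+\boundf_{ex}\boundx_p\bounde_p$. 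The scalar $e^+$-equation being $\dot e^+=-\sigma e^+ +[q(t,x^-,e^-)+\eta]$, its unique periodic solution satisfies $\sup_t|e_p^+|\leq \tfrac{1}{\sigma}(\boundq+\sup|\eta|)$. Hence, choosing $\bounde_p$ (and consequently $\boundr_e$) so that $\bounde_p\geq(\boundq+\sup|\eta|)/\sigma^\star_p$, and then $\boundx_p$ so that $c_x\,[\boundf_e\bounde_p+\tfrac12\boundf_{xx}\boundx_p^{\,2}+\boundf_{ex}\boundx_p\bounde_p]\leq\boundx_p$, renders the ball $K:=\{(x^-,e^-)\in\C^1_T:\sup|x^-|\leq\boundx_p,\sup|e^-|\leq\bounde_p\}$ invariant under $\Psi$. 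The quadratic-in-$\boundx_p$ term is handled by picking $\boundx_p$ on the small root of the corresponding polynomial, which also yields the scaling $\boundx_p=O(1/\sigma)$ and hence the bounds $\boundr_x/\sigma,\boundr_e/\sigma$ in \eqref{eq:prop_periodic_bound_x}--\eqref{eq:prop_periodic_bound_e}.

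Finally, continuity of $\Psi$ in the $\C^0_T$ topology follows from standard Gronwall estimates on \eqref{1} and \eqref{2}, while compactness comes from the fact that on $K$ the image $(x_p^+,e_p^+)$ has a time-derivative uniformly bounded (by the bounds on $f$, $q$, $\eta$ and the linear dynamics), so by Arzel\`a--Ascoli $\Psi(K)$ is relatively compact in $\C^0_T$. Schauder's theorem then produces a fixed point $(x_p,e_p)\in K$; together with the associated $\zeta_p$ this yields the desired $T$-periodic solution satisfying \eqref{eq:prop_periodic_bounds_xezeta}. The main obstacle is to carry out Step~3 with constants that are \emph{simultaneously} independent of $n_o$ and large enough to absorb the quadratic Taylor remainders; this forces a coupled choice of $\boundx_p,\bounde_p,\sigma^\star_p$ at the beginning of the argument, and it is where the uniform bound $M^\top N_z M\leq \overline N_z^{\,2}$ provided by \eqref{eq:sequence_nzk} plays its essential role.
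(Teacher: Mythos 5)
Your overall architecture (decompose into \eqref{2}--\eqref{1}, build the map sending $(x^-,e^-)$ to the periodic responses, find a fixed point in a ball whose radius is independent of $n_o$) matches the paper's, but you use Schauder where the paper uses Banach's contraction theorem with a Perron--Frobenius weighted norm. That substitution is legitimate in principle (the proposition only claims existence), but your proposal has a concrete gap in the one step where $n_o$-uniformity is actually delicate: the bound on $\sup_t|\eta(t)|$. You invoke Lemma~\ref{lemma:periodic_zeta} and inequality \eqref{pf:L2_zeta} to bound $\sup_t\sqrt{\zeta_p^\top N_z\zeta_p}$ uniformly in $n_o$, but \eqref{pf:L2_zeta} requires $\int_0^T|\dot v|^2$ with $v(t)=q(t,x^-(t),e^-(t))$, hence derivative bounds on $(x^-,e^-)$. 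Your set $K$ carries only sup-norm constraints, so this estimate is unavailable on $K$; a sup-norm bound on $\zeta_p$ that avoids $\dot v$ would have to go through the decay rate of $\Phi-\mu MM^\top N_z$, which degrades as $n_o\to\infty$ (its trace stays bounded while its dimension grows), destroying uniformity. Enlarging $K$ to a ball in $\C^1_T$ does not obviously fix this: the self-consistency condition on the derivative bound of $e^+$ couples back into $\sup|\eta|$ linearly through $\dot v$, and compactness would then require equicontinuity of derivatives. The paper sidesteps all of this in the invariance/contraction step by never bounding $\eta$ pointwise: it uses only the $n_o$-uniform $L^2$-to-$L^2$ gain \eqref{pf:L2L2gain_zeta} ($\int|\eta|^2\leq\int|v|^2$) together with the $L^2$-to-$L^\infty$ estimate \eqref{11} for the scalar $-\sigma$ subsystem, which needs no derivative information. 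The pointwise bound \eqref{eq:prop_periodic_bound_zeta} is then obtained \emph{a posteriori}, once the fixed point exists and $\dot q$ along it can be computed from the closed-loop equations.

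A secondary issue: you conflate the a priori radius of the invariant ball with the a posteriori $O(1/\sigma)$ estimates. In the statement, $\boundx_p,\bounde_p$ also parametrize the function classes $\F(\boundx_p,\bounde_p,\cdot)$ and $\Q(\boundx_p,\bounde_p)$, so they must be fixed before $\sigma$ (the paper fixes them, then chooses $\sigma^\star_p$, and only afterwards derives $\sup|e_p|\leq\boundr_e/\sigma$ and $\sup|x_p|\leq\boundr_x/\sigma$ directly from the fixed-point equations). Choosing the small root of your quadratic so that $\boundx_p=O(1/\sigma)$ would make the hypothesis classes shrink with $\sigma$ and would not deliver \eqref{eq:prop_periodic_bound_e} in the form stated. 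Fix the radii first, derive the $1/\sigma$ decay afterwards.
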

 
\begin{proof}
Here we exploit the ability of expressing the system
 \eqref{eq:closed-loop} as in \eqref{2} and \eqref{1}.
With the notations
\begin{equation}
\label{eq:f_decomposition}
\delta f(t,x,e):= f(t,x,e) - F(t)x, 
\qquad
F(t) :=  \frac{\partial f}{\partial x}(t,0,0),
\end{equation}
system 
\eqref{eq:closed-loop} reads as follows when $(x^+,e^+)=(x^-,e^-)$
\begin{eqnarray}
\label{4}
\dot x ^+& = &F(t)x^+ + \delta f(t,x^-,e^-)
\\\label{5}
\dot e^+ & = &- \sigma e^+ \;+\;  \left[q(t,x^-,e^-) + \eta  \right]
,
\end{eqnarray}
where
\begin{equation}
\label{6}
\dot \zeta  \;=\;  (\Phi - \mu M M^\top N_z)\zeta  - Mq(t,x^-,e^-)
\quad ,\qquad  \eta \;=\; \mu M^\top N_z \zeta 
\end{equation}
Written this way, we see that we have a mapping from the input functions $(x^-,e^-)$ to the functions
$(x^+,e^+)$, solution of (\ref{4}), (\ref{5}), with the intermediate function $\zeta $ solution of (\ref{6}).

Each subsystem above can be compactly written as~:
\begin{equation}
\label{7}
\dot \chi = F(t)\chi + g(t)
\end{equation}
where $F$ is $T$-periodic. We have the following very standard result.
See, e.g., \cite[Lemma 5.1]{Hale92} or \cite[Chapter 8.2]{miller1982ordinary}.

\begin{lemma}\label{lemma:periodic-solution}
Consider system  \eqref{7} with $g$ is in $\C^0_T(\RR^n)$.
If the matrix $[I - \phi_F(t,t-T)]$ is invertible with $\phi_F(t,s)$ denoting the state 
transition matrix  of $F$, system \eqref{7} admits a unique periodic solution $\chi _p$
which can be expressed as
\begin{equation}
\label{3}
\chi_p(t) = \Psi(F(t),g(t)) : =   [I - \phi_F(t,t-T)]^{-1}
\int_{t-T}^{t}\phi_F(t,s)g(s)ds.
\end{equation}
Furthermore, if there exists a $T$-periodic
function $P_\chi$
satisfying
  \begin{eqnarray}
0  \; < \; \underline P_\chi  I \; \leq \;  P_\chi (t) \leq 
\;  \overline P_\chi  I \,,
\label{app:Pass1}
\\[.5em]
\label{app:Pass2}
\dot P_\chi (t) + P_\chi (t) F(t)+ 
 F^\top(t) P_\chi (t) \; \leq  \;  - 2\alpha P_\chi \,,
  \end{eqnarray}
then we have
\begin{align}\label{8}
\sup_{t\in[0,T]}|\Psi(F(t),g(t))|
& \leq  \dfrac{K_\alpha}\alpha \sup_{t\in[0,T]} |g(t)|
\\ \label{11}
\sup_{t\in[0,T]}|\Psi(F(t),g(t))|
& \leq \dfrac{K_{\alpha,2}}{\sqrt{\alpha}}
\sup_{t\in[0,T]} \sqrt{\int_0^T g(t)^2dt}
\end{align}
where 
\begin{eqnarray}\label{app:Kalpha1}
K_\alpha
&=& \sqrt{\frac{\overline{P}_\chi  }{\underline{P}_\chi }}
 \left(\exp(\alpha T)-1+\sqrt{\frac{\overline{P}_\chi  }{\underline{P}_\chi }} \right)\exp(-\alpha T),
\\
 K_{\alpha,2}
&=& K_\alpha 
 \sqrt{\dfrac{T(1+\exp(-\alpha T))}{2(1-\exp(-\alpha T))}}.
 \label{app:Kalpha2}
\end{eqnarray}
\end{lemma}

Being interested in periodic solutions for (\ref{4}), (\ref{5}) and (\ref{6}), this leads to the consideration of the following operators
\begin{eqnarray}\label{eq:operator_x}
\displaystyle 
\operator{ x^-,e^-}_x(t)&=& 
\Psi\left(
\vrule height 0.5em depth 0.5em width 0pt
F(t)\,  ,\,  \delta{f}(t, x^-(t),e^-(t))\right),
\\ \label{9}
\displaystyle
\operator{x^-,e^-}_e(t)&=& \Psi\left(
\vrule height 0.5em depth 0.5em width 0pt
-\sigma \,  ,\,  q(t, x^-(t), e^-(t))+\eta [x^-,e^-](t)\right) ,
\end{eqnarray}
where~:
\begin{equation}
\label{10}
\eta [x^-,e^-](t)\;=\; \mu M^\top N_z \Psi\left(\Phi-\mu MM^\top N_z, -Mq(t,x^-(t),e^-(t)\right).
\end{equation}
When $(x^-,e^-)$ is $T$-periodic, $(\operator{ x^-,e^-}_x,\operator{x^-,e^-}_e)$ is the unique $T$-periodic 
solution of (\ref{4}), (\ref{5}). So to establish our result it is sufficient to show that there exists a 
$T$-periodic function $(x^-,e^-)$ 
satisfying
$$
(\operator{ x^-,e^-}_x,\operator{x^-,e^-}_e)\;=\; (x^-,e^-).
$$
Our next step is,
%
omitting the superscript $^-$ to lighten the notations, to show that the operator
  $(x,e) \mapsto (\operator{ x,e}_x,\operator{ x,e}_e)$
  is a contraction on the set of $T$-periodic functions satisfying
\begin{equation}
\label{eq:temp_bounds_xe}
\sup_{t\in[0,T]}|x(t)|\leq \boundx_p, 
\quad
\sup_{t\in[0,T]}|e(t)|\leq \bounde_p,
\end{equation}
when the bounds
$\boundx_p, \bounde_p$
are chosen small enough, and $\sigma$ is chosen large enough, this independently of $n_o$.

To this end, by recalling the definitions given in 
Section~\ref{sec:notation}, we list 
inequalities for the functions $q$ and $f$, obtained as consequences of the following fact\\[0.5em]
\textit{For any $C^1$ function $\varphi:\RR\times \RR^n \times \RR\to \RR^m$
we have
\begin{align} \notag
\left|\varphi(t,x_a,e)-\varphi(t,x_b,e)\right|
&= \displaystyle
 \left|\left(\int_0^1 \frac{\partial \varphi}{\partial x}(t,x_b+s(x_a-x_b),e)) 
ds\right) (x_a-x_b) \right|
\\
&\leq  \displaystyle
\sup_{(t,x,e)\in\mathcal{S}_T(\boundx,\bounde)} 
\left|\dfrac{\partial \varphi}{\partial x}(t,x,e) \right| 
 |x_a - x_b|
\label{app:ineq_varphi}
\end{align}
for all $x_a,x_b \in \mathcal{S}_x(\boundx)$, 
$e\in \mathcal{S}_e(\bounde)$,
and all $t\in [0,T]$.}

First, we have
\begin{eqnarray}
\label{app:bound_q}
|q(t, x,e)| & \leq & \boundq (\boundx,\bounde)
\\ 
\label{app:bound_q_ab}
| q(t, x_a,e_a)
-
 q(t, x_b,e_b)| 
 &\leq&
 \boundq_e(\boundx,\bounde)\,  |e_a-e_b|
+ 
\boundq_x(\boundx,\bounde)\,  | x_a-  x_b|
\\
\label{app:bound_f_ab}
\left|f(t,x,e_a)-f(t,x,e_b)\right|
&\leq &
\boundf_e(\boundx,\bounde)\,  |e_a-e_b|
\end{eqnarray}
for all $x\in\mathcal{S}_x(\boundx)$, $e\in\mathcal{S}_e(\bounde)$,
$(x_a,e_a)$ and $(x_b,e_b)$ in  $\mathcal{S}(\boundx,\bounde)$, and all $t\in [0,T]$.
Furthermore, by using the  definitions of 
$F$ and $\delta f$ given in \eqref{eq:f_decomposition},  
we obtain
$$
\begin{array}{rcll}
\left|
f(t,x,0)-F(t)  x
\right|
& =&
\displaystyle 
\left|\int_0^1
\left[\left(\frac{\partial f}{\partial x}(t,s x,0)-
\frac{\partial f}{\partial x}(t,0,0))\right)ds\right]  x
\right|
\\
& =&
\displaystyle 
\left|\int_0^1
\left[
\int_0^s
\left[\frac{\partial ^2 f}{\partial x\partial x}(t,r x,0)
dr\right]  x ds\right] x
\right| &
\; \leq \;
\displaystyle 
\frac{1}{2}\boundf_{xx}(\boundx,0)\,  | x|^2.
\end{array}
$$
By combining the previous bound with 
\eqref{app:bound_f_ab} in which $e_a= e$, $e_b=0$, we get
for $\delta f$ defined in \eqref{eq:f_decomposition},
\begin{equation}\label{app:bound_deltaf}
|\delta f(t, x,e)|\; \leq \; \boundf_e(\boundx,\bounde)\,  |e|
\;+\; 
\frac{1}{2}\boundf_{xx}(\boundx,0)\,  | x|^2
\end{equation}
for all 
$(x,e)\in  \mathcal{S}(\boundx,\bounde)$
and all $t\in [0,T]$.
With similar computations, we also obtain
\begin{multline} 
 |\delta f(t, x_a,e_a)-\delta f(t, x_b,e_b)| \leq
 \boundf_e(\boundx,\bounde)\,  |e_a-e_b|
+
\frac{1}{2}\boundf_{xx}(\boundx,0) | x_a- x_b|^2
\\
+ \left[ \boundf_{ex}(\boundx,\bounde) \bounde
+
\boundf_{xx}(\boundx,0) \boundx\right]
 | x_a- x_b|
 \label{app:bound_delta_f_ab}
\end{multline}
for all $(x_a,e_a), (x_b,e_b)\in  \mathcal{S}(\boundx,\bounde)$
and all $t\in [0,T]$.

Now, by using the definition of 
$\operator{x,e}_x$ in \eqref{eq:operator_x}, and the assumptions
\eqref{eq:Pass1} and \eqref{eq:Pass2}, inequality \eqref{8} gives
\begin{align} \notag
\sup_{t\in [0,T]}|\operator{x,e}_x(t)| &\leq \dfrac{K_\alpha}{\alpha}
\sup_{t\in [0,T]}|\delta{f}(t, x(t),e(t))|
\\
&\leq 
\frac{K_\alpha }{\alpha }\!\left[ \boundf_e(\boundx_p,\bounde_p) \bounde_p
+
\frac{1}{2}\boundf_{xx}(\boundx_p,0)\boundx_p^2\right],
\label{eq:bound_op_x}
\end{align}
in which we used \eqref{eq:temp_bounds_xe}
and \eqref{app:bound_deltaf} to derive the second inequality.
Then, note that by definition of \eqref{eq:operator_x}, 
and by exploiting the linearity of the function 
$\Psi$ defined in
\eqref{3},
we have
$$
\begin{array}{rcl}
\operator{ x_a,e_a}_x(t)-\operator{ x_b,e_b}_x(t)
& = &
\Psi\big(F,\delta{f}(t, x_a(t),e_a(t))\big)
- 
\Psi\big(F,\delta{f}(t, x_b(t),e_b(t))\big)
\\
& =&
\Psi\big(F,\delta{f}(t, x_a(t),e_a(t))- \delta{f}(t, x_b(t),e_b(t))\big),
\end{array}
$$
and therefore, by using  \eqref{8}, \eqref{app:bound_delta_f_ab}, 
and 
$\sup_{s\in [0,T]} | x_a(s)- x_b(s)| \leq  2 \boundx_p$, we obtain
\begin{multline}
\label{eq:bound_op_x_ab}
|\operator{ x_a,e_a}_x(t)-\operator{ x_b,e_b}_x(t)|  \leq
\\   
\displaystyle 
\dfrac{K_\alpha}{\alpha }\left[ \boundf_{ex}(\boundx_p,\bounde_p) \bounde_p
+2
\boundf_{xx}(\boundx_p,0) \boundx_p\right]\!\!
 \sup_{s\in [0,T]} | x_a(s)- x_b(s)|
  + 
\dfrac{K_\phi }{\alpha } \boundf_e(\boundx_p, \bounde_p) 
\sup_{s\in[0,T]}
|e_a(s) - e_b(s)|.
\end{multline}
Similarly for 
$\operator{ x,e}_e$ in \eqref{9}, with
$$
\operator{ x,e}_e(t)  \;=\; \Psi \big(-\sigma, q(t,x,e)\big)
+  \Psi\big(-\sigma, \eta  [x,e](t)\big)
$$
and \eqref{11} where
$$
\underline P_\chi = \overline P_\chi  = 1
 ,\qquad 
\alpha =\sigma 
 ,\qquad 
K_\alpha  =1 ,\qquad 
K_{\alpha ,2}=
 \sqrt{\dfrac{T(1+\exp(-\sigma  T))}{2(1-\exp(-\sigma  T))}},
$$
we obtain
$$
\sup_{t\in [0,T]}|\operator{x,e}_e(t)| 
 \leq 
 \dfrac{K_{\alpha,2}}{\sqrt{\sigma}} 
\left[
\sqrt{\int_0^T|q(t,x(t),e(t))|^2 dt}+
\sqrt{\int_0^T|\eta [x,e](t)|^2 dt}
\right].
$$
Then 
\eqref{pf:L2L2gain_zeta}
gives 
\begin{equation*}
\sup_{t\in [0,T]}|\operator{x,e}_e(t)| 
 \leq 
 \dfrac{2K_{\alpha,2}}{\sqrt{\sigma}} 
\sqrt{\int_0^T|q(t,x(t),e(t))|^2 dt}.
\end{equation*}
By using
$\int_0^T|q(t)|^2 dt\leq T \sup_{t\in[0,T]}|q(t)|^2$
and bound
  \eqref{app:bound_q}, 
we finally obtain 
\begin{equation} 
\sup_{t\in [0,T]}|\operator{x,e}_e(t)| 
\leq  \beta(\sigma)
\boundq(\boundx_p,\bounde_p).
\label{eq:bound_op_e}
\end{equation}
in which  $\beta$ is defined as
\begin{equation}
 \beta(\sigma) :=   
\sqrt{\frac{2T(1+\exp(-\sigma T))}{\sigma(1-\exp(-\sigma T))}}.
\label{eq:beta_definition}
\end{equation}
Note that $\beta$ is a continuous strictly decreasing positive function 
for $\sigma>0$. In particular, $\lim_{\sigma\to\infty}\beta(\sigma)= 0$.

 Finally, as done to obtain inequality \eqref{eq:bound_op_x_ab}, 
 we can use linearity of the operator $\Psi$ and 
  inequalities \eqref{eq:bound_op_e} and
    \eqref{app:bound_q_ab}
 to derive
  \begin{multline}
\sup_{t\in [0,T]}|\operator{x_a,e_a}_e(t)-
\operator{x_b,e_b}_e(t)| 
 \leq  \\[-.5em] \beta(\sigma)
\left( \boundq_e(\boundx_p,\bounde_p)\,  \sup_{s\in [0,T]} | e_a(s)- e_b(s)|
+ 
\boundq_x(\boundx_p,\bounde_p)\,   \sup_{s\in [0,T]} | x_a(s)- x_b(s)|\right).
\label{eq:bound_op_e_ab}
\end{multline}
In conclusion, with
\eqref{eq:bound_op_x} and \eqref{eq:bound_op_e}, we have established
that, if $\boundx_p$, $\bounde_p$ 
and $\sigma$ satisfy
\begin{subequations}\label{eq:condition_prop_1_bound}
\begin{eqnarray}
\label{eq:condition_prop_1_bound_xpep}
\dfrac{K_\alpha }{\alpha }\left[ \boundf_e(\boundx_p,\bounde_p)\,  \bounde_p
\;+\; 
\frac{1}{2}\boundf_{xx}(\boundx_p,0)\,\boundx_p^2\right] & \leq & \boundx_p
\  ,
\\ 
\label{eq:condition_prop_1_bound_sigma}
 \beta(\sigma)
\boundq (\boundx_p,\bounde_p)
& \leq & \bounde_p \ ,
\end{eqnarray}
\end{subequations}
with $\beta$ defined in \eqref{eq:beta_definition},
and 
if $(x,e)$ are $T$-periodic continuous functions satisfying \eqref{eq:temp_bounds_xe},
then $(\operator{ x,e}_x,\operator{ x,e}_e)$ are $T$-periodic continuous functions satisfying
\begin{equation}
\label{eq:long_bound1}
\sup_{t\in [0,T]}|\operator{ x,e}_x(t)|\; \leq \; \boundx_p,
\quad
\sup_{t\in [0,T]}|\operator{ x,e}_e(t)|\; \leq \; \bounde_p.
\end{equation}
Similarly, with (\ref{eq:bound_op_x_ab}) and (\ref{eq:bound_op_e_ab}), we have established
that, if $(x_a,e_a)$ and $(x_b,e_b)$ are $T$-periodic continuous functions satisfying (\ref{eq:temp_bounds_xe}),
$$
\left(\begin{array}{c}
|\operator{ x_a,e_a}_x(t)-\operator{ x_b,e_b}_x(t)|
\\
|\operator{ x_a,e_a}_e(t)-\operator{ x_b,e_b}_e(t)|
\end{array}\right)
\; \leq \; \mathfrak{M} (\boundx_p,\bounde_p, \sigma )
\left(\begin{array}{c}
\sup_{s\in[0,T]}
|x_a(s) - x_b(s)|
\\
\sup_{s\in[0,T]}
|e_a(s) - e_b(s)|
\end{array}\right)
$$
with the notation
\begin{equation}
\mathfrak{M} (\boundx_p,\bounde_p, \sigma )\; := \;
\begin{pmatrix}
\ypsilon_{xx} (\boundx_p,\bounde_p, \sigma )& \ypsilon_{xe}(\boundx_p,\bounde_p, \sigma )
\\
\ypsilon_{ex} (\boundx_p,\bounde_p, \sigma )& \ypsilon_{ee}(\boundx_p,\bounde_p, \sigma )
\end{pmatrix}
\label{eq:def_ypsilon}
\end{equation}
where
$\mathfrak{M} $ is the following matrix with strictly positive entries
$$
\renewcommand{\arraystretch}{1.5}
\begin{array}{@{}r@{\,  }c@{\,  }l@{\:  ,\ }r@{\,  }c@{\,  }l@{}}
\ypsilon_{xx} (\boundx_p,\bounde_p, \sigma )& =& \tfrac{K_\phi }{\alpha }\left[ \boundf_{ex}(\boundx_p,\bounde_p)\,  \bounde_p 
+
2\boundf_{xx}(\boundx_p,0)\,\boundx_p\right]
& \qquad
\ypsilon_{xe} (\boundx_p,\bounde_p, \sigma )& = & \tfrac{K_\phi }{\alpha }\boundf_e(\boundx_p,\bounde_p) 
\\
\ypsilon_{ex} (\boundx_p,\bounde_p, \sigma )& = & \beta(\sigma)\boundq_x(\boundx_p,\bounde_p)
& \qquad
\ypsilon_{ee} (\boundx_p,\bounde_p, \sigma )&= & \beta(\sigma)\boundq_e(\boundx_p,\bounde_p)
\end{array}
$$
It follows from Perron-Frobenius theorem that there exist
strictly positive real numbers $p_x(\boundx_p,\bounde_p, \sigma )$, $p_e(\boundx_p,\bounde_p, \sigma )$ and $\gamma(\boundx_p,\bounde_p, \sigma )$, depending on $(\boundx_p,\bounde_p, \sigma )$ and satisfying
\begin{equation}
\label{eq:def_Upsilon_prop_1}
\mathfrak{M}^\top
\begin{pmatrix}
p_x
\\
p_e
\end{pmatrix} = \gamma\,   \begin{pmatrix}
p_x
\\
p_e
\end{pmatrix} .
\end{equation}
Here, $\gamma $ is a simple eigenvalue of $\mathfrak{M}$ and the spectral radius of this matrix. It
 is strictly smaller than $1$ if and only if we have 
\begin{eqnarray*}
1 - [\ypsilon _{xx} + \ypsilon _{ee}] + [\ypsilon _{xx}\ypsilon _{ee}-\ypsilon _{xe}\ypsilon _{ex}]&>&0,
\\
1-[\ypsilon _{xx}\ypsilon _{ee}-\ypsilon _{xe}\ypsilon _{ex}]&>&0
\  ,
\end{eqnarray*}
i.e. 
\begin{subequations}\label{eq:contraction_condition}
\\[1em]
$\displaystyle 
1\; >\; 
\tfrac{K_\alpha }{\alpha }\left[ \boundf_{ex}(\boundx_p,\bounde_p)\,  \bounde_p 
+
2\boundf_{xx}(\boundx_p,0)\,\boundx_p\right]
+
\beta(\sigma)\boundq_e(\boundx_p,\bounde_p)
$\refstepcounter{equation}\label{eq:ypsilon_ex}\hfill$(\theequation)$
\\\null\hfill$\displaystyle
+\tfrac{K_\alpha\beta(\sigma) }{\alpha }
\left(\boundf_e(\boundx_p,\bounde_p)
\boundq_x(\boundx_p,\bounde_p)
-
\left[ \boundf_{ex}(\boundx_p,\bounde_p)\,  \bounde_p 
+
2\boundf_{xx}(\boundx_p,0)\,\boundx_p\right]
\boundq_e(\boundx_p,\bounde_p)
\right)
\  ,
$\\[1em]$
1\; >\; 
\tfrac{K_\alpha\beta(\sigma) }{\alpha }
\left(\boundf_e(\boundx_p,\bounde_p)
\boundq_x(\boundx_p,\bounde_p)
-
\left[ \boundf_{ex}(\boundx_p,\bounde_p)\,  \bounde_p 
+
2\boundf_{xx}(\boundx_p,0)\,\boundx_p\right]
\boundq_e(\boundx_p,\bounde_p)
\right)
$\refstepcounter{equation}\label{eq:ypsilon_xx}\hfill$(\theequation)$
\\[1em]
\end{subequations}
With this at hand, \eqref{eq:bound_op_x_ab},
 \eqref{eq:bound_op_e_ab} and \eqref{eq:def_Upsilon_prop_1} give
\begin{multline}
p_x(\boundx_p,\bounde_p,\sigma )\,  
\sup_{t\in [0,T]}|\operator{ x_a,e_a}_x(t) - \operator{ x_b,e_b}_x(t)|
 + 
p_e(\boundx_p,\bounde_p,\sigma )\,  
\sup_{t\in [0,T]}|\operator{ x_a,e_a}_e(t)-\operator{ x_b,e_b}_e(t)| 
 \\ \displaystyle 
 \leq
   \gamma (\boundx_p,\bounde_p,\sigma )
\left[p_x(\boundx_p,\bounde_p,\sigma )
\sup_{t\in [0,T]} | x_a(t)- x_b(t)|
+
p_e(\boundx_p,\bounde_p,\sigma )
\sup_{t\in [0,T]}|e_a(t) - e_b(t)|
\right].\label{eq:contraction}
\end{multline}

Now, let $\cB_{\boundx_p,\bounde_p}(\RR^{n+1})$ denote the closed subset of
$\C_T^0(\RR^{n+1})$
 defined as
$$
\cB_{\boundx_p,\bounde_p}(\RR^{n+1})
:=\Big\{(x,e)\in
\C^0_T(\RR^{n+1})
  : 
\sup_{t\in [0,T]} |x(t)| \leq 
\boundx_p ,  
\sup_{t\in [0,T]} |e(t)| \leq 
\bounde_p \Big\}.
$$
This set 
$\cB_{\boundx_p,\bounde_p}(\RR^{n+1})$ 
equipped with the norm
$$
\|(x,e)\|  := 
p_x(\boundx_p, \bounde_p, \sigma)
\sup_{t\in[0,T]} |x(t)|
+
p_e(\boundx_p, \bounde_p, \sigma)
\sup_{t\in[0,T]} |e(t)|
$$ 
is a complete metric space.
Assuming for the time being (see below) there exists a
triple $(\boundx_p,\bounde_p,\sigma ^\star_p)$ satisfying
\eqref{eq:condition_prop_1_bound}
and \eqref{eq:contraction_condition}, we
have established that, 
for any $\sigma>\sigma ^\star_p$, we have
the following properties.
\begin{enumerate}
\item The function $( x,e)\mapsto (\operator{ 
x,e}_x,\operator{ x,e}_e)$ 
maps a function in
$\cB_{\boundx_p,\bounde_p}(\RR^{n+1})$ into a function in 
$\cB_{\boundx_p,\bounde_p}(\RR^{n+1})$,
since  \eqref{eq:condition_prop_1_bound} implies \eqref{eq:long_bound1}.
\item The function $( x,e)\mapsto (\operator{ 
x,e}_x,\operator{ x,e}_e)$ 
is a contraction,
the gain
 $\gamma(\boundx_p,\bounde_p,\sigma )$ in \eqref{eq:contraction}
being strictly smaller than $1$ when \eqref{eq:contraction_condition} holds.
\end{enumerate}
We conclude,
from the Banach fixed point theorem, that
 there 
exists a fixed point $({x}_{p},e_p)$ in 
$\cB_{\boundx_p,\bounde_p}(\RR^{n+1})$, namely
there exist $x_p$ and $e_p$ satisfying
$\operator{{ x}_{p},e_p}_x= { x}_{p}$
and $\operator{{ x}_{p},e_p}_e= e_{p}$.
In particular, $x_p,e_p$ are 
$C^0$,
 $T$-periodic, satisfy
\begin{equation}
\label{eq:long_bound_xep}
\sup_{t\in [0,T]}|x_p(t)|\; \leq \; \boundx_p,
\quad
\sup_{t\in [0,T]}|e_p(t)|\; \leq \; \bounde_p,
\end{equation}
and are solution of 
\begin{subequations}
\begin{align}\label{eq:periodic_xp}
\dot { x}_{p}=&F(t) {x}_{p}
 +   \delta f(t,x_p,e_p)
\  ,
\\ \label{eq:periodic_ep}
\dot e_p  = &
- \sigma  e_p + \mu M^\top N_z \zeta_p(t)  + q(t,x_p,e_p)
\end{align}
where $\zeta _p$ is the unique $T$-periodic solution of
\begin{equation}\label{eq:periodic_zetap}
\dot \zeta _p=
[\Phi - \mu M M^\top N_z]\zeta _p
- M q(t,x_p(t),e_p(t)).
\end{equation}
\end{subequations}
as stated by Lemma~\ref{lemma:periodic_zeta}.
Also because of $F$, $\delta f$ and $q$ are $C^1$, the periodic solution
$(x_p,e_p)$ 
is $C^2$.

In order to determine a bound for $\boundzeta_p$, 
we   apply  Lemma~\ref{lemma:periodic_zeta}
and inequality \eqref{pf:L2_zeta}
to the solution $\zeta_p$ defined in 
\eqref{eq:periodic_zetap}. 
To this end, we need
a bound for $\dot q$.
It can be expressed as 
\begin{multline*}
\dot q(t,x_p(t),e_p(t))=
 \frac{\partial q}{\partial t}(t,x_p(t),e_p(t))
 +
 \frac{\partial q}{\partial x}(t,x_p(t),e_p(t))f(t,x_p(t),e_p(t))
 \\ 
 +\frac{\partial q}{\partial e}(t,x_p(t),e_p(t))\left[-\sigma e_p(t) + \mu  M^\top N_z \zeta _p(t)+ q(t,x_p(t),e_p(t))\right] .
\end{multline*}
As a consequence, the previous expression of 
$\dot q$  yields, using \eqref{pf:L2L2gain_zeta},
and the bounds  \eqref{eq:long_bound_xep}
for $x_p, e_p$,
$$
\begin{array}{rl} \displaystyle
\int_0^T|\dot q_p(t)|^2 dt \leq   
& 
 3T \boundq_t(\boundx_p,\bounde_p)^2 
+ 3T\boundq_x(\boundx_p,\bounde_p)^2\boundf(\boundx_p,\bounde_p)^2
\\ 
& \displaystyle +
9\boundq_e(\boundx_p,\bounde_p)^2\int_0^T \left[\sigma ^2 e_p(t)^2 + \mu^2  |M^\top N_z \zeta _p(t)|^2+ 
|q(t,x_p(t),e_p(t))|^2\right] dt
\\
\leq  & 3T \boundq_t(\boundx_p,\bounde_p)^2
+3T\boundq_x(\boundx_p,\bounde_p)^2\boundf(\boundx_p,\bounde_p)^2
\\
& \displaystyle
+9\boundq_e(\boundx_p,\bounde_p)^2\left[\sigma ^2 \int_0^T e_p(t)^2 dt+ 
2\int_0^T |q(t,x_p(t),e_p(t))|^2 dt\right] .
\end{array}
$$
On another hand, we obtain, by integration and using  \eqref{pf:L2L2gain_zeta} again, 
$$
\begin{array}{l}
\displaystyle
0=\frac{e_p(T)^2 - e_p(0)^2}{2}
\leq 
\\ \displaystyle
\qquad \leq  -\sigma \,  \int_0^T e_p(t)^2 dt
+ \mu   \int_0^T M^\top N_z \zeta _p(t) e_p(t) dt
+ \int_0^T  q(t,x_p(t),e_p(t)) e_p(t) dt
\\ \displaystyle
\qquad \leq  -\frac{\sigma }{2}  \int_0^T e_p(t)^2 dt
+ \frac{1}{\sigma }  \mu^2    \int_0^T |M^\top N_z \zeta _p(t)|^2 dt
+\frac{1}{\sigma } \int_0^T  |q(t,x_p(t),e_p(t))|^2 dt
\\ \displaystyle
\qquad \leq  -\frac{\sigma }{2} \int_0^T e_p(t)^2 dt
+\frac{2}{\sigma }  \int_0^T  |q(t,x_p(t),e_p(t))|^2 dt
\end{array}
$$
and therefore
$$
\begin{array}{rcl}
\int_0^T |\dot q_p(s)|^2 ds
&\leq& \displaystyle 3T \left[\boundq_t(\boundx_p,\bounde_p)^2+\boundq_x(\boundx_p,\bounde_p)^2\boundf(\boundx_p,\bounde_p)^2\right]+
54 \boundq_e(\boundx_p,\bounde_p)^2  \int_0^T |q(t,x_p(t),e_p(t))|^2 dt
\\
& \leq &\displaystyle  3T \left[\boundq_t(\boundx_p,\bounde_p)^2+\boundq_x(\boundx_p,\bounde_p)^2\boundf(\boundx_p,\bounde_p)^2\right]
+
54\boundq_e(\boundx_p,\bounde_p)^2 T \boundq(\boundx_p,\bounde_p)^2.
\end{array}
$$
By combining this inequality with 
 \eqref{pf:L2_zeta} and \eqref{app:bound_q}, we finally obtain
 an expression for the bound $ \boundzeta_p$ in inequality
 \eqref{eq:prop_periodic_bound_zeta}, that is,
\begin{multline*}
 \boundzeta_p^2(\mu):= 
 \left[\frac{T}{\mu} +  \kappa_0
 + 54\kappa_1 \boundq_e(\boundx_p,\bounde_p)^2 \right] 
 \boundq(\boundx_p,\bounde_p)^2
+ 3\kappa_1
  \left[\boundq_t(\boundx_p,\bounde_p)^2+\boundq_x(\boundx_p,\bounde_p)^2\boundf(\boundx_p,\bounde_p)^2\right]
\end{multline*}
with $\kappa_0,\kappa_1$ given by Lemma~\ref{lemma:transfer_zeta}.
It is independent of $\sigma$ and of $\mu$ by noting 
that it is a decreasing function of $\mu$ for $\mu\geq1$, that is, 
select $\boundzeta_p(1)$ in the statement of the Theorem.

Now, inequalities \eqref{eq:long_bound_xep}
are parts
of the inequalities
\eqref{eq:prop_periodic_bound_x}
and \eqref{eq:prop_periodic_bound_e}.
On another hand,
with  \eqref{8} and 
\eqref{app:bound_deltaf}, we
obtain
\begin{align*}
\sup_{t\in[0,T]}|x_p(t)|\leq 
  \dfrac{K_\alpha}{\alpha}\left[
\boundf_e(\boundx_p,\bounde_p)\sup_{t\in[0,T]}|e_p(t)|
+ \dfrac{1}{2}\boundf_{xx}(\boundx_p,0)\boundx_p
\sup_{t\in[0,T]}|x_p(t)|
\right]
\end{align*}
which gives
\begin{equation}
\label{eq:bound_prop1_xp}
\sup_{t\in[0,T]}|x_p(t)|\leq   \dfrac{K_\alpha\boundf_e(\boundx_p, \bounde_p)}{\alpha - 
\tfrac12K_\alpha\boundf_{xx}(\boundx_p,0)\boundx_p}\sup_{t\in[0,T]}|e_p(t)|,
\end{equation}
where the denominator of the right hand side is strictly positive according to 
\eqref{eq:condition_prop_1_bound_xpep}.
Now recall that by definition of \eqref{eq:sequence_nzk}
and of the matrices $M, N_z$, we have
$\sqrt{M^\top N_z M} \leq   \overbar N_z$.
As a consequence, inequalities \eqref{8} and \eqref{eq:prop_periodic_bound_zeta}
give
\begin{align*}
\sup_{t\in[0,T]} |e_p(t)| \leq & \dfrac1\sigma \left[ \sup_{t\in[0,T]} 
q(t,x_p(t),e_p(t)) + \sup_{t\in[0,T]} M^\top N_z \zeta_p(t) \right]
                \\
                \leq & \dfrac1\sigma \left[\boundq(\boundx_p, \bounde_p)  + \sqrt{M^\top N_z M} \sup_{t\in[0,T]}\sqrt{ \zeta_p(t)^\top N_z \zeta_p(t)} \right]
                \\ \leq &
                \dfrac{1}\sigma\big(\boundq(\boundx_p, \bounde_p)  + \overbar N_z\boundzeta_p\big)
\end{align*}
This yields the remaining parts of
of the inequalities
\eqref{eq:prop_periodic_bound_x}
 and \eqref{eq:prop_periodic_bound_e}
with
 $$
 \boundr_e :=
 \boundq(\boundx_p, \bounde_p)  + \overbar N_z\boundzeta_p\ ,
\qquad
\boundr_x := \dfrac{K_\alpha\boundf_e(\boundx_p, \bounde_p)}{\alpha - \tfrac12K_\alpha\boundf_{xx}(\boundx_p,0)\boundx_p}\boundr_e\  .
 $$

Finally, to complete the proof, we need to show that it does exist
a triple of positive real numbers
$(\boundx_p,\bounde_p,\sigma ^\star_p)$
satisfying \eqref{eq:condition_prop_1_bound}
and \eqref{eq:contraction_condition}, i.e.
\\[1em]$\displaystyle 
\boundx_p\; \geq \; 
\dfrac{K_\alpha }{\alpha }\left[ \boundf_e(\boundx_p,\bounde_p)\,  \bounde_p
\;+\; 
\frac{1}{2}\boundf_{xx}(\boundx_p,0)\,\boundx_p^2\right] 
\  ,
$\refstepcounter{equation}\label{12}\hfill$(\theequation)$
\\[0.51em]$\displaystyle
 \bounde_p\; \geq \; 
 \beta(\sigma)
\boundq (\boundx_p,\bounde_p)
 \ ,
$\refstepcounter{equation}\label{13}\hfill$(\theequation)$
\\[0.51em]$\displaystyle
\null \  \,   1\; >\; 
\dfrac{K_\alpha }{\alpha }\left[ \boundf_{ex}(\boundx_p,\bounde_p)\,  \bounde_p 
+
2\boundf_{xx}(\boundx_p,0)\,\boundx_p\right]
+
\beta(\sigma)\boundq_e(\boundx_p,\bounde_p)
$\refstepcounter{equation}\label{14}\hfill$(\theequation)$
\\[0.51em]\null\hfill$\displaystyle
+\dfrac{K_\alpha\beta(\sigma) }{\alpha }
\left(\boundf_e(\boundx_p,\bounde_p)
\boundq_x(\boundx_p,\bounde_p)
-
\left[ \boundf_{ex}(\boundx_p,\bounde_p)\,  \bounde_p 
+
2\boundf_{xx}(\boundx_p,0)\,\boundx_p\right]
\boundq_e(\boundx_p,\bounde_p)
\right)
\  ,
$\\[0.51em]$\displaystyle
\null \  \,   1\; >\; 
\dfrac{K_\alpha\beta(\sigma) }{\alpha }
\left(\boundf_e(\boundx_p,\bounde_p)
\boundq_x(\boundx_p,\bounde_p)
-
\left[ \boundf_{ex}(\boundx_p,\bounde_p)\,  \bounde_p 
+
2\boundf_{xx}(\boundx_p,0)\,\boundx_p\right]
\boundq_e(\boundx_p,\bounde_p)
\right)\!.
$\refstepcounter{equation}\label{15}\hfill$(\theequation)$
\\[1em]
Since the function $\beta $ is continuous, strictly decreasing and tends to $0$ as $\sigma $ tends to infinity, it is sufficient to 
find $\boundx_p$ and $\bounde_p$ satisfying~:
\begin{eqnarray*}
\boundx_p &> &
\dfrac{K_\alpha }{\alpha }\left[ \boundf_e(\boundx_p,\bounde_p)\,  \bounde_p
\;+\; 
\frac{1}{2}\boundf_{xx}(\boundx_p,0)\,\boundx_p^2\right] 
\  ,
\\
1&> &
\dfrac{K_\alpha }{\alpha }\left[ \boundf_{ex}(\boundx_p,\bounde_p)\,  \bounde_p 
+
2\boundf_{xx}(\boundx_p,0)\,\boundx_p\right]
\  .
\end{eqnarray*}
Since the bounding functions $\boundf_{\bullet}$ are increasing in each of their arguments,
%
%
%
%
we choose 
$$
\boundx_p<\boundx_{p*}
\quad ,\qquad 
\bounde_p\leq \bounde_{p*}(\boundx_p),
$$
where $\boundx_{p*}$ is a strictly positive real number
satisfying
$$
\boundx_{p*} \boundf_{xx}(\boundx_{p*},0)\; <\; \frac{\alpha }{2K_{\alpha}}
$$
and $\bounde_{p*}(\boundx_{p})$ is a strictly positive real number  satisfying
\begin{eqnarray*}
\bounde_{p*}(\boundx_{p})\,  \boundf_e(\boundx_{p},\bounde_{p*}(\boundx_{p}))
&\leq &
\frac{\alpha }{K_\alpha }\boundx_{p} - \frac{1}{2}\boundf_{xx}(\boundx_{p},0)\,\boundx_{p}^2,
\\
\bounde_{p*}(\boundx_{p})
 \boundf_{ex}(\boundx_{p},\bounde_{p*}(\boundx_{p})) 
&<&
\frac{\alpha }{K_{\alpha}} - 2\boundx_{p} \boundf_{xx}(\boundx_{p},0) 
\  .
\end{eqnarray*}
In this way (\ref{12}) holds and it remains to find $\sigma ^\star_p$ large enough to satisfy the following 
$3$ inequalities
\begin{equation}
\begin{array}{rcl}
\beta(\sigma ^\star_p)  &\leq & \dfrac{\bounde_p}{\boundq (\boundx_p,\bounde_p)}
\\[1em]
\beta (\sigma ^\star_p )
&< &
 \frac{1-\frac{K_\alpha }{\alpha }\left[ \boundf_{ex}(\boundx_p,\bounde_p)\,  \bounde_p 
+
2\boundf_{xx}(\boundx_p,0)\,\boundx_p\right]
}
{
\boundq_e(\boundx_p,\bounde_p)
+\frac{K_\alpha }{\alpha }
\left(\boundf_e(\boundx_p,\bounde_p)
\boundq_x(\boundx_p,\bounde_p)
-
\left[ \boundf_{ex}(\boundx_p,\bounde_p)\,  \bounde_p 
+
2\boundf_{xx}(\boundx_p,0)\,\boundx_p\right]
\boundq_e(\boundx_p,\bounde_p)
\right)
}
\\[1em]
\beta (\sigma ^\star_p )& < &
\dfrac{\alpha }{ 
K_\alpha
\left(\boundf_e(\boundx_p,\bounde_p)
\boundq_x(\boundx_p,\bounde_p)
-
\left[ \boundf_{ex}(\boundx_p,\bounde_p)\,  \bounde_p 
+
2\boundf_{xx}(\boundx_p,0)\,\boundx_p\right]
\boundq_e(\boundx_p,\bounde_p)
\right)
}.
\end{array}
\label{19}
\end{equation}
Note $n_o$ plays no role in this process implying that the triple $(\boundx_p,\bounde_p,\sigma ^\star_p)$ does 
not depend on $n_o$.\qed
 \end{proof}

\begin{remark}
\label{rem1}
We stress the fact that $\boundx_p$ can be chosen arbitrarily in $]0,\boundx_{p*}[$. Then $\boundx_p$ being 
fixed,
$\bounde_p$ can be 
chosen arbitrarily in $]0,\bounde_{p*}(\boundx_{p})]$. And finally, $\boundx_p$ and $\bounde_p$ being fixed,
$\sigma ^\star_p$ is chosen to satisfy the $3$ inequalities 
in \eqref{19}
\end{remark}

\subsection{Exponential Stability of the Periodic Solution}

\begin{proposition}\label{prop:exponential_solution}
For any triplet $(\overline P_x,\underline P_x,\alpha)$,
 there exist
strictly positive
real numbers  $\boundx_a, \bounde_a>0$
(independent of $n_o$)
and, for any $\boundzeta_a>0$
and $\mu\geq1$, 
there exists $\sigma ^\star_a>0$
(independent of $n_o$)
such that, 
if system \eqref{eq:closed-loop},
with $\sigma>\sigma ^\star_a$, 
 $f\in \F(\boundx_a, \bounde_a,\overline P_x,\underline P_x,\alpha)$ and $q\in\Q(\boundx_a, \bounde_a)$,
admits a $T$-periodic solution $(x_p, e_p, z_p)$
satisfying
\begin{equation}
\label{eq:prop_exp_bound_periodic}
\sup_{t\in [0,T]}|x_p(t)| \leq  \boundx_a, 
\
 \sup_{t\in [0,T]}|e_p(t)|\leq  \bounde_a, 
\
\quad \sup_{t\in [0,T]}\sqrt{\zeta_p(0)^\top N_z \zeta_p(0)} \leq \boundzeta_a,
\end{equation} 
then, such periodic solution  is exponentially stable
with a domain of attraction containing the set 
\begin{equation}
\N_\zeta(\boundx_a, \bounde_a,\boundzeta_a)=
\Big\{(x,e,\zeta): \; |x-x_p(0)| \leq  3\boundx_a,
\; |e-e_p(0)| \leq  3\bounde_a, \;
\sqrt{(\zeta-\zeta_p(0))^\top N_z (\zeta-\zeta_p(0))} \leq 3\boundzeta_a
\Big\}.
\label{eq:setN}
\end{equation}
\end{proposition}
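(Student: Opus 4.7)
The plan is to shift coordinates so that the periodic solution becomes the origin and then to construct a composite Lyapunov function whose three pieces are dissipated by the three distinct mechanisms already at our disposal: the exponential stability of the zero-dynamics \eqref{eq:Pass2}, the high-gain damping $-\sigma e$, and the Hurwitzness of $\Phi-\mu MM^\top N_z$ provided by Lemma~\ref{lemma:hurwitzPhiM}. Setting $\tilde x = x-x_p$, $\tilde e = e-e_p$, $\tilde\zeta = \zeta-\zeta_p$ and using the mean value theorem along the segment joining $(x_p,e_p)$ to $(x_p+\tilde x,e_p+\tilde e)$, the shifted dynamics can be written as
\begin{align*}
\dot{\tilde x} &= A(t,\tilde x,\tilde e)\,\tilde x + B(t,\tilde x,\tilde e)\,\tilde e, \\
\dot{\tilde e} &= -\sigma\tilde e + \mu M^\top N_z\tilde\zeta + C(t,\tilde x,\tilde e)\,\tilde x + D(t,\tilde x,\tilde e)\,\tilde e, \\
\dot{\tilde\zeta} &= (\Phi-\mu MM^\top N_z)\tilde\zeta - M\bigl(C\tilde x+D\tilde e\bigr).
\end{align*}
The bounds in Definitions~\ref{def:familyF}, \ref{def:familyQ} give uniform Lipschitz constants on $A,B,C,D$ over $\cS_T(\boundx_a,\bounde_a)$, while the second-order bound $\boundf _{xx}$ combined with the smallness $|x_p|,|e_p|=\mathcal O(1/\sigma)$ inherited from Proposition~\ref{prop:periodic_solution} forces $A(t,0,0)=F(t)+\Delta_A(t)$ with $\|\Delta_A\|$ arbitrarily small for $\sigma$ large and $\boundx_a,\bounde_a$ small.

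The Lyapunov candidate I would use is
\[
V(t,\tilde x,\tilde e,\tilde\zeta) \;:=\; a_x\,\tilde x^\top P_x(t)\tilde x \;+\; a_e\,\tilde e^2 \;+\; \tilde\zeta^\top(N_z+\kappa P_\zeta)\tilde\zeta,
\]
where $a_x,a_e>0$ are to be tuned and $(\kappa,P_\zeta)$ are furnished by Lemma~\ref{lemma:hurwitzPhiM}. The derivative of $V$ decomposes into three pieces each carrying built-in dissipation. Assumption \eqref{eq:Pass2} together with $A\approx F$ bounds the first piece by $-2\alpha a_x\,\tilde x^\top P_x\tilde x$ plus a cross term in $\tilde e$; the $\tilde e$-piece produces $-2\sigma a_e\,\tilde e^2$ plus driving terms in $\tilde x$, $\tilde\zeta$ and a nonlinear residual; inequality \eqref{app:ineq_PhiM} yields $-\mu|M^\top N_z\tilde\zeta|^2 - \kappa\,\tilde\zeta^\top N_z\tilde\zeta$ plus the cross term $-2\tilde\zeta^\top(N_z+\kappa P_\zeta)M\Delta q$ with $\Delta q := C\tilde x+D\tilde e$. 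The crucial cross-coupling $2 a_e\mu\,\tilde e\, M^\top N_z\tilde\zeta$ is split by Young's inequality so as to be exactly absorbed by the available $-\mu|M^\top N_z\tilde\zeta|^2$; the $\tilde x$--$\tilde e$ cross term is absorbed into a fraction of $-\alpha a_x\,\tilde x^\top P_x\tilde x$ once $a_e\gg a_x$; and the $\Delta q$-driven term acting on the $\zeta$-Lyapunov is absorbed by a fraction of $-\kappa\,\tilde\zeta^\top N_z\tilde\zeta$ at the cost of an additional $\mathcal O(|\tilde x|^2+|\tilde e|^2)$, which is in turn dominated by the first two dissipation terms. Taking $\sigma$ large then yields $\dot V\leq -c V$ in a neighbourhood of the origin, from which local exponential stability and forward invariance of any sublevel set of $V$ contained in that neighbourhood follows by standard Lyapunov arguments. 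The claim that the domain of attraction contains $\N_\zeta(\boundx_a,\bounde_a,\boundzeta_a)$ from \eqref{eq:setN} is then obtained by choosing the level of $V$ large enough that the shifted ball $\{|\tilde x|\leq 3\boundx_a,\ |\tilde e|\leq 3\bounde_a,\ \|\tilde\zeta\|_{N_z}\leq 3\boundzeta_a\}$ fits inside it.

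The main obstacle is the $n_o$-dependence hidden in Lemma~\ref{lemma:hurwitzPhiM}: both $\kappa$ and $P_\zeta$ depend on $n_o$, and the coefficient of the cross term $-2\tilde\zeta^\top(N_z+\kappa P_\zeta)M\Delta q$ is a priori not $n_o$-uniform, whereas $\N_\zeta$ and $\sigma^\star_a$ are required to be. Resolving this calls for controlling the ratio of this cross-coupling coefficient to the available dissipation $\kappa\,\tilde\zeta^\top N_z\tilde\zeta$ by a constant independent of $n_o$. The right tool here is the $n_o$-uniform $L^2$ gain $1/\mu$ from the input to $M^\top N_z\tilde\zeta$ already exploited in the proof of Lemma~\ref{lemma:periodic_zeta} through \eqref{pf_lemma_eq0}: by estimating the cross-coupling in integral rather than pointwise form, one recovers an absorption step whose constants depend only on the $n_o$-free Lipschitz data of $f,q$ and on $\mu$, which is precisely what makes $\sigma^\star_a$ independent of $n_o$.
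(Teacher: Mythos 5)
Your setup (shift to error coordinates, composite Lyapunov function combining $P_x$, the high-gain damping, and Lemma~\ref{lemma:hurwitzPhiM}) matches the paper's strategy, and you correctly identify the critical obstacle: the pair $(\kappa,P_\zeta)$ from Lemma~\ref{lemma:hurwitzPhiM} depends on $n_o$. But your proposed resolution does not close the gap. You place $\tilde\zeta^\top(N_z+\kappa P_\zeta)\tilde\zeta$ into the \emph{single} Lyapunov function $V$ that is supposed to deliver both the decay estimate and the invariant region containing $\N_\zeta(\boundx_a,\bounde_a,\boundzeta_a)$. Two things then fail uniformly in $n_o$: (i) absorbing the cross term $-2\tilde\zeta^\top \kappa P_\zeta M\,\Delta q$ requires a pointwise bound involving $|N_z^{-1/2}P_\zeta M|$, which grows with $n_o$ (the eigenvalues of $\Phi-\mu MM^\top N_z$ approach the imaginary axis, so $P_\zeta$ is not uniformly bounded); the suggestion to "estimate the cross-coupling in integral rather than pointwise form" via the $L^2$ gain $1/\mu$ cannot repair a pointwise differential inequality $\dot V\leq -cV$ — the $L^2$ gain controls $\int|M^\top N_z\tilde\zeta|^2$, not the instantaneous value of $\tilde\zeta^\top P_\zeta M\Delta q$; and (ii) even granting $\dot V\leq -cV$ locally, the sublevel sets of your $V$ are not comparable to the set $\N_\zeta$ uniformly in $n_o$, because of the $\kappa P_\zeta$ weighting, so "choosing the level of $V$ large enough that the shifted ball fits inside it" does not yield an $n_o$-independent domain of attraction, nor an $n_o$-independent $\sigma^\star_a$.

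The paper's resolution is to decouple the two roles of the Lyapunov function. It first works with $U=\tilde x^\top P_x\tilde x+\tilde e^2+h\,\tilde\zeta^\top N_z\tilde\zeta$ (no $P_\zeta$), whose coefficients and sublevel sets are $n_o$-uniform, and shows $\dot U\leq -\nu(|\tilde x|^2+|\tilde e|^2+|M^\top N_z\tilde\zeta|^2)$ with $\nu$ and $\sigma^\star_a$ determined by a $3\times3$ matrix $\R$ with $n_o$-free entries. This inequality is only negative \emph{semi}definite in $\tilde\zeta$, but it suffices for forward invariance of a sublevel set of $U$ containing $\N_\zeta$, which is exactly the part of the claim that must be $n_o$-uniform. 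Only afterwards does it add a small, $n_o$-\emph{dependent} multiple $c\,\tilde\zeta^\top(N_z+\kappa P_\zeta)\tilde\zeta$, choosing $c$ so that the resulting $n_o$-dependent cross terms are absorbed by the already established uniform dissipation; this upgrades semidefiniteness to a genuine $-\tfrac{c\kappa}{2}\tilde\zeta^\top N_z\tilde\zeta$ term and hence exponential stability, at the price of an $n_o$-dependent decay rate only. If you restructure your argument along these lines — keep $P_\zeta$ out of the function used for the invariance estimate, and let the $n_o$-dependence enter only the convergence rate — the rest of your computation goes through.
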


\begin{proof}
The assumed existence of
a periodic solution  $(x_p,e_p,\zeta_p)$
satisfying \eqref{eq:prop_exp_bound_periodic} for system
\eqref{eq:closed-loop} allows us to consider the following change of coordinates
$$
(x,e,\zeta) \mapsto (\tilde x, \tilde e, \tilde \zeta) :=
(x-x_p(t),e-e_p(t),\zeta - \zeta_p(t))
$$
System 
\eqref{eq:closed-loop} is transformed into
\begin{equation}
\label{eq:closed-loop3}
\begin{array}{rcl}
\dot {\tilde x} & = & F_p(t) \tilde x+ \delta \tilde f(t,\tilde x,\tilde e)
\\
\dot {\tilde e} & =&  \tilde q(t,\tilde x,\tilde e) - \sigma \tilde e + \mu M^\top N_z \tilde\zeta
\\
\dot \zeta & = & (\Phi - \mu M M^\top N_z)\tilde\zeta  - M\tilde q(t,\tilde x,\tilde e),
\end{array}
\end{equation}
with the definitions
\begin{align}
\label{eq:def_Fp}
F_p(t) := &\frac{\partial f}{\partial x}(t,x_p(t),e_p(t))
\ , 
\\
\label{eq:f_decomposition2}
\delta \tilde f(t,\tilde x,\tilde e) := &
 \displaystyle\tilde f(t,\tilde x,\tilde e)-  F_p(t) \tilde x
  \ ,
\\ \label{eq:tildef_def}
 \tilde f(t,\tilde x,\tilde e):=&f(t,x,e)-f(t,x_p(t),e_p(t))
\  ,
\\
\tilde q(t,\tilde x,\tilde e):=&q(t,x,e)-q(t,x_p(t),e_p(t))
\  .\label{eq:tildeq_def}
\end{align}

Exponential stability of the origin for the system (\ref{eq:closed-loop3}) can be established from the first 
order approximation. But we want to establish not only exponential stability but also that the domain of attraction contains $
\N_\zeta(\boundx_a, \bounde_a,\boundzeta_a)$ defined in \eqref{eq:setN}. So instead we go with a Lyapunov 
analysis.

First, we define $\boundx_a,\bounde_a$ and $\sigma ^\star_a$ via a set of inequality.
To this end,
to any pair $(\boundx_a,\bounde_a)$ we associate a pair
$(\boundx_b,\bounde_b)$ as follows
\begin{equation}
\label{eq:temp_bound_xe2}
\boundx_b:= \left(
3{\sqrt2} \sqrt{\frac{\overline P_x}{\underline P_x}}+ 2 \right)\boundx_a, 
\qquad
\bounde_b := 6 \bounde_a.
\end{equation}
and the real number
 \begin{equation}
\rho_0(\boundx_a,\bounde_a) := 
\frac{\overline{P}_x}{2\underline{P}_x}
\left[\boundf _{xx} (\boundx_a,\bounde_a) \boundx_a
+\boundf _{ex} (\boundx_a,\bounde_a) \bounde_a
+\underline{P}_x
\boundf _{xx} (\boundx_b,\bounde_a) (\boundx_b+\boundx_a)\right].\label{eq:def_rho_0}
 \end{equation}
 By definition, the function $\rho_0$
 is a non-decreasing function satisfying $\rho_0(0,0)=0$.
 With the above definitions, let 
 $\boundx_a,\bounde_a>0$ be any pair of positive numbers
 satisfying the following two inequalities
 \begin{subequations}\label{eq:boundx_bounde}
\begin{align}
\label{eq:boundx_bounde_1}
9 \overline{P}_x \boundx_a^2 + 9 \bounde_a^2
 & \leq \frac12 \min\left\{\underline{P_x} (\boundx_b-\boundx_a)^2,
(\bounde_b-\bounde_a)^2\right\},
\\
\label{eq:boundx_bounde_2}
\rho_0(\boundx_a,\bounde_a) & 
 \leq \frac13 \alpha \underline{P}_x,
\end{align}
\end{subequations}
that depends only on the triplet of numbers $(\overline P_x,\underline P_x,\alpha)$.
Such $\boundx_a,\bounde_a>0$ always exists 
in view of the above definition of $\boundx_b,\bounde_b$
in \eqref{eq:temp_bound_xe2} and the properties of $\rho_0$.
Then, with $\boundx_a,\bounde_a$ 
and $\mu\geq1$ being
fixed, for any given $\boundzeta_a>0$, let $h>0$ a small 
positive number 
satisfying 
the following inequality 
 \begin{equation}\label{eq:ineq_bounds_b}
h \leq \frac13 
  \min \left\{ \frac1{(3\boundzeta_a)^2} \min\left\{\underline{P_x} (\boundx_b-\boundx_a)^2,
(\bounde_b-\bounde_a)^2\right\},
\; 
\mu
\dfrac{\alpha \underline{P}_x }{\boundq_x(\boundx_b,\bounde_b)^2}
\right\}
\end{equation}
Finally, with $\boundx_a,\bounde_a,\boundx_b,\bounde_b,\mu$ and $h$ fixed, we select
$\sigma ^\star_a$
as 
\begin{equation}\label{eq:ineq_bounds_sigma}
\sigma ^\star_a:=\frac{
\left[\overline{P}_x \boundf_e(\boundx_b, \bounde_b)
+\boundq_x(\boundx_b,\bounde_b)\right]^2
}{
2\left[\alpha \underline P_x- \rho_0(\boundx_a,\bounde_a) - 
\tfrac{h}{\mu}\boundq_x(\boundx_b, \bounde_b)^2\right]
}+\frac{\mu }{2h}+  \boundq_e(\boundx_b, \bounde_b)
+\tfrac{h}{\mu}\boundq_e(\boundx_b, \bounde_b)^2.
\end{equation}
Again, $n_o$ plays no role in the above inequalities. This implies
the triple $(\boundx_a,\bounde_a,\sigma_1^\star)$ does 
not depend on $n_o$.

With \eqref{eq:prop_exp_bound_periodic}, 
$f\in \F(\boundx_b, \bounde_b,\overline P_x,\underline P_x,\alpha)$
and $q\in\Q(\boundx_b, \bounde_b)$,
and by using  inequality \eqref{app:ineq_varphi}, we obtain
\begin{align}\label{app:bound_tildeq}
|\tilde q(t,\tilde x,\tilde e)| \leq  &
\boundq_e(\boundx_b,\bounde_b)  |\tilde e|
+
\boundq_x(\boundx_b,\bounde_b) |\tilde x|\\
|\delta \tilde f(t,\tilde x,\tilde e)| \leq   &
\boundf_e(\boundx_b,\bounde_b)\,  |\tilde e|
+
\boundf_{xx}(\boundx_b,\bounde_a)\,  |\tilde x|^2
\label{app:bound_delta_tildef}
\end{align}
for all 
$(x,e)\in  \mathcal{S}(\boundx_b,\bounde_b)$, 
$t\in [0,T]$.
Also,
when $F_p$ is sufficiently 
close to $F(t)$, we 
can still use the function $P_x$ defined in 
\eqref{eq:Pass1}, \eqref{eq:Pass2}
to build a candidate Lyapunov function 
for the $\tilde x$-subsystem. 
For this, let us define the function
\begin{equation}
\label{eq:def_deltaF}
\delta F(t) := F_p(t)-F(t)= \frac{\partial f}{\partial x}(t,x_p(t),e_p(t))-\frac{\partial f}{\partial x}(t,0,0).
\end{equation}
It is $T$-periodic and satisfies 
\begin{equation}
\label{app:bound_delta_F}
|\delta F(t)|
\leq 
|F_p(t)- F(t)|
 \leq   \boundf _{xx} (\boundx_a,\bounde_a) \boundx_a
+\boundf _{ex} (\boundx_a,\bounde_a) \bounde_a,
\end{equation}
for all  $t\in [0,T]$.
Therefore, by using \eqref{eq:Pass2}, we compute
$$
\begin{array}{rcl}
\dot{P}_x(t)
+P_x(t) F_p(t)
+F_p(t)^\top P_x(t)
 & \leq &-2\alpha P_x + \left[P_x(t) \delta F(t) + \delta F(t)^\top P_x(t)\right]
\\
& \leq & -2\left(\alpha \underline{P}_x - 
\frac{\overline{P}_x}{\underline{P}_x}\left[\boundf _{xx} (\boundx_a,\bounde_a) \boundx_a
+\boundf _{ex} (\boundx_a,\bounde_a) \bounde_a\right]\right) I
\end{array}
$$
for all 
$(x,e)\in  \mathcal{S}(\boundx_b,\bounde_b)$, $t\in [0,T]$.
Hence, by defining $V_x : = \tilde x^\top P_x \tilde x$, we 
obtain
\begin{eqnarray*}
\dot V_x & = &
2\tilde x^\top P_x(t)(F_p(t)\tilde x + \delta \tilde f(t,\tilde x,\tilde e))
+\tilde x^\top \dot P_x(t) \tilde x
\\
& \leq & -
2\Big(\!\alpha \underline{P}_x  - 
\frac{\overline{P}_x}{\underline{P}_x}\left[\boundf _{xx} (\boundx_a,\bounde_a) \boundx_a
+\boundf _{ex} (\boundx_a,\bounde_a) \bounde_a\right]\!\Big) 
|\tilde x|^2
+
2  \overline P_x  \boundf_e(\boundx_b,\bounde_b)\,  |\tilde e|\, 
 |\tilde x |
+
2 \overline P_x \boundf_{xx}(\boundx_b,\bounde_a)\,  |\tilde x|^3
\\
&\leq &
- 2[ \alpha \underline{P}_x - \rho_0(\boundx_a,\bounde_a)]
  |\tilde x|^2 + 2  \overline P_x  \boundf_e(\boundx_b,\bounde_b)\,  |\tilde e|\, 
 |\tilde x |
\end{eqnarray*} 
for all 
$(x,e)\in  \mathcal{S}(\boundx_b,\bounde_b)$, $t\in [0,T]$,
 where in the last step we used the inequality
$|\tilde x|\leq
 |x_p(t)|
+|x|\leq \boundx_a+\boundx_b$, 
and $\rho_0$ has been defined in \eqref{eq:def_rho_0}.
Next, the derivative of $V_e : = \tilde e^2$ satisfies
$$
\dot V_e  \leq 
 -2\sigma |\tilde e|^2 
+ 2\mu M^\top N_z \tilde \zeta \tilde e
+ 2 \boundq_e(\boundx_b,\bounde_b)\,  |\tilde e|^2
+ 2 \boundq_x(\boundx_b,\bounde_b)\,  |\tilde x| |\tilde e|,
$$
for all 
$(x,e)\in  \mathcal{S}(\boundx_b,\bounde_b)$, $t\in [0,T]$,
in which we used \eqref{app:bound_tildeq}.
Finally, for the $\tilde \zeta$ dynamics, 
define the function  $V_\zeta = \tilde \zeta^\top N_z\tilde \zeta$.
By using again inequality \eqref{app:bound_tildeq},
its derivative satisfies
\begin{align*}
\dot V_\zeta & = 
2 \zeta^\top N_z [(\Phi - \mu M M^\top N_z)\tilde\zeta  - M\tilde q(t,\tilde x,\tilde e)]
\\
& \leq  
- \mu |M^\top N_z \tilde \zeta |^2
+\frac2\mu \boundq_e(\boundx_b,\bounde_b)^2|\tilde e|^2
+\frac2\mu \boundq_x(\boundx_b,\bounde_b)^2|\tilde x|^2
\end{align*}
for all 
$(x,e)\in  \mathcal{S}(\boundx_b,\bounde_b)$, $t\in [0,T]$.
As a consequence, by collecting all the inequalities together,
we conclude that the time derivative of
the function
$$
U :=  V_x + V_e + h V_\zeta, 
\qquad \tilde \chi := (|\tilde x|, |\tilde e|,
|M^\top N_z\tilde \zeta|)^\top ,
$$
satisfies
$$
\dot U \leq 
-\tilde \chi^\top \R(\boundx_a, \bounde_a,\sigma,\mu,h) \tilde \chi , 
$$
with $\R$ defined as
\begin{equation}
\label{eq:R-lyap1}
\R(\boundx_a, \bounde_a,\sigma,\mu,h) := \begin{pmatrix}
2(\alpha \underline{P}_x -\rho_{11}(\boundx_a,\bounde_a,\mu,h)) &&  -\rho_{12}(\boundx_a,\bounde_a) && 0 
\\
-\rho_{12}(\boundx_a,\bounde_a) && 2(\sigma-\rho_{22}(\boundx_a,\bounde_a,\mu,h)) && -\mu
\\
 0&& -\mu && \mu h
\end{pmatrix},
\end{equation}
and where the functions $\rho_{11}$,
$\rho_{12}$ and $\rho_{22}$ are 
non-decreasing functions
defined as
$$
\label{eq:R-lyap2}
\begin{array}{rl}
\rho_{11}(\boundx_a,\bounde_a,\mu,h)  := & 
 \rho_0(\boundx_a,\bounde_a) + 
\tfrac{h}{\mu}\boundq_x(\boundx_b, \bounde_b)^2,
\\
\rho_{12}(\boundx_a,\bounde_a) : = &  \overline{P}_x \boundf_e(\boundx_b, \bounde_b)
+\boundq_x(\boundx_b,\bounde_b),
\\
\rho_{22}(\boundx_a,\bounde_a,\mu,h)  : = &
  \boundq_e(\boundx_b, \bounde_b)
+\tfrac{h}{\mu}\boundq_e(\boundx_b, \bounde_b)^2 .
\end{array}
$$
According to Sylvester's criterion, the matrix $\R$
is positive definite if and only if the leading principal minors
are all strictly positive, that is, we need to satisfy
\begin{eqnarray}
\label{16}
\alpha \underline P_x - \rho_{11}& >0,
\\
\label{17}
4(\alpha \underline P_x - \rho_{11})(\sigma-\rho_{22})-\rho_{12}^2 &>0,
\\
\label{18}
2(\alpha \underline{P}_x -\rho_{11})
\left[2(\sigma-\rho_{22}) 
 h
-
\mu
\right]
-
h
\rho_{12}^2
&>0,
\end{eqnarray}
where the arguments of the functions $\rho_{ij}$
have been omitted for compactness.
These inequalities are always satisfied
for any $\sigma>\sigma ^\star_a$ since
\begin{itemize}
\item[$\bullet$]
\eqref{16} is implied by \eqref{eq:boundx_bounde_2} 
and \eqref{eq:ineq_bounds_b}.
\item[$\bullet$]
\eqref{17} is implied by \eqref{16} and \eqref{18}.
\item[$\bullet$]
\eqref{18} is implied by \eqref{eq:ineq_bounds_sigma}.
\end{itemize}
As a consequence, for any $\sigma>\sigma ^\star_a$,
there exists a  positive real number
 $\nu>0$, which depends 
 only on $(\boundx_a, \bounde_a,\sigma,\mu,h)$,
so that the derivative of $U$ satisfies the following inequality
\begin{equation}
\label{eq:derU}
\dot U \leq - \nu
\left(|\tilde x|^2 + |\tilde e| + |M^\top N_z\tilde\zeta|^2  \right)
\end{equation}
for all  
$(x,e)\in  \mathcal{S}(\boundx_b,\bounde_b)$, $t\in [0,T]$.

Next, to make
the right hand side of \eqref{eq:derU} negative definite,
consider the function
$$
V_{\zeta,s} := \tilde \zeta^\top (N_z+\kappa P_\zeta)\tilde \zeta,
$$ where $\kappa P_\zeta$ is defined in 
Lemma~\ref{lemma:hurwitzPhiM}
 and depends on $n_o$.
By using the inequality \eqref{app:ineq_PhiM}, we compute
its derivative. It satisfies
\begin{eqnarray*}
\dot V_{\zeta,s} &\leq &\displaystyle 
- \mu |M^\top N_z \tilde \zeta |^2-\kappa \tilde \zeta^\top N_z \tilde \zeta 
-2
\left[\tilde \zeta ^\top N_z  M \right]\tilde q(t,\tilde x,\tilde e)
-2 \kappa 
\left[\tilde \zeta ^\top N_z^{1/2}\right]\left[N_z^{-1/2} P_\zeta  M \right] \tilde q(t,\tilde x,\tilde e)
\\
&\leq &\displaystyle 
-\frac{\kappa}{2} \tilde \zeta^\top N_z \tilde \zeta 
+
\frac{1}{\mu }q(t,\tilde x,\tilde e)^2
+
2\kappa \left|N_z^{-1/2} P_\zeta  M\right|^2 q(t,\tilde x,\tilde e)^2
\\
&\leq &\displaystyle 
-
\dfrac{\kappa }{2}\tilde \zeta^\top N_z \tilde \zeta 
 + 
2\left[
\frac{1}{\mu }
+
2\kappa \left|N_z^{-1/2} P_\zeta  M\right|^2 
\right]
\left[\boundq_e(\boundx_b,\bounde_b)^2|\tilde e|^2
+
\boundq_x(\boundx_b,\bounde_b)^2 |\tilde x|^2\right],
\end{eqnarray*}
for all  
$(x,e)\in  \mathcal{S}(\boundx_b,\bounde_b)$, $t\in [0,T]$.
Hence, by letting 
$$
c := 
\dfrac{\nu}{2\left[
\frac{1}{\mu }
+
2\kappa \left|N_z^{-1/2} P_\zeta  M\right|^2 
\right]\max\{\boundq_e(\boundx_b,\bounde_b)^2,\boundq_x(\boundx_b,\bounde_b)^2 \}}
$$
we conclude, by using \eqref{eq:derU}, that
the derivative of $U_s := U + c V_{\zeta,s}$
satisfies
\begin{equation}
\label{eq:derUs}
\dot U_s \leq  - \dfrac\nu2 \left( 
|\tilde x|^2 +|\tilde e|^2
+ |M^\top N_z \tilde \zeta |^2
\right)
- \dfrac{c\kappa}2 \tilde\zeta^\top N_z \tilde \zeta 
\end{equation}
for all 
$(x,e)\in  \mathcal{S}(\boundx_b,\bounde_b)$, $t\in [0,T]$.
At this point is important to note for the exponential stability that, on the left hand side we have the time 
derivative of a function which,
although time dependent via its $V_x$ component, 
can be upperbounded by 
time independent positive definite quadratic forms of $(\tilde x,e,\tilde \zeta)$ and on the right hand side,
we have a time independent positive definite quadratic form.

Now, as a consequence of inequalities
\eqref{eq:boundx_bounde_1} and \eqref{eq:ineq_bounds_b},
 we also have
\begin{equation}
\label{eq:ineq_bounds_sets}
9\overline{P}_x \boundx_a^2 + 9 \bounde_a^2 + 9 h \boundzeta_a^2
 \;\leq \; \frac56 \min\left\{\underline{P_x} (\boundx_b-\boundx_a)^2,
(\bounde_b-\bounde_a)^2\right\}.
\end{equation}
Then,
with the definitions of $\boundx_b$ and 
$\bounde_b$ given in \eqref{eq:temp_bound_xe2},
let
\begin{eqnarray*}
\mathcal{O} & := &\{
(\tilde x, \tilde e, \tilde \zeta ) : |\tilde x|< 
\boundx_b-\boundx_a, \ 
  |\tilde e|< \bounde_b-\bounde_a\},
\\
&=& \left\{
(\tilde x, \tilde e, \tilde \zeta ) : |\tilde x|< 
\left(1+3\sqrt2 \sqrt{ \tfrac{\overline P_x}{\underline P_x}}\right)\boundx_a, \ 
  |\tilde e|< 5\bounde_b \right\}.
\end{eqnarray*}
Our interest in this set is coming from the following implication,
given by \eqref{eq:prop_exp_bound_periodic}, 
\begin{equation}
\label{eq:temp_cond_xe_O}
(\tilde x, \tilde e, \tilde \zeta ) \in \mathcal{O}
\quad
\Longrightarrow
\quad
\left\{
\begin{array}{l}
|x| \leq |\tilde x|+|x_p| < \boundx_b \ ,
\\
|e| \leq  |\tilde e|+|e_p| < \bounde_b \ .
\end{array}
\right.
\end{equation}
As a consequence, let $(\tilde x(t),\tilde e(t),\tilde \zeta (t))$ be an arbitrary  solution with $[0,\tau )$ as right 
maximal interval of definition into the set $\mathcal{O}$ and with initial conditions in
 the set $\N_\zeta$  defined in \eqref{eq:setN}, that is, 
$$
 N_\zeta=
\bigg\{ (x,e,\zeta) \ : \
|x-x_p(0)| \leq  3\boundx_a,
\; |e-e_p(0)| \leq  3\bounde_a, \;
\sqrt{(\zeta-\zeta_p(0))^\top N_z (\zeta-\zeta_p(0))} \leq 3\boundzeta_a
\bigg\}.
$$
 Note that by using the definitions of $\tilde x$, $\tilde e$,
 $\tilde\zeta$
and the bounds on the periodic solution
 $(x_p,e_p,\zeta_p)$
in \eqref{eq:prop_exp_bound_periodic}, 
$\N_z$ is a subset of
$ \{(\tilde x, \tilde e, \tilde \zeta): |\tilde x|\leq 4\boundx_a, |\tilde e|\leq 4\bounde_a\}$
and therefore, by using the definitions of  $\boundx_b$, and 
$\bounde_b$ given in \eqref{eq:temp_bound_xe2}, 
$\N_z$ is a subset of
$\mathcal{O}$.

Now, if $\tau$ is finite, we have
$$
\lim_{t\to \tau }|\tilde x (t)|= \boundx_b-\boundx_a, 
\quad \textnormal{or} \quad
\lim_{t\to \tau }|\tilde e (t)|= \bounde_b-\bounde_a, 
\quad \textnormal{or} \quad
\lim_{t\to \tau }\tilde \zeta(t) ^\top N_z  \tilde \zeta(t) = \infty.
$$
But because of  \eqref{eq:temp_cond_xe_O},
 we can use \eqref{eq:derU}
and \eqref{eq:setN}
to get 
 $$
 U(t) \leq U(0) \leq \overline{P}_x (3\boundx_a)^2 + 
 (3\bounde_a)^2 + h (3\boundzeta_a)^2
 $$
 for all $t\in [0,\tau )$,
 and in particular, by using the definition of $U$
 and inequality \eqref{eq:ineq_bounds_sets}, 
$$
\begin{array}{rcl}
|\tilde x(t)| & \leq &
\sqrt{\frac{1}{\underline P_x}U(t)} < 
 \boundx_b-\boundx_a 
\\
|\tilde e(t)| &\leq &  
  \bounde_b-\bounde_a
 \\
 \tilde \zeta(t) ^\top N_z  \tilde \zeta(t) & \leq &
 \displaystyle \frac{1}{h}
\min\left\{\underline{P_x} (\boundx_b-\boundx_a)^2,
(\bounde_p-\bounde_a)^2\right\}
\end{array}
$$
for all $t\in [0,\tau )$.
This contradicts the consequences of having $\tau $ finite. So
$\tau $ is infinite and with the inequality  \eqref{eq:derUs} and the 
remark following it, we conclude the periodic solution is asymptotically stable and locally exponentially 
stable.
Its domain of attraction contains the 
set $\N_\zeta$,
defined in \eqref{eq:setN}.
This concludes the proof.\qed
\end{proof}

\subsection{Proof of Item 1) of Theorem~\ref{thm1}}
To establish the result 
of item 1 of Theorem~\ref{thm1}, we rely on Propositions \ref{prop:periodic_solution} and 
\ref{prop:exponential_solution}.
We start by defining the numbers
$\boundx$, $\bounde$, $\boundz$, and $\sigma ^\star$ of the claim.
Let $\boundx_a, \bounde_a$ be given by Proposition~\ref{prop:exponential_solution}. Then
let $\boundx_p, \bounde_p,\boundr_x, \boundr_e, \boundzeta_p,\sigma ^\star_p$ be given by Proposition~\ref{prop:periodic_solution}, which thanks to Remark 
\ref{rem1}, can be assumed to satisfy
\begin{equation}
\label{20}
\boundx_p\leq \boundx_a
\quad ,\qquad 
\bounde_p\leq \bounde_a
\  .
\end{equation}
We define
\begin{equation}
\label{eq:boundxe_item1}
\boundx := \boundx_p, 
\qquad
\bounde := \bounde_p, 
\end{equation}
and 
\begin{equation}
\label{eq:boundz_item1}
\boundz :=  \boundzeta_p + \overline N_z\bounde, 
\end{equation}
with $\overline N_z$ defined in 
\eqref{eq:sequence_nzk}.
We choose $\mu\geq1$ arbitrary. Then, with
\begin{equation}
\label{eq:boundzeta_item1}
\boundzeta_a := \boundzeta_p 
+ 2\overline N_z\bounde ,
\end{equation}
Proposition~\ref{prop:exponential_solution} gives us $\sigma ^\star_a$. We select
$$
\sigma^\star := \max\{\sigma ^\star_p, \sigma ^\star_a\} .
$$

With these numbers defined, we 
consider system 
\eqref{eq:sys}, 
\eqref{eq:controller}
for some given $\sigma>\sigma^\star$,
and some given $n_o$.
We apply the change of coordinates 
\eqref{eq:change_of_coordinates_z_zeta}
to the closed-loop system \eqref{eq:sys}, 
\eqref{eq:controller}
 to obtain \eqref{eq:closed-loop}.
Proposition
\ref{prop:periodic_solution} guarantees the existence of a periodic solution
$(x_p,e_p, \zeta_p)$ to 
system \eqref{eq:closed-loop} and therefore a periodic solution
$(x_p,e_p,z_p)$ to  \eqref{eq:sys}-\eqref{eq:controller}, with
$$
z_p = \zeta_p + Me_p
.
$$
With (\ref{eq:defPhiM}) and (\ref{eq:sequence_nzk}), it satisfies
\begin{align}  \notag
\sup_{t\in[0,T]}|x_p(t)| & \;\leq  \;
\min\left\{\boundx_p,\frac{\boundr_x}{\sqrt{\sigma }} \right\}
\; \leq \;
\boundx, 
\\ \label{eq:boundx_item2}
\sup_{t\in[0,T]}|e_p(t)| &\;\leq\;
\min\left\{\bounde,\frac{\boundr_e}{{\sqrt \sigma}}\right\}
\;\leq \;\bounde, 
\\ \notag
\sup_{t\in[0,T]}\sqrt{\zeta_p(t)^\top N_z \zeta_p(t)} &\;\leq\;
\boundzeta_p , 
\\ \notag
\sup_{t\in[0,T]}\sqrt{z_p(t)^\top N_z z_p(t)}
 & \leq 
\sup_{t\in[0,T]}\sqrt{(\zeta_p(t)+M e_p(t))^\top N_z (\zeta_p(t) + Me_p(t)}
\\ \notag
&\leq 
\sup_{t\in[0,T]}\sqrt{\zeta_p(t)^\top N_z \zeta_p(t)}
+
\sqrt{M^\top N_z M }\sup_{t\in[0,T]} |e_p(t)|
\\ \notag
&\leq 
\boundzeta_p +  \overline{N}_z\bounde_p
= \boundz
\end{align}
This establishes the inequalities \eqref{eq:theorem_periodic_bound} and \eqref{eq:theorem_2_bound_e1}
with $\psi_1:= \boundr_e$.

Moreover Proposition~\ref{prop:exponential_solution} 
guarantees the periodic solution
$(x_p,e_p, \zeta_p)$ is locally exponentially stable with a
domain of attraction that includes 
the set
$\N_\zeta(\boundx_a, \bounde_a,\boundzeta_a)$ defined in (\ref{eq:setN}).
Hence the domain of attraction of $(x_p,e_p,z_p)$ contains the set
\\[1em]$\displaystyle 
\Big\{(x,e,z): \; |x-x_p(0)| \leq  3\boundx_a,
\quad |e-e_p(0)| \leq  3\bounde_a,
$\hfill\null\\\null\hfill$\displaystyle
\sqrt{(z-z_p(0)-M(e-e_p(0))^\top N_z (z-z_p(0)-M(e-e_p(0))} \leq 3\boundzeta_a
\Big\}.
$\\[1em]
But we have the implications
\begin{eqnarray*}
\left\{|x|\leq 2\boundx\quad \& \quad |x_p(0)|\leq \boundx\right\}&\quad \Rightarrow\quad &
|x-x_p(0)|\leq |x|+|x_p(0)|\leq 3 \boundx
\\
\left\{|e|\leq 2\bounde\quad \& \quad |e_p(0)|\leq \bounde\right\}&\quad \Rightarrow\quad &
|e-e_p(0)|\leq |e|+|e_p(0)|\leq 3 \bounde
\end{eqnarray*}
$\displaystyle 
\left\{\sqrt{z^\top N_z z}\leq 2\boundz\quad \& \quad \sqrt{z_p(0)^\top N_z z_p(0)}\leq \boundz
\quad \& \quad |e|\leq 2\bounde\quad \& \quad |e_p(0)| \leq  \bounde\right\}
$\hfill\null\\\null \quad  $\displaystyle
\Rightarrow\   
\sqrt{(z-z_p(0)-M(e-e_p(0))^\top N_z (z-z_p(0)-M(e-e_p(0))}
$\hfill\null\\\null\hfill$\displaystyle
\renewcommand{\arraystretch}{1.7}
\begin{array}[t]{cl}
 \leq &
\sqrt{(z-z_p(0))^\top N_z (z-z_p(0))} + 
\sqrt{M^\top N_z M}|e-e_p(0)|
\\
 \leq &\sqrt{z^\top N_z z} + \sqrt{z_p(0)^\top N_z z_p(0)}
+ \overline N_z 3 \bounde
\\
 \leq &3(\boundz + \overline N_z\bounde)
 = 3(\boundzeta_p + 2 \overline N_z\bounde)
\leq 3\boundzeta_a
\end{array}
$\\[1em]
Hence the set $\N(\boundx, \bounde,\boundz)$
defined in \eqref{eq:theorem_periodic_domain} is contained in the above set and therefore in the domain of attraction of $(x_p,e_p,z_p)$.\qed

\subsection{Proof of Items 2)-4) of Theorem~\ref{thm1}}
First, note that item 2) of Theorem~\ref{thm1}
is a straightforward consequence of the inequalities 
\eqref{eq:prop_periodic_bound_e} and 
\eqref{eq:prop_periodic_bound_x}
 claimed in the statement 
of Proposition~\ref{prop:periodic_solution}, that 
is $\psi_1 := \boundr_e$ and $\psi_x := \boundr_x$.

Then, the $C^2$ periodic solution $(x_p,e_p,\zeta_p)$ given by Item 1) satisfies (see \eqref{eq:closed-loop})
\begin{equation}
\label{21}
\begin{array}{rcl}
\dot e_p & = & -\sigma e_p + \mu M^\top N_z \zeta_p + q(t,x_p,e_p),
\\
\dot \zeta_p & = & (\Phi - \mu M M^\top N_z)\zeta_p - M 
q(t,x_p,e_p).
\end{array}
\end{equation}
and it is
the sum of its Fourier series, i.e.
\begin{align*}
e_p(t) =  \sum_{k\in \ZZ}e_{pk} \exp(\im k\tfrac{2\pi}{T} t),
\\
\zeta_p(t) =  \sum_{k\in \ZZ}\zeta_{pk} \exp(\im k\tfrac{2\pi}{T} t),
\\
q(t,x_p(t),e_p(t)) = \sum_{k\in \ZZ}q_{pk} \exp(\im k\tfrac{2\pi}{T} t),
\end{align*}
where the index $k$ in $\{0,\ldots,\infty \}$ denotes the 
$k$-th Fourier coefficient. Because of \eqref{21},
$e_{pk}, \zeta_{pk}$  satisfy
\begin{equation}
\label{eq:fourier_e_zeta}
\renewcommand{\arraystretch}{1.5}
\begin{array}{rcl}
\left(ki{\textstyle\frac{2\pi}{T}}  +\sigma \right) e_{pk}&=&\displaystyle 
\mu M^\top N_z \zeta_{pk}
+ q_{pk}
\  ,
\\
\left(ki{\textstyle\frac{2\pi}{T}}  - \Phi +
\mu M M^\top N_z\right)\zeta_{pk} &=& - M q_{pk}
\end{array}
\end{equation}
When we are interested in expressing 
the $\zeta_p$ dynamics oscillator
by oscillator, it is appropriate to exhibit the corresponding components of $\zeta _{pk}$ as
$$
\zeta_{pk}=(\zeta_{pk0},\ldots ,\zeta_{pk\ell},\ldots\zeta_{pkn_o}),
$$
so where the index $\ell$, or $m$ below, in $\{0,\ldots,n_o\}$
refers to the $\ell$-th component of $\zeta_{pk}$.
With \eqref{eq:defPhiM}, \eqref{eq:fourier_e_zeta} becomes
\begin{equation}
\label{22}
\renewcommand{\arraystretch}{1.5}
\begin{array}{rcl}
(ki{\textstyle\frac{2\pi}{T}}  +\sigma ) e_{pk}&=&\displaystyle 
\mu \sum_{m=0}^{n_o} M_m^\top N_{zm} \zeta _{pkm} + q_{pk}
\  ,
\\\displaystyle 
   \left(\begin{array}{cc}
ki{\textstyle\frac{2\pi}{T}} &   \omega _\ell  \\ - \omega _\ell & k i {\textstyle\frac{2\pi}{T}}  
\end{array}\right)
\zeta _{pk \ell}+ \mu M_\ell
\sum_{m=0}^{n_o} M_m^\top N_{zm} \zeta _{pkm}&=& -M_\ell q_{pk}
\qquad \forall \ell\in \{0,1,\dots,n_o\}
\  .
\end{array}
\end{equation}
Assume that for some $\ell$ in $\{0,1,\ldots,n_o\}$, we have
$$
\omega _\ell\;=\; \mathfrak{K} _\ell {\textstyle\frac{2\pi}{T}}
$$
where $\mathfrak{K} _\ell$ is an integer. With (\ref{22}), this implies in particular
$$
\mathfrak{K} _\ell  {\textstyle\frac{2\pi}{T}}
 \left(\begin{array}{cc}
 \im &   1 \\ - 1 &  \im  
\end{array}\right)
\zeta _{p\mathfrak{K} _\ell \ell}\;=\; -M_\ell\left[ \mu 
\sum_{m=0}^{n_o} M_m^\top N_{zm} \zeta _{p\mathfrak{K} _\ell m}+ q_{p\mathfrak{K} _\ell}
\right]
\;=\; -M_\ell (k\im {\textstyle\frac{2\pi}{T}}  +\sigma ) e_{p\mathfrak{K} _\ell}
$$
Then, with the identities
$$
 \left(\begin{array}{cc}
 \im &   -1  
\end{array}\right)
 \left(\begin{array}{cc}
 \im &   1 \\ - 1 &  \im  
\end{array}\right)\;=\; 0
\quad ,\qquad 
 \left(\begin{array}{cc}
 \im &   -1  
\end{array}\right) M_\ell\;=\;  \left(\begin{array}{c}
 \im \\ 0
\end{array}\right),
$$
we obtain finally
$$
0\;=\; e_{p\mathfrak{K} _\ell}\;=\; \int_{0}^T \cos(\mathfrak{K}_\ell\tfrac{2\pi}T t)e_p(t)
+ \im\,  \int_{0}^T \sin(\mathfrak{K}_\ell\tfrac{2\pi}T t)e_p(t) . 
$$
This is \eqref{eq:FourierCoeff} in Item 3)

To establish Item 4), we note that \eqref{eq:fourier_e_zeta} gives
$$
\zeta _{pk}=
-
\left(k\im{\textstyle\frac{2\pi}{T}}  - \Phi +
\mu M M^\top N_z\right)^{-1}
Mq_{pk}
= -\frac{ ( k\im{\textstyle\frac{2\pi}{T}}  I-\Phi)^{-1} M}{1+\mu  M^\top N_z ( k\im{\textstyle\frac{2\pi}{T}}  I-\Phi)^{-1} M} q_{pk}
\  ,
$$
and therefore
\begin{align*}
e_{pk} & =  (q_{pk}+\mu  M^\top N_z \zeta _{pk})
\frac{1}{ki{\textstyle\frac{2\pi}{T}}  +\sigma } 
 = 
\frac{1}{
1+\mu  M^\top N_z ( k\im{\textstyle\frac{2\pi}{T}}  I-\Phi)^{-1} M}
\frac{1}{k\im{\textstyle\frac{2\pi}{T}}  +\sigma } 
q_{pk}.
\end{align*}
With the definitions \eqref{eq:defPhiM}, where $\omega _0=0$, we have
\begin{eqnarray*}
M^\top N_z ( k\im{\textstyle\frac{2\pi}{T}}  I-\Phi)^{-1} M&=&
\frac{n_{z0}}{k\im{\textstyle\frac{2\pi}{T}}}
+
\sum_{\ell=1}^{n_o}
\left(\begin{array}{cc}
1 & 0
\end{array}\right)n_{z\ell}
\left(\begin{array}{cc}
k\im {\textstyle\frac{2\pi}{T}} & \omega _\ell
\\
-\omega _\ell & k\im {\textstyle\frac{2\pi}{T}}
\end{array}\right)\left(\begin{array}{c}
1 \\ 0
\end{array}\right)
\\
&=&\im
k {\textstyle\frac{2\pi}{T}}
\sum_{\ell=0}^{n_o}\frac{n_{z\ell}}{\omega _\ell^2-k^2 [{\textstyle\frac{2\pi}{T}}]^2}
\end{eqnarray*}
and therefore, 
\begin{equation}
\label{eq:gain_ek}
|e_{pk}|^2\;=\; \frac{1}{1+\mu ^2\left[
k {\textstyle\frac{2\pi}{T}}
\displaystyle \sum_{\ell=0}^{n_o}
\textstyle \frac{n_{z\ell}}{\omega _\ell^2-k^2 [{\textstyle\frac{2\pi}{T}}]^2}
\right]^2}
\frac{|q_{pk}|^2}{k^2[{\textstyle\frac{2\pi}{T}}]^2  +\sigma ^2}
\end{equation}
Note that the first factor of the right hand side comes with
the presence of the internal model. Indeed, without this model,
we would get
$$
|e_{pk}|^2\;=\; 
\frac{|q_{pk}|^2}{k^2[{\textstyle\frac{2\pi}{T}}]^2  +\sigma ^2}.
$$
So now, if we suppose
that all 
$\omega_\ell$ 
in \eqref{eq:controller}
are selected
as 
$$
\omega_\ell = \ell \frac{2\pi}{T}, 
\qquad \forall \;  \ell \in \{1, \ldots, n_o\}.
$$
then, from \eqref{eq:gain_ek}, we obtain
$$
\begin{array}{rcll}
|e_{pk}|^2&=&0\qquad  & \forall |k|\leq n_o,
\\[.5em]
&\leq &
\frac{1}{(n_o+1)^2[{\textstyle\frac{2\pi}{T}}]^2  +\sigma ^2}
|q_{pk}|^2
\qquad & \forall |k|\geq n_o+1.
\end{array}
$$
Parseval's theorem then gives us
\begin{eqnarray*}
\dfrac{1}{T}\int_{0}^T |e_p(t)|^2 dt &=&
\sum_{k\in\ZZ} |e_{pk}|^2,
\\
&\leq &\frac{1}{(n_o+1)^2[{\textstyle\frac{2\pi}{T}}]^2  +\sigma ^2}
\sum_{k:|k|\geq n_o+1}|q_{pk}|^2
\\
&\leq &\frac{1}{(n_o+1)^2[{\textstyle\frac{2\pi}{T}}]^2  +\sigma ^2}
\dfrac{1}{T}\int_{0}^T |q(t,x_p(t),e_p(t))|^2 dt
\\
&\leq &\frac{1}{(n_o+1)^2}
\frac{\boundq(\boundx_p,\bounde_p)^2}{[{\textstyle\frac{2\pi}{T}}]^2 }
\end{eqnarray*}
This is \eqref{eq:theorem_2_bound_e2}, 
with $\psi_2:= [{\textstyle\frac{T}{2\pi}}]^2 \boundq(\boundx_p,\bounde_p)^2$.\qed

\section{Proof of Theorem~\ref{thm2}}
\label{sec:proof2}
\subsection{Preliminaries}
A key feature of the infinite dimensional dynamical regulator \eqref{eq:controller}
is that, when $e=0$, it 
can reproduce any desired
$C^1$ $T$-periodic function, as stated by the following 
lemma.

\begin{lemma}\label{lemma:friend}
For any $c\in \C^1_T(\RR)$
and any $\mu\geq1$, there exists 
$z_p \in \cL^2_{N_z}$ such that the solution 
to 
\begin{subequations}\label{eq:regulator_equation}
\begin{equation}
\label{eq:regulator_equation_z}
\dot z = \Phi z, 
\qquad z(0)= z_p
\end{equation}
is in $\C^1_T(\cL^2_{N_z})$ and satisfies
\begin{equation}
\label{eq:friend}
\mu M^\top N_z z(t) = c(t)
\qquad
\forall\ t\geq0
\end{equation}
\end{subequations}
with $\Phi, M, N_z$ defined in
\eqref{eq:defPhiM-inf}.
\end{lemma}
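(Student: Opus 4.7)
The plan is to construct $z_p$ explicitly via Fourier matching, exploiting the fact that $\Phi$ is block-diagonal with blocks acting as pure rotations at frequencies $\omega_\ell=\ell\tfrac{2\pi}{T}$. Concretely, I would expand $c\in\C^1_T(\RR)$ in its real Fourier series
\begin{equation*}
c(t) \;=\; c_0 \;+\; \sum_{\ell\geq 1}\bigl[c_\ell^c\cos(\omega_\ell t) + c_\ell^s\sin(\omega_\ell t)\bigr]
\end{equation*}
and define the candidate initial condition component-wise by $z_{p,0} := c_0/(\mu n_{z0})$ and $z_{p,\ell} := (c_\ell^c,\,c_\ell^s)^\top/(\mu n_{z\ell})$ for $\ell\geq 1$. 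Under the flow $\dot z = \Phi z$, each block evolves as $z_\ell(t) = e^{\Phi_\ell t}z_{p,\ell}$, which is the rotation by angle $\omega_\ell t$, so a direct computation gives $\mu M_\ell^\top N_{z\ell} z_\ell(t) = c_\ell^c\cos(\omega_\ell t) + c_\ell^s\sin(\omega_\ell t)$. Summing over $\ell$ and adding the contribution from $\ell=0$ then yields $\mu M^\top N_z z(t) = c(t)$, and $T$-periodicity of $z$ is automatic since all frequencies are integer multiples of $2\pi/T$.

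Next, I would verify the square-summability $z_p\in\cL^2_{N_z}$. Using condition \eqref{23} with index $1$, i.e.\ $n_{z\ell}\geq n_{z1}/\ell^2$ for $\ell\geq 1$, a direct estimate gives
\begin{equation*}
\|z_p\|_{N_z}^2 \;=\; \frac{c_0^2}{\mu^2 n_{z0}} + \sum_{\ell\geq 1}\frac{(c_\ell^c)^2 + (c_\ell^s)^2}{\mu^2 n_{z\ell}}\;\leq\; \frac{c_0^2}{\mu^2 n_{z0}} + \frac{1}{\mu^2 n_{z1}}\sum_{\ell\geq 1}\ell^2\bigl[(c_\ell^c)^2+(c_\ell^s)^2\bigr],
\end{equation*}
and the remaining sum is finite by Parseval's identity applied to $\dot c\in L^2([0,T];\RR)$, which is available because $c\in\C^1_T(\RR)$. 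Continuity of $t\mapsto z(t)$ in $\cL^2_{N_z}$ is then immediate, since each block of $e^{\Phi t}$ is a rotation and hence preserves the weighted norm $\|\cdot\|_{N_z}$, i.e.\ $\Phi$ generates a strongly continuous group of isometries on $\cL^2_{N_z}$.

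The main technical obstacle is to upgrade this strong continuity to the $\C^1_T(\cL^2_{N_z})$ regularity claimed in the statement: the strong derivative must equal $\Phi z(t)$, which forces $z_p$ to lie in the domain of $\Phi$, i.e.\ the finiteness of $\sum_\ell n_{z\ell}\omega_\ell^2|z_{p,\ell}|^2 \,=\, (2\pi/T)^2\mu^{-2}\sum_{\ell\geq 1}\ell^2\bigl[(c_\ell^c)^2+(c_\ell^s)^2\bigr]/n_{z\ell}$. A naive application of \eqref{23} introduces an extra $\ell^2$ factor and would demand more than mere $C^1$ regularity on $c$. My plan to close this gap is to exploit the full strength of the summability conditions \eqref{eq:conditions_series}\,---\,which sharply constrain the decay of $n_{z\ell}$ between $1/\ell$ and $1/\ell^2$ through \eqref{26} and \eqref{23}\,---\,combined with a density argument approximating $c$ by smoother periodic functions and using continuity of the Fourier-matching map $c\mapsto z_p$ together with the closedness of $\Phi$ on $\cL^2_{N_z}$, in order to pass to the limit and recover $z\in\C^1_T(\cL^2_{N_z})$ satisfying \eqref{eq:friend}.
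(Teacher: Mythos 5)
Your construction of $z_p$ by Fourier matching and your verification that $z_p\in\cL^2_{N_z}$ via \eqref{23} and Parseval applied to $\dot c$ coincide with the paper's proof (the paper writes $z_{p0}=c_0/\mu$ and then drops a factor $n_{z0}$ in the output computation; your normalisation $z_{p,0}=c_0/(\mu n_{z0})$ is the consistent one). Where you depart from the paper is in your last paragraph, and there the departure does not succeed. The obstruction you identify is real: strong differentiability of $t\mapsto z(t)$ into $\cL^2_{N_z}$ at any point is equivalent to $z_p\in D(\Phi)$, i.e.\ to $\sum_\ell n_{z\ell}\omega_\ell^2|z_{p,\ell}|^2=(2\pi/T)^2\mu^{-2}\sum_{\ell\geq1}\ell^2|c_\ell|^2/n_{z\ell}<\infty$; since \eqref{26} forces $n_{z\ell}\leq n_{z1}/\ell$, this sum dominates $n_{z1}^{-1}(2\pi/T)^2\mu^{-2}\sum_\ell \ell^3|c_\ell|^2$, which diverges for suitable $c\in\C^1_T(\RR)$ (take $\dot c$ continuous but not in $H^{1/2}$). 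So strong $C^1$ regularity into $\cL^2_{N_z}$ genuinely fails for some admissible $c$, and no argument --- in particular not yours --- can recover it: approximating $c$ by smoother $c^{(n)}$ and invoking closedness of $\Phi$ only tells you that \emph{if} $\Phi z_p^{(n)}$ converges then the limit lies in the graph of $\Phi$; it gives no control on $\Phi z_p^{(n)}$, and in the divergent cases there is nothing for it to converge to.

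To be fair, the paper itself simply asserts ``this implies $z\in\C^1_T(\cL^2_{N_z})$'' with no justification, so you have uncovered a (minor) imprecision in the paper rather than introduced one. The properties the lemma is actually used for in the proof of Theorem~2 are exactly those your first two paragraphs establish: $z_p\in\cL^2_{N_z}$, $t\mapsto z(t)$ is a $T$-periodic continuous (indeed isometric) curve in $\cL^2_{N_z}$, each finite block satisfies its ODE exactly, and the scalar output $t\mapsto\mu M^\top N_z z(t)=c(t)$ is $C^1$. The honest way to close your proof is therefore not to chase strong differentiability but to interpret the regularity claim in this mild/componentwise sense (or, alternatively, to strengthen the hypothesis on $c$ to $\sum_\ell\ell^2|c_\ell|^2/n_{z\ell}<\infty$), and to note that this weaker conclusion is all that is needed downstream.
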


\begin{proof}
Since $c$ is $C^1$ and $T$-periodic, it can be expressed
as the sum of its
Fourier series as
$$
c(t) = c_0 + \sum_{k=1}^{\infty} c_k^c \cos(k\omega t)
+ c_k^s \sin(k\omega t), 
\qquad \omega = \dfrac{2\pi}{T}.
$$
Then, let the $k$-th component  $z_{pk}$ of the
 vector $z_p$ be
 selected as 
 $$
 z_{p0} = \dfrac{c_0}{\mu}, \qquad
 z_{pk}=  
 \dfrac{1}{\mu n_{zk}}  c_k, 
 \qquad
c_k =  \begin{pmatrix}
  c_k^c \\ c_k^s
\end{pmatrix}  
\qquad \forall \; k = 1, \ldots, \infty.
 $$
To verify that $z_p$ is in the set $\cL^2_{N_z}$ as defined in 
definition \eqref{eq:def_space_L2Nz}, we compute
\begin{equation}
\label{24}
\sum_{k=1}^\infty n_{zk} |z_{pk}|^2
 = 
\dfrac{1}{\mu^2} \sum_{k=1}^\infty  
\dfrac{1}{n_{zk}}|c_k|^2
\leq 
\dfrac{1}{\mu^2}
\sup_{k\in\NN_{>0}}
\dfrac{1}{k^2n_{zk}}
 \sum_{k=1}^\infty  
k^2|c_k|^2.
\end{equation}
Now, $\sup_{k\in\NN_{>0}} \dfrac{1}{k^2n_{zk}}$ is finite by assumption
\eqref{23}
as well as
$k|c_k|$ is square summable
since $c$, being $C^1$, $\frac{dc}{dt}$ is square integrable.

To verify \eqref{eq:friend} holds, we remind that,
component-wise, the elements of $z$ are given by the solution of the differential equations 
$$
\begin{array}{rclrcl}
\dot z_0 &= & 0,  
&z_0(0) & = &z_{p0}
\\[.5em]
\dot z_k &=& \Phi_k z_k, \qquad  
&z_k(0) & = &z_{pk} \qquad 
\forall \; k=1, \ldots, \infty,
\end{array}
$$
with $\Phi_k$ defined in
\eqref{eq:defPhiM-inf}.
Hence
$$
\begin{array}{rcl}
z_0 & = & z_{p0},
\\[.5em]
z_k(t) &=& 
\left(\begin{array}{cc}
\cos (k\omega t) 
&  \sin (k\omega t)
\\
-\sin (k\omega t)
&
\cos (k\omega t) 
\end{array}\right)
 z_{pk}
\qquad 
\forall \; k=1, \ldots, \infty.
\end{array}
$$
This implies $z\in \C^1_T(\cL^2_{N_z})$,
Then,
by using the definition 
of $M, N_z$ in
\eqref{eq:defPhiM},
we obtain
\begin{align*}
\mu M^\top N_z z(t) & =  
\mu  z_0(t)
+
\mu \sum_{k=1}^\infty n_{zk} M_k^\top \exp(\Phi_k t) z_{pk} 
\\
  & = 
  \mu  z_{p0}(t)
+
\mu \sum_{k=1}^\infty  n_{zk} [\cos (k\omega t) z_{pk}^c
+  \sin (k\omega t) z_{pk}^s]
\\
& = 
c_0 + 
\sum_{k=1}^{\infty} c_k^c \cos(k\omega t)
+ c_k^s \sin(k\omega t)
 = c(t)
\end{align*}
 for all $t\geq0$.
 This concludes the proof.\qed
\end{proof}

Now, let  us define 
\begin{equation}
\label{eq:friend_inf}
c(t) := -q(t,0,0).
\end{equation}
By construction $c\in\C^1_T([0,T];\RR)$.
Hence, let  $z_p$ be given with such $c$
by Lemma~\ref{lemma:friend} and
consider the following change of coordinates
\begin{equation}
\label{eq:change_of_coordinates_z_zeta_inf}
z\mapsto \zeta := z - z_p(t)- Me
\end{equation}
which
transforms
the closed-loop system 
\eqref{eq:sys}, \eqref{eq:controller}
into
\begin{equation}\label{eq:closed-loop-inf}
\begin{array}{rcl}
\dot x & = & F(t)x  
+ \delta f(t,x,e)
\\
\dot e & = & \Delta(t,x,e) - \sigma e + \mu M^\top N_z \zeta
\\
\dot {\zeta} & = &(\Phi - \mu M M^\top N_z)\zeta  - 
M\Delta(t,x,e),
\end{array}
\end{equation}
where $F,\delta f$ are defined as in \eqref{eq:f_decomposition}, 
namely
$$
\delta f(t,x,e):= f(t,x,e) - F(t)x, 
\qquad
F(t) :=  \frac{\partial f}{\partial x}(t,0,0),
$$
and $\Delta$ is a $C^1$ $T$-periodic function defined as
$$
\Delta(t,x,e) : = q(t, x, e)- q(t,0,0).
$$
and satisfying
\begin{equation}
\label{eq:boundDelta-inf}
|\Delta(t,x,e)| \leq \boundq_e(\boundx,\bounde)|e|
+
\boundq_x(\boundx,\bounde)|x|
\end{equation}
for any $(x,e)\in\cS_T(\boundx,\bounde)$.

The metric space $\cL^2_{N_z}$ being complete
and all functions being locally Lipschitz, 
 we are guaranteed about existence of solutions to \eqref{eq:closed-loop-inf}, as stated by the following
known result on existence of solutions
for differential equations in Banach spaces proved for example in
\cite[\S 1.1]{Deimling},
\cite[Theorem  1.8.3]{Cartan}
or
\cite[Proposition VI.1.2 \& Theorem VI.3.1]{Martin}.

\begin{proposition}\label{prop:existence_solution_inf}
Let $D$ be an arbitrary open bounded subset of $\RR^n\times\RR\times \times \cL^2_{N_z}$, then for any
$(x(t_0)$, $e(t_0)$, $\zeta (t_0))$ in $D$ and any $t_0$, system \eqref{eq:closed-loop-inf} has a unique solution 
$(x(t)$, $e(t)$, $\zeta (t))$ with values in $D$, defined 
either on $[t_0,+\infty [$ or only on an interval $[t_0,t_0+\bar \tau [$ which is maximal in the sense that
\begin{equation}
\label{bartaufini}
\lim_{t\to\bar \tau } d( (x(t),e(t),\zeta (t))\,  ,\,  \partial D)\;=\; 0,
\end{equation}
where $d$ is the distance on $\RR^n\times\RR \times \cL^2_{N_z}$ and $\partial D$ is the boundary of 
$D$.
\end{proposition}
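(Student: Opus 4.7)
The plan is to recast system \eqref{eq:closed-loop-inf} as an abstract semilinear Cauchy problem $\dot w = \mathcal{A} w + \mathcal{G}(t,w)$ on the Hilbert space $\mathcal{X} := \RR^n \times \RR \times \cL^2_{N_z}$, with $w=(x,e,\zeta)$, $\mathcal{A} := \blkdiag(0_n,0,\Phi)$, and $\mathcal{G}(t,w)$ collecting all the remaining terms (the finite-dimensional contributions involving $F,\delta f,\sigma,\Delta$, together with the coupling $\mu M^\top N_z\zeta$ and the forcing $-M\Delta$, as well as the bounded rank-one correction $-\mu MM^\top N_z\zeta$). Once in this form, both assertions of the proposition follow from the standard local existence, uniqueness and blow-up theory for semilinear evolution equations in Banach spaces cited in the excerpt.

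The chief technical obstacle is that $\Phi$ is an unbounded operator on $\cL^2_{N_z}$, so Picard--Lindel\"of cannot be invoked naively on the whole right-hand side. To handle this, I would first observe that each block $\Phi_\ell$ is skew-symmetric on $\RR^{2}$ (with $\Phi_0=0$) and that the scalar weight $n_{z\ell}I$ commutes with $\Phi_\ell$; hence $\Phi$ is skew-adjoint on $\cL^2_{N_z}$ with respect to the weighted inner product induced by $N_z$, on the dense domain $\{\zeta:\sum_\ell n_{z\ell}\omega_\ell^2|\zeta_\ell|^2<\infty\}$. Stone's theorem then gives that $\Phi$ generates a unitary $C_0$-group on $\cL^2_{N_z}$, whence $\mathcal{A}$ generates a $C_0$-semigroup on $\mathcal{X}$. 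Next, the summability condition \eqref{eq:sequence_nzk} together with Cauchy--Schwarz shows that $\zeta\mapsto M^\top N_z\zeta$ is a bounded linear functional of norm at most $\overline{N}_z$, and that for every scalar $r$, $Mr\in\cL^2_{N_z}$ with $\|Mr\|_{N_z}\leq \overline{N}_z|r|$. Combined with the $C^2$ regularity of $f$ and $q$, this proves that $\mathcal{G}$ is jointly continuous and locally Lipschitz as a map $[0,+\infty)\times\mathcal{X}\to\mathcal{X}$, with a Lipschitz constant uniform on bounded subsets of $\mathcal{X}$.

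The rest is standard. A Banach fixed point argument applied to the mild formulation $\zeta(t)=e^{\Phi(t-t_0)}\zeta(t_0)+\int_{t_0}^t e^{\Phi(t-s)}[-\mu MM^\top N_z\zeta(s)-M\Delta(s,x(s),e(s))]\,ds$ coupled with the Volterra integral form of the $(x,e)$ equations produces, on a sufficiently short interval $[t_0,t_0+\delta]$, a unique continuous solution with values in $D$; thanks to the uniformity of the Lipschitz constant, $\delta$ can be chosen uniform over any compact subset of $D$. The maximal-interval alternative \eqref{bartaufini} is then the usual blow-up dichotomy: if $\bar\tau<\infty$ and the trajectory did not approach $\partial D$, it would remain in a compact subset of $D$ along some sequence $t_k\uparrow\bar\tau$, and the uniform local existence time on that compact would allow continuation past $\bar\tau$, contradicting maximality.
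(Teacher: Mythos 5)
Your proposal is correct, but it follows a genuinely different—and in fact more careful—route than the paper. The paper disposes of this proposition in one line: it asserts that the right-hand side of \eqref{eq:closed-loop-inf} is locally Lipschitz on the complete metric space $\RR^n\times\RR\times\cL^2_{N_z}$ and cites the classical Picard--Lindel\"of/continuation theory for ODEs in Banach spaces (Deimling, Cartan, Martin); it does not address the fact that $\Phi$ is an unbounded operator on $\cL^2_{N_z}$ (indeed $\|\Phi\zeta\|_{N_z}^2=\sum_\ell n_{z\ell}\omega_\ell^2|\zeta_\ell|^2$ is finite only on a dense subdomain), so the literal hypothesis of those theorems fails for the term $\Phi\zeta$. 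You identify exactly this obstruction and repair it: $\Phi$ is skew-adjoint for the $N_z$-weighted inner product, Stone's theorem gives a unitary $C_0$-group, the remaining terms (including the rank-one correction $-\mu MM^\top N_z$ and the forcing $-M\Delta$, both controlled via $M^\top N_z M=\overline{N}_z^2<\infty$) are locally Lipschitz, and the mild formulation plus Banach fixed point yields local well-posedness and the blow-up alternative. What your route buys is a rigorous proof (in the mild-solution sense, which is all the subsequent Lyapunov analysis needs, since the potentially problematic term $\zeta^\top N_z\Phi\zeta$ vanishes by skew-adjointness and the estimates extend by density); what the paper's route buys is brevity at the cost of glossing over the unboundedness of $\Phi$. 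One small repair to your continuation step: in an infinite-dimensional space a trajectory that stays away from $\partial D$ need not revisit a \emph{compact} subset of $D$, so you should instead observe that the uniform lower bound on the local existence time only requires the initial data to range over a set that is bounded (automatic, since $D$ is bounded) and at positive distance from $\partial D$; with that substitution the contradiction with maximality of $\bar\tau$ goes through unchanged.
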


\subsection{Stability Analysis}
 
 We have the following proposition.

\begin{proposition}\label{prop:exponential_solution_inf}
Given any triplet $(\overline P_x,\underline P_x,\alpha)$
 there exist
strictly positive
real numbers  $\boundx_\infty, \bounde_\infty>0$
and, for any $\boundzeta_\infty>0$
and $\mu\geq1$, 
there exists $\sigma^\star_\infty>0$
such that, the following holds.
\begin{enumerate}
\item[(i)] Solutions of \eqref{eq:closed-loop-inf}
starting from the set 
\begin{equation}
\label{eq:Omega_init}
\Omega_{\init}=
\Big\{(x,e,\zeta)\in \RR^n\times \RR \times \cL^2_{N_z} : \; |x| \leq  3\boundx_a,
\; |e| \leq 3\bounde_a, \;
\|\zeta\|_{N_z} \leq 3\boundzeta_a
\Big\}
\end{equation}
are complete and bounded in positive times and
 $\zeta(t)\in \cL^2_{N_z}$ for all $t\geq0$.
\item[(ii)] The origin of \eqref{eq:closed-loop-inf}
is Lyapunov stable with domain of attraction 
containing
$\Omega_{\init}$.
\item[(iii)] Solutions starting from 
$\Omega_{\init}$ satisfies
$$
\lim_{t\to\infty} x(t) = 0, 
\qquad
\lim_{t\to\infty}e(t) = 0.
$$
\end{enumerate}
\end{proposition}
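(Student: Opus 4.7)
The plan is to mirror the Lyapunov analysis of Proposition~\ref{prop:exponential_solution} applied to system \eqref{eq:closed-loop-inf}, exploiting the key structural property that the ``friend'' $z_p$ from Lemma~\ref{lemma:friend} reduces the driving term of the $\zeta$-dynamics to $\Delta(t,x,e)=q(t,x,e)-q(t,0,0)$, which vanishes at $(x,e)=(0,0)$. Hence $(x,e,\zeta)=0$ is a genuine equilibrium of \eqref{eq:closed-loop-inf}, and the proof of Proposition~\ref{prop:exponential_solution} can be copied almost verbatim with $x_p\equiv 0$, $e_p\equiv 0$, $\zeta_p\equiv 0$. The constants $\boundx_\infty,\bounde_\infty$ are chosen to satisfy inequalities analogous to \eqref{eq:boundx_bounde}, but with the function $\rho_0$ in \eqref{eq:def_rho_0} simplified by dropping the $\boundx_b$-term: since $x_p$ and $e_p$ vanish, so does the perturbation $\delta F$ of the nominal linearisation. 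Because the nominal stability margin $\alpha\underline{P}_x$ is thus fully preserved, one can choose $\sigma^\star_\infty<\sigma^\star$ in the analogue of \eqref{eq:ineq_bounds_sigma}.

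For the Lyapunov computation, consider $U := x^\top P_x(t)x + e^2 + h\,\zeta^\top N_z\zeta$ with $P_x$ as in \eqref{eq:Pass1}--\eqref{eq:Pass2} and $h>0$ chosen via an analogue of \eqref{eq:ineq_bounds_b}. The crucial point for the infinite-dimensional $\zeta$-dynamics is that $\Phi$ is formally skew-adjoint with respect to the inner product induced by $N_z$ (each block satisfies $N_{z\ell}\Phi_\ell+\Phi_\ell^\top N_{z\ell}=0$), so that differentiating $\zeta^\top N_z\zeta$ along \eqref{eq:closed-loop-inf} yields $-2\mu|M^\top N_z\zeta|^2 - 2\zeta^\top N_z M\Delta(t,x,e)$. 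Combining this with the estimates for $\dot V_x$ and $\dot V_e$, and invoking \eqref{eq:boundDelta-inf} together with the bounds of Definitions~\ref{def:familyF}--\ref{def:familyQ}, one obtains, in the region $\cS(\boundx_\infty,\bounde_\infty)$, an inequality $\dot U\leq -\tilde\chi^\top\R\,\tilde\chi$ with $\tilde\chi=(|x|,|e|,|M^\top N_z\zeta|)^\top$ and $\R$ identical to the matrix in \eqref{eq:R-lyap1}. Positive-definiteness of $\R$ for $\sigma>\sigma^\star_\infty$ then gives $\dot U\leq-\nu(|x|^2+|e|^2+|M^\top N_z\zeta|^2)$ for some $\nu>0$.

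This immediately implies items (i) and (ii). By choosing $\boundx_\infty,\bounde_\infty,h$ large enough with respect to $3\boundx_a,3\bounde_a,3\boundzeta_a$ (in the spirit of \eqref{eq:boundx_bounde_1}), any initial condition in $\Omega_\init$ gives $U(0)$ small enough that, since $\dot U\leq 0$, the solution remains in the interior of the region of validity and $\|\zeta(t)\|_{N_z}$ stays bounded; Proposition~\ref{prop:existence_solution_inf} then rules out finite-time blow-up via \eqref{bartaufini}, ensuring completeness. Lyapunov stability follows from the coercivity of $U$: by shrinking the initial ball we shrink $U(0)$ and, by monotonicity, $U(t)$ uniformly. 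For item (iii), integrating $\dot U\leq-\nu(|x|^2+|e|^2)$ yields $\int_0^\infty(|x(t)|^2+|e(t)|^2)\,dt<\infty$; using $|M^\top N_z\zeta|\leq\overline{N}_z\|\zeta\|_{N_z}$ together with the bounds in Definitions~\ref{def:familyF}--\ref{def:familyQ} one sees that $\dot x,\dot e$ are bounded, so $|x|^2+|e|^2$ is uniformly continuous, and Barbalat's lemma gives $x(t)\to 0$ and $e(t)\to 0$.

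The main obstacle is precisely that, contrary to Proposition~\ref{prop:exponential_solution}, one cannot use the correction term $\kappa P_\zeta$ of Lemma~\ref{lemma:hurwitzPhiM} to upgrade the semi-definite estimate on $|M^\top N_z\zeta|$ to a definite one on $\|\zeta\|_{N_z}$: the operator $P_\zeta$ is inherently $n_o$-dependent and need not extend to a bounded operator on $\cL^2_{N_z}$. Consequently the $\zeta$-component is only shown to remain bounded in $\cL^2_{N_z}$ rather than to tend to zero, and the convergence of $(x,e)$ is only asymptotic rather than exponential, consistently with the footnote of Section~\ref{sec:main} on infinite-dimensional regulators with bounded control operators.
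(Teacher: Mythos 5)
Your proposal is correct and follows essentially the same route as the paper: the same Lyapunov function $U=x^\top P_x(t)x+e^2+h\,\zeta^\top N_z\zeta$, the same observation that the friend $z_p$ makes the origin a true equilibrium so the $\delta F$ perturbation of the linearisation disappears (whence $\sigma^\star_\infty<\sigma^\star$), the same invariant-sublevel-set argument combined with the existence result to get completeness, and Barbalat's lemma for item (iii); you also correctly identify that the $\kappa P_\zeta$ correction is unavailable in infinite dimensions, which is exactly why the paper settles for boundedness of $\zeta$ and merely asymptotic convergence of $(x,e)$. (Only a minor slip: $h$ must be chosen \emph{small} enough relative to $\boundzeta_a$, not large, but your reference to the finite-dimensional inequalities shows the mechanism is understood.)
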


\begin{proof}~We follow similar steps of the proof 
of Proposition~\ref{prop:exponential_solution}.
First, let us define
the non-decreasing function $\rho_0^\infty$ as
\begin{equation}
\label{eq:rho_0_inf}
\rho_{0}^\infty(\boundx) :=  \dfrac{\overline P_x}2
\boundf_{xx}(\boundx,0)\boundx \qquad s\geq0.
\end{equation}
Let $\boundx_b$, $\bounde_b$ be parametrized as follows
\begin{equation}
\label{eq:temp_bound_xe_inf}
\boundx_b:= \left(
3{\sqrt2} \sqrt{\frac{\overline P_x}{\underline P_x}}+ 2 \right)\boundx_a, 
\qquad
\bounde_b := 6 \bounde_a.
\end{equation}
and let 
$\boundx_a, \bounde_a$ be any pair of positive real numbers 
satisfying
\begin{equation} 
\label{eq:bound_xe_inf}
\begin{array}{rcl}
9\boundx_\infty^2  \overline P_x + 9 \bounde_\infty^2 
& \leq &
\dfrac12\min\big\{\underline P_x \boundx_b^2,\; \bounde_b^2\big\},
\\[1em]
\rho_{0}^\infty(\boundx_b) & \leq &\tfrac{1}{3}\alpha \underline P_x
\end{array}
\end{equation}
Now, given any $\boundzeta_a>0$,
 let $h>0$ be
 selected such that
\begin{equation} 
\label{eq:bound_h_inf}
h \leq  \dfrac{1}{3} 
\min\left\{
\dfrac{1}{(3\boundzeta_a)^2}
\min\big\{\underline P_x \boundx_b^2,\; \bounde_b^2\big\},
\;
\mu \dfrac{\alpha\underline P_x}{
\boundq_x(\boundx_b,\bounde_b)^2}
\right\}.
\end{equation}
Finally, let $\sigma_\infty^\star >0$ be chosen as 
\begin{equation}
\label{eq:bound_sigma_inf}
\sigma_\infty^\star :=
\frac{
\left[\overline{P}_x \boundf_e(\boundx_b, \bounde_b)
+\boundq_x(\boundx_b,\bounde_b)\right]^2
}{
2\left[\alpha \underline P_x- \rho_0(\boundx_b) - 
\tfrac{h}{\mu}\boundq_x(\boundx_b, \bounde_b)^2\right]
}+\frac{\mu }{2h}+  \boundq_e(\boundx_b, \bounde_b)
+\dfrac{h}{\mu}\boundq_e(\boundx_b, \bounde_b)^2.
\end{equation}
Then, by following the same steps, consider the functions
$$
V_x := x^\top P_x
(t)
 x,  
\qquad
V_e := e^2, 
\qquad
V_{\zeta}:= \zeta^\top N_z \zeta.
$$
By using \eqref{eq:Pass2}, 
and  \eqref{app:bound_deltaf},
their derivatives satisfy,
for all $(x,e)$ in $\cS(\boundx_b,\bounde_b)$, $t\in[0,T]$,
\begin{align*}
\dot V_x  \leq &  -2 \alpha x^\top P_x(t)x + 2x^\top P_x(t)\delta f(t,x,e)  
\\
\leq &  -2 \alpha \underline P_x |x|^2 +
2 |x| \overline P_x\boundf_e(\boundx_b,\bounde_b)\,  |e|
+
\overline P_x \boundf_{xx}(\boundx_b,0)\,  | x|^3 
\\
\leq &  -2 [\alpha \underline P_x   - \rho_0^\infty(\boundx_b)]|x|^2 +
2  \overline P_x\boundf_e(\boundx_b,\bounde_b)\,  |e| \, |x|
\end{align*}
with $\rho_0^\infty$ defined as in \eqref{eq:rho_0_inf},
and 
\begin{align*}
\dot V_e
\leq &
-2\sigma |e|^2 + 2\mu M^\top N_\zeta \zeta e + 
2\boundq_e(\boundx_b,\bounde_b)|e|^2 + 
2\boundq_x(\boundx_b,\bounde_b)|x| \, |e|,
\\[.5em]
\dot V_\zeta \leq  &
- \mu |M^\top N_\zeta \zeta|^2 + 
\dfrac{2}{\mu} 
\boundq_e(\boundx_b,\bounde_b)^2|e|^2 + 
\boundq_x(\boundx_b,\bounde_b)^2|x|^2.
\end{align*}
Therefore, 
collecting all the inequalities
together, 
and by defining
$$
U :=  V_x + V_e + h V_\zeta, 
\qquad  \chi := (| x|, | e|,
|M^\top N_z \zeta|)^\top ,
$$
with $h$ selected as in \eqref{eq:bound_h_inf},
the time derivative of the Lyapunov function $U$
satisfies
$$
\dot U \leq - \chi^\top \R^\infty(\boundx_b, \bounde_b, \sigma, \mu, b) \chi
\qquad \forall (x,e,\zeta)\in \cS_T(\boundx_b, \bounde_b)\times \cL^2_{N_z},
$$
with $\R^\infty$ defined as  
\begin{equation}
\label{eq:R-lyap1_inf}
\R^\infty(\boundx_b, \bounde_b,\sigma,\mu,h) := \begin{pmatrix}
2(\alpha \underline{P}_x -\rho_{11}^\infty(\boundx_b,\bounde_b,\mu,h)) && 
 -\rho_{12}^\infty(\boundx_b,\bounde_b) && 0 
\\
-\rho_{12}^\infty(\boundx_b,\bounde_b) && 
2(\sigma-\rho_{22}^\infty(\boundx_b,\bounde_b,\mu,h)) && -\mu
\\
 0&& -\mu && \mu h
\end{pmatrix},
\end{equation}
and where the functions $\rho_{11}^\infty$,
$\rho_{12}^\infty$ and $\rho_{22}^\infty$ are 
non-decreasing functions
defined as
$$
\label{eq:R-lyap2_inf}
\begin{array}{rl}
\rho_{11}^\infty(\boundx_b,\bounde_b,\mu,h)  := & 
 \rho_0^\infty(\boundx_b) + 
\tfrac{h}{\mu}\boundq_x(\boundx_b, \bounde_b)^2,
\\
\rho_{12}^\infty(\boundx_b,\bounde_b) : = & 
 \overline{P}_x 
\boundf_e(\boundx_b, \bounde_b)
+\boundq_x(\boundx_b,\bounde_b),
\\
\rho_{22}^\infty(\boundx_b,\bounde_b,\mu,h)  : = &
  \boundq_e(\boundx_b, \bounde_b)
+\tfrac{h}{\mu}\boundq_e(\boundx_b, \bounde_b)^2 .
\end{array}
$$
According to Sylvester's criterion, the matrix $\R^\infty$
is positive definite if and only if the leading principal minors
are all strictly positive, that is, we need to satisfy
\begin{eqnarray}
\label{16_inf}
\alpha \underline P_x - \rho_{11}^\infty& >0,
\\
\label{17_inf}
4(\alpha \underline P_x - \rho_{11}^\infty)(\sigma-\rho_{22}^\infty)-(\rho_{12}^\infty)^2 &>0,
\\
\label{18_inf}
2(\alpha \underline{P}_x -\rho_{11}^\infty)
\left[2(\sigma-\rho_{22}^\infty) 
 h
-
\mu
\right]
-
h
(\rho_{12}^\infty)^2
&>0,
\end{eqnarray}
where the arguments of the functions $\rho_{ij}^\infty$
have been omitted for compactness.
These inequalities are always satisfied
for any $\sigma>\sigma ^\star_\infty$ since
\begin{itemize}
\item[$\bullet$]
\eqref{16_inf} is implied by \eqref{eq:bound_xe_inf} and \eqref{eq:bound_h_inf}.
\item[$\bullet$]
\eqref{17_inf} is implied by \eqref{16_inf} and \eqref{18_inf}.
\item[$\bullet$]
\eqref{18_inf} is implied by \eqref{eq:bound_sigma_inf}.
\end{itemize}
As a consequence, for any $\sigma>\sigma ^\star_\infty$,
there exists a  positive real number
 $\varepsilon>0$
such that 
\begin{equation}
\label{eq:ineqU_inf}
\dot U \leq -\varepsilon(|x|^2 +|e|^2 + |M^\top N_z\zeta|^2)
\qquad
\forall (x,e,\zeta)\in \cS(\boundx_b, \bounde_b)\times \cL^2_{N_z}.
\end{equation}
Now, let  $\overline{\upsilon}$ be the strictly positive real number defined as
$$
\overline{\upsilon} := \min_{(x,e,\zeta)\not \in \Omega_{\init}}
|x|^2 \overline{P}_x + | e|^2 + h\zeta ^\top N_z \zeta 
 \leq (3\boundx_a)^2 \overline{P}_x +
  (3\bounde_a)^2 + h (3\boundzeta_a)^2
$$
and define
$$
\Omega_{\max} := \big\{(x,e,\zeta)\in \RR^n\times \RR\times \cL^2_{N_z}:
U(x,e,\zeta,t)\leq \overline{\upsilon}, \ t\in [0,T]
\big\}.
$$
In view of the choices of $\boundx_a,\boundx_b,h$ 
in
\eqref{eq:bound_xe_inf}, \eqref{eq:bound_h_inf},
we have  
$\overline{\upsilon}\leq \tfrac56\min\big\{\underline P_x \boundx_b^2,\; \bounde_b^2\big\}$
 and therefore it is straightforward to see that
$$
\Omega_{\max} \subset \cS(\boundx_b, \bounde_b)\times \cL^2_{N_z}.
$$
Now let $D$ be an open bounded superset of $\Omega_{\max}$
contained in  $\cS(\boundx_b, \bounde_b)\times \cL^2_{N_z}$.
Let $(x(t),e(t),\zeta (t))$ be the unique solution to 
system \eqref{eq:closed-loop-inf},
established by Proposition~\ref{prop:existence_solution_inf},
 with initial condition in $\Omega_{\init}$
and right maximally defined on $[0,\bar \tau [$ on $D$.
With \eqref{eq:ineqU_inf}, we have established
$$
U(x(t),e(t),\zeta(t),t)\leq  U(x(0),e(0),\zeta(0),0) < \overline{\upsilon}
\qquad  \forall t \in [0,\bar \tau [,
$$
and therefore
$$
(x(t),e(t),\zeta(t)) \in \Omega_{\max} 
\  ,\quad 
\zeta(t)\in \cL_{N_z}^2
\qquad  \forall t \in [0,\bar \tau [.
$$
But $\Omega_{\max}$ being a closed subset of the open set $D$, we have
$$
d(\Omega _{\max},\partial D) >0
$$
and therefore a contradiction with (\ref{bartaufini}) if $\bar \tau $ is finite.
So the solution $(x(t),e(t),\zeta (t))$ is defined on $[0,+\infty [$ and takes it values in $\Omega _{\max}$.
It follows that the function $t\mapsto ((x(t),e(t),\zeta (t),M^\top N_z\zeta (t))
\in \RR^n\times \RR\times  \cL_{N_z}^2 \times\cL_{N_z}^2$ is bounded
and
$$
|q(t,x(t),e(t))|\leq \boundq(\bar\boundx,\bar\bounde)\qquad \forall t\in [0,+\infty [
\  .
$$
This shows item (i).

Next, consider the set of functions 
$(x,e,\zeta)$ so that $\overline U$ is bounded, with $\overline U$
defined as
$$
\overline U(x,e,\zeta):= |x|^2 \overline P_x + |e|^2 
+ h \zeta^\top N_z \zeta.
$$
We have that $\sqrt{\overline U}$ defines a norm.
Now, for any $\epsilon>0$ small enough, 
let  $\delta = \epsilon \tfrac{\underline P_x }{\overline P_x}$, 
and consider any initial condition 
$(x(0),e(0),\zeta(0))\in\Omega_{\init}$
satisfying
$$
\overline U(x(0),e(0),\zeta(0))\leq \delta.
$$
Using \eqref{eq:ineqU_inf} along 
solutions, we obtain
$$
\begin{array}{rcl}
\overline U(t) &\leq & \dfrac{\overline P_x}{\underline P_x} U(t)
\; \leq  \; \dfrac{\overline P_x}{\underline P_x} U(0)
\; \leq \;  \dfrac{\overline P_x}{\underline P_x}\overline U(0)
\; \leq \;\epsilon.
\end{array}
$$ 
This shows that that the origin of \eqref{eq:closed-loop-inf}
is Lyapunov stable 
with a domain of attraction which includes
the set $\{(x,e,\zeta):\overbar U(0) \leq \delta_{\max}\}$, 
where  $\delta_{\max}>0$ is 
the smallest $\delta$ such that 
$\overbar U(x^*, e^*, \zeta^*) > \delta$ implies 
$(x^*, e^*, \zeta^*)\not\in\Omega_{\init}$.
This shows item (ii) of the theorem.

Finally, by using  \eqref{eq:ineqU_inf} along 
solutions, we obtain
$$
 \lim_{\tau \to \infty }
\int_0^\tau 
\varepsilon(|x(t)|^2 + e^2(t) )
dt\; \leq \; U(0)\,.
$$
On the left hand side each term in the integrand is a function with 
non negative values, and, 
since, according to \eqref{eq:closed-loop-inf},
the function $t\mapsto (\dot x(t),\dot e(t))$ is bounded, the function $t\mapsto  (x(t),e(t)$
is Lipschitz (and therefore uniformly continuous).
Hence by applying Barbalat's lemma 
we get
$$
\lim_{t\to\infty} |e(t)| = 0\;, \qquad \lim_{t\to\infty}|x(t)|=0 \;, 
$$
which shows item (iii) and concludes the proof.
\qed
\end{proof}

\subsection{Proof of Theorem~\ref{thm2}}
The statement of Theorem~\ref{thm2} follows
by combining the results of
Lemma~\ref{lemma:friend} and 
 Propositions \ref{prop:existence_solution_inf}
 and \ref{prop:exponential_solution_inf}.
To this end, let  $c(t)= -q(t,0,0)$ be given as in 
\eqref{eq:friend_inf} and apply Lemma~\ref{lemma:friend}
to obtain the initial condition $z_p\in \cL^2_{N_z}$. 
By further using \eqref{24} and the properties of 
$n_{z\ell}$ in \eqref{23},
we also have 
$\|z_p\|_{N_z}\leq \boundz_\infty$, 
with $\boundz_\infty = \dfrac{1}{\mu^2n_{z1}} \boundq_t(0,0)$.
Now, let $\bounde$
be given  by Proposition~\ref{prop:exponential_solution_inf}, 
and select $\boundzeta_\infty  = 2\boundz_\infty + 2\overline N_z \bounde_\infty$.
With such choice, we ensure that both initial conditions 
$z(0) = 0$ and $z(0) = z_p$ belong the set
$\Omega_{\init}$ defined in \eqref{eq:Omega_init}
and are therefore included 
 in the domain 
of attraction
in view of item (ii) of 
Proposition~\ref{prop:exponential_solution_inf}.
Indeed,  following similar computations
to those used in the proof of item 1) of Theorem~\ref{thm1},
we can show that
$$
\Big\{(x,e,z)\in \RR^n\times \RR\times \cL^2_{N_z}: |x|\leq 2\boundx_\infty, \
|e|\leq 2\bounde_\infty, \ \|z\|_{N_z}\leq 2\boundz_\infty \Big\} 
\ \subseteq \ \Omega_{\init}.
$$
Note that such initial conditions are of particular interest, since $z(0)=0$
corresponds to the origin of the control law (which is a quite natural
choice when the steady-state solution for $z$ is unknown), while $z(0)= z_p$ corresponds
to the origin of the closed-loop system \eqref{eq:closed-loop-inf}.
Now, by applying Proposition~\ref{prop:existence_solution_inf}
and \ref{prop:exponential_solution_inf}, we are guaranteed
that,  for $\sigma>\sigma^\star_\infty$,
the solution to system \eqref{eq:closed-loop-inf} to be complete
forward in time. Furthermore, as 
 $\zeta(t)$ is in $\cL_{N_z}^2$ for all $t\geq 0$, 
 and $M$ is in  $\cL_{N_z}^2$ by design, 
 we conclude that also  $z(t)$ is in $\cL_{N_z}^2$ for all $t\geq 0$.
 Moreover, $x$ and $e$ converges asymptotically to zero.

Finally, 
the last part of the proof consists in showing that 
the domain of attraction given by Theorem~\ref{thm1}
with a finite-dimensional regulator, 
is not reduced  in terms of $(x,e)$-coordinates
  with the infinite-dimensional regulator.
  Moreover, for such a bound, we want also 
$\sigma^\star_\infty< \sigma^\star$.
In turn, by entering the details of the proof of 
Theorem~\ref{thm1}, 
this consists in showing that for
 Proposition~\ref{prop:existence_solution_inf}
we can select the bounds 
$\boundx_a, \bounde_a$
and $\boundzeta_a$
given by 
Proposition~\ref{prop:exponential_solution}, 
and then compare $\sigma_a^\star$  with the resulting 
 $\sigma^\star_\infty$.
To this end, 
it suffices first to note that the bounds 
in \eqref{eq:bound_xe_inf} are 
less conservatives than 
\eqref{eq:boundx_bounde} for two main reasons:
in the first pair of equations, the right term is 
less conservative since we have 
$\boundx_b$ (resp. $\bounde_b$)
and not $\boundx_b-\boundx_a$ (resp. 
 $\bounde_b-\bounde_a$);
the second condition is less conservative in view of the definition of
$\rho_0^\infty$ 
given in \eqref{eq:rho_0_inf} 
compared to that of $\rho_0$ in \eqref{eq:def_rho_0}. 
As a consequence, any selection of
$\boundx_a, \bounde_a, \boundzeta_a$ given by 
Proposition~\ref{prop:exponential_solution}
is a feasible choice for the statement of 
Proposition~\ref{prop:exponential_solution_inf}.
This ensure that the domain of attraction 
in terms of $(x,e)$ coordinates
for the infinite-dimensional regulator contains 
the domain of attraction obtained in the finite-dimensional case, 
and therefore that the bounds $\boundx, \bounde, \boundz$
of Theorem~\ref{thm2} can be taken from Theorem~\ref{thm1}.
Hence, to complete that statement of the theorem, 
it suffices to show that for a given set of values, 
$\sigma_\infty^\star < \sigma_a^\star$
given by Proposition~\ref{prop:exponential_solution}.
Again, this can be easily proved by 
noting that the definition of 
$h$ in \eqref{eq:bound_h_inf}
is less conservative than the bound 
\eqref{eq:ineq_bounds_b}, 
and therefore, for a given $h$, 
we have
$\sigma_\infty^\star<\sigma_a^\star$
since $\rho_0^\infty(\boundx_b)< \rho_0(\boundx_a, \bounde_b)$, 
see \eqref{eq:ineq_bounds_sigma} and
\eqref{eq:bound_sigma_inf}.
\qed

\section{Supplementary Material: a Technical Lemma}

\begin{lemma}
\label{lemma:transfer_function}
Let $\mu>0$ and  let $n_{z\ell }$  be a sequences of positive real numbers 
satisfying \eqref{eq:sequence_nzk}.
 There exist strictly positive real numbers $\kappa_0$ and $\kappa_1$
independent of $n_o$  such that  
the function $\mathcursive{T}(x)$ defined  as in 
\eqref{eq:zeta_pk_simplified}, i.e. 
\begin{equation*}
\mathcursive{T}(x)\;=\; 
\dfrac{\displaystyle
\sum_{\ell=0}^{n_o} 
\frac{\ell^2+x^2}{(\ell^2 - x^2)^2}n_{z\ell}
}
{\displaystyle
1 + \mu^2   x^2
\left(
\sum_{\ell=0}^{n_o}
 \frac{1}{\ell^2-x^2} n_{z\ell}
\right)^2
}
\end{equation*}
satisfies
\begin{equation}
\label{32}
\mathcursive{T}(x)
\; \leq  \;
\kappa_0 + \kappa_1 x^2
\qquad
\forall \; x \in \RR, \; n_o \in \NN, \;  \mu\geq 1.
\end{equation}
\end{lemma}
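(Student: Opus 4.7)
The plan is to reduce the estimate to a careful decomposition of the sums near the integer closest to $x$, exploiting a sharper-than-uniform decay bound on the remainder of $R$ in order to handle the off-resonance regime. Since $\mathcursive{T}$ is even in $x$, I restrict to $x\geq 0$. For $x\in[0,\frac12]$ the nearest integer is $\ell^*=0$; here one checks by direct computation that $\mathcursive{T}$ is uniformly bounded, the singularity $-n_{z0}/x^2$ of $R$ cancelling the $n_{z0}/x^2$ growth of $S$ through the denominator of $\mathcursive{T}$. The substantive work concerns $x\geq\frac12$. I let $\ell^*\in\{1,\dots,n_o\}$ be the integer nearest to $x$ (taking $\ell^*=n_o$ if $x>n_o+\frac12$), set $\rho:=\ell^{*2}-x^2$, and split the sums
$$
S(x):=\sum_{\ell=0}^{n_o}\frac{(\ell^2+x^2)n_{z\ell}}{(\ell^2-x^2)^2}=s_{\ell^*}(x)+\tilde S(x),\qquad R(x)=r_{\ell^*}(x)+\tilde R(x),
$$
with principal parts $s_{\ell^*}(x)=(\ell^{*2}+x^2)n_{z\ell^*}/\rho^2$ and $r_{\ell^*}(x)=n_{z\ell^*}/\rho$.

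The first key step is a uniform control of the two remainders. For $\ell\neq\ell^*$, the inequality $|\ell^2-x^2|=|\ell-x|(\ell+x)\geq |\ell-\ell^*|\,\ell^*/4$, which follows from $|\ell-x|\geq|\ell-\ell^*|/2$ (using $|\ell-\ell^*|\geq 1$ and $|x-\ell^*|\leq\frac12$) and from $\ell+x\geq\ell^*/2$, combined with $n_{z\ell}\leq n_{z1}/\ell$ (consequence of \eqref{26}) and the elementary bound $\sum_{\ell\geq 1,\ell\neq\ell^*}1/(\ell|\ell-\ell^*|)=O(\log\ell^*/\ell^*)$, produces
$$
|\tilde S(x)|\leq C_S,\qquad |\tilde R(x)|\leq \frac{C_R}{\ell^*},
$$
with constants $C_S,C_R$ independent of $n_o$. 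The $1/\ell^*$ decay of $\tilde R$ is the linchpin of the argument.

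The core of the proof is then a dichotomy on $|\rho|$. In the \emph{resonance case} $|\rho|\leq n_{z\ell^*}\ell^*/(2C_R)$, one has $|r_{\ell^*}|\geq 2|\tilde R|$, hence $|R|\geq|r_{\ell^*}|/2$, and the feedback term in the denominator delivers
$$
\frac{s_{\ell^*}(x)}{1+\mu^2 x^2 R(x)^2}\;\leq\; \frac{4(\ell^{*2}+x^2)}{\mu^2 x^2\, n_{z\ell^*}}\;\leq\;\frac{80\,x^2}{\mu^2 n_{z1}},
$$
where the last inequality uses $n_{z\ell^*}\geq n_{z1}/\ell^{*2}$ (from \eqref{23}), $x^2\geq\ell^{*2}/4$, and $\ell^{*2}+x^2\leq 5x^2$. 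In the \emph{off-resonance case} $|\rho|>n_{z\ell^*}\ell^*/(2C_R)$, the denominator bound $1+\mu^2 x^2 R^2\geq 1$ suffices because $s_{\ell^*}$ itself is small:
$$
s_{\ell^*}(x)\;\leq\; \frac{4C_R^2(\ell^{*2}+x^2)}{n_{z\ell^*}\,\ell^{*2}}\;\leq\;\frac{4C_R^2(\ell^{*2}+x^2)}{n_{z1}}\;\leq\;\frac{20\,C_R^2\,x^2}{n_{z1}},
$$
again by $n_{z\ell^*}\geq n_{z1}/\ell^{*2}$. Adding the $\tilde S$ remainder in each case yields $\mathcursive{T}(x)\leq\kappa_0+\kappa_1 x^2$ with constants independent of $n_o$ and of $\mu\geq 1$.

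The principal obstacle is proving the sharp bound $|\tilde R|\leq C_R/\ell^*$: a naive uniform estimate $|\tilde R|\leq C$ would only yield $s_{\ell^*}\lesssim \ell^{*2}\cdot x^2\lesssim x^4$ in the off-resonance case, which is insufficient. The extra factor $1/\ell^*$ comes from the geometric gain $\ell+x\geq\ell^*/2$ (valid precisely because $x$ is located near $\ell^*$) combined with the first-order summability condition \eqref{26}, and it is exactly what is needed to close the estimate at the quadratic level rather than quartic.
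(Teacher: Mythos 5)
Your argument is correct in substance and takes a genuinely different, and in fact leaner, route than the paper's. The paper also isolates the resonant contribution, but it keeps the \emph{two} nearest oscillators (indices $m$ and $m+1$ for $x\in(m,m+1)$), normalises the remainder of the inner sum by $n_{zm}$ to obtain the uniform bound $|S_m|\leq\mathbf{S}$, and then tames the near-resonant block $F_{n_o}$ through a rational change of variables $z=\Psi(x)$ together with a three-way splitting of $(m^2,(m+1)^2)$ governed by a parameter $\lambda\in[\varpi/m,1/5]$ whose admissibility takes several pages of estimates. You keep only the single nearest integer $\ell^*$, prove the decay $|\tilde R|\leq C_R/\ell^*$ for the unnormalised remainder, and replace the change of variables by a scalar dichotomy on $|\rho|=|\ell^{*2}-x^2|$: near resonance the denominator term $\mu^2x^2R^2\geq\mu^2x^2n_{z\ell^*}^2/(4\rho^2)$ absorbs the $\rho^{-2}$ blow-up of $s_{\ell^*}$ (and the $\rho^2$ cancellation also covers integer $x$ by continuity, matching the paper's exact value $\mathcursive{T}(\ell)=2/(\mu^2n_{z\ell})$); off resonance $s_{\ell^*}$ is already $O(x^2)$ because $\rho^2\gtrsim n_{z\ell^*}^2\ell^{*2}$. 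Both proofs then convert the bound into $O(x^2)$ by the same arithmetic, $n_{z\ell^*}\geq n_{z1}/\ell^{*2}$ from \eqref{23} and $\ell^{*2}\leq4x^2$ (the paper's final $512/(\mu^2n_{zm})\leq512\,m^2/(\mu^2n_{z1})$ is your resonance bound in disguise), and both yield constants uniform in $n_o$ and in $\mu\geq1$. What your version buys is the elimination of the $\Psi$/$\lambda$ machinery; it costs nothing essential.

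One step needs repair, though only in its justification, not in its conclusion. The three ingredients you invoke for the remainders --- $|\ell^2-x^2|\geq|\ell-\ell^*|\ell^*/4$, $n_{z\ell}\leq n_{z1}/\ell$, and $\sum_{\ell\neq\ell^*}1/(\ell|\ell-\ell^*|)=O(\log\ell^*/\ell^*)$ --- do prove $|\tilde R|\leq C_R/\ell^*$, but applied verbatim to $\tilde S$ they give
\begin{equation*}
|\tilde S(x)|\;\leq\;\frac{16\,n_{z1}}{\ell^{*2}}\sum_{\ell\geq1,\,\ell\neq\ell^*}\frac{\ell^2+x^2}{\ell\,|\ell-\ell^*|^2}\,,
\end{equation*}
whose tail behaves like $\sum_\ell\ell^{-1}$ and diverges: for $\ell\gg\ell^*$ the numerator $\ell^2+x^2$ grows like $\ell^2$ while your lower bound on $(\ell^2-x^2)^2$ only grows like $\ell^2\ell^{*2}$. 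The fix is elementary. Split at $\ell=2\ell^*$: for $1\leq\ell\leq2\ell^*$ (and $|x-\ell^*|\leq\tfrac12$) you have $\ell^2+x^2\leq7\ell^{*2}$, which cancels the $\ell^{*2}$ in the denominator and leaves $\sum_j n_{z0}/j^2$; for $\ell>2\ell^*$ you have $\ell^2-x^2\geq\tfrac{7}{16}\ell^2$ directly, so the terms are $O(n_{z\ell}/\ell^2)$ and the summability \eqref{eq:sequence_nzk} finishes the job (the regime $x>n_o+\tfrac12$ is entirely off-resonant and handled by the same second estimate). With that amendment the proof closes.
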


\begin{proof}{}
Seemingly the function $\mathcursive{T}$ has singularities at $x=\ell$. But having~:
$$
\mathcursive{T}(x)\;=\; 
\frac{\displaystyle 
\sum_{\ell=0}^{n_o}
\left(\ell  ^2 + x^2\right)\,  n_{z\ell}
\left[\prod_{m=0,\neq \ell}^{n_o}\left(
m^2-x^2\right)\right]^2
}{\displaystyle 
\left[\prod_{m=0}^{n_o}\left(
m^2-x^2\right)\right]^2
+
\mu ^2 x^2\left[\sum_{\ell=0}^{n_o}
n_{z\ell}
\prod_{m=0,\neq \ell}^{n_o}\left(
m^2-x^2\right)\right]^2
}\,
$$
we conclude that these singularities are fictitious and actually $\mathcursive{T}$ is a $C^\infty $ function
of $x$ on $\RR$. Moreover we have 
$$
\mathcursive{T}(\ell)
\; = \; \frac{2}{\mu ^2 n_{z\ell}}
\qquad \forall \ell\in  \{0,\ldots,n_o\}
\ .
$$
With (\ref{23}) this yields
\begin{equation}
\label{eq:bound_zetaNzeta_exact}
\mathcursive{T}(\ell)
\; \leq  \; \frac{2\ell^2}{\mu ^2 n_{z1}}
\qquad \forall \ell\in  \{0,\ldots,n_o\}
\ .
\end{equation}

With this at hand, we continue in studying 
$\mathcursive{T}$ for $x$ non integer.

To isolate the (fictitious) 
singularities at $x=m$ or $x=m+1$, with $m\in \{0,\ldots, n_o-1\}$, we consider the sums without these values 
by defining
\begin{eqnarray*}
S_m(x)&:=& \underline{S}_m(x) - \overline{S}_m(x) 
\\
\underline{S}_m(x)&:=&\sum_{\ell=0}^{m-1}
\frac{1}{x^2-\ell^2}\frac{n_{z\ell}}{n_{zm}}  
\quad ,\qquad 
\underline{S}_0(x)\;=\; 0
\ ,
\\
\overline{S}_m(x) &:=&\sum_{\ell=m+2}^{n_0}
\frac{1}{\ell^2-x^2}\frac{n_{z\ell}}{n_{zm}}  
\quad ,\qquad 
\overline{S}_{n_o-1}(x)
\;=\; 0 \ ,
\end{eqnarray*}
and
$$
Q_m(x) \; := \;\sum_{\ell=0, \neq m, \neq m+1}^{n_0}
\frac{x^2+\ell^2}{(x^2-\ell^2)^2}n_{z\ell}  \ .
$$
To simplify these expressions, we introduce the notation
$$
\mathcursive{r}(\ell) \; := \; \dfrac{n_{z(\ell+1)}}{n_{z\ell}} \ .
$$
With (\ref{27}) and (\ref{23}), we have
\begin{equation}
\label{boundR}
\frac{1}{4} \leq \frac{\ell^2}{(\ell+1)^2}\leq \mathcursive{r}(\ell) < 1
\qquad \forall 0 < \ell
\  .
\end{equation}
Then
\eqref{eq:zeta_pk_simplified}
is simplified into
\begin{equation}
\label{eq:zeta_pk_simplified3}
\mathcursive{T}(x)
 \; =  \; F_{n_o}(x)+ G_{n_o}(x) 
\end{equation} 
where
\begin{equation}
\label{eq:Fno}
F_{n_o}(x) \; := \; 
\dfrac{
n_{zm}
\left[\dfrac{x^2+m^2}{(x^2-m^2)^2}
+\dfrac{x^2+(m+1)^2}{((m+1)^2-x^2)^2}\mathcursive{r}(m)
\right]
}{1+\mu^2 x^2 n_{zm}^2
\left(S_m(x) + \dfrac{1}{x^2-m^2} 
-
\dfrac{1}{{(m+1)}^2 - x^2}\mathcursive{r} (m) 
\right)^2 
}\ ,
\end{equation}
\begin{equation}
\label{eq:Gno}
G_{n_o}(x) \; := \; 
\dfrac{Q_m(x)}{1+\mu^2 x^2 n_{zm}^2
\left(S_m(x) + 
\dfrac{1}{x^2-m^2} 
-
\dfrac{1}{{(m+1)}^2 - x^2}\mathcursive{r} (m) 
\right)^2 
} \ .
\end{equation}
Let now study the behavior of the functions 
$F_{n_o}, G_{n_o}$.
We divide the analysis into three cases,
where, in 
any case,
$x$ is not an integer
\begin{itemize}
\item[\textbf{a)}] 
$0 < x < n_o$,
\item[\textbf{b)}] 
$n_o < x <n_o+1$,
\item[\textbf{c)}] $n_o+1 \leq x$.
\end{itemize}

\paragraph*{Case a) $0 < x < n_o$.}~ 
Then there exists $m\in \{0, \ldots, n_o-1\}$ such that $x\in(m, m+1)$.
First, we look for a bound for
$
\overline{S}_m(m).
$
With (\ref{26}), this gives
$$
\overline{S}_m(m)\leq m\sum_{\ell=m+2}^{n_o}
\frac{1}{\ell^2-m^2}\frac{1}{\ell}
$$
We have
$$
\frac{1}{\ell^2-m^2}\frac{1}{\ell}
\;=\; \frac{1}{m^2}\left[\frac{1}{2(\ell -m)}+\frac{1}{2(\ell +m)}-\frac{1}{\ell}\right]
$$
Since the function $s\mapsto \frac{1}{2(s -m)}+\frac{1}{2(s +m)}-\frac{1}{s}$ is decreasing on 
$[\frac{m}{\sqrt{3}},+\infty [$, we get
\begin{eqnarray*}
\overline{S}_m(m)
&\leq &\frac{1}{m}\int_{m+1}^\infty 
\left[\frac{1}{2(s -m)}+\frac{1}{2(s +m)}-\frac{1}{s}\right] ds
\\
&\leq &\frac{1}{m}\  
\left.\log\left(\frac{\sqrt{s^2-m^2}}{s}\right)\right|_{m+1}^\infty 
\\
&\leq &
\frac{1}{m}\  
\log\left(\frac{m+1}{\sqrt{2m+1}}\right)\; \leq \; \frac{1}{m}\  
\log\left(\sqrt{m+1}\right)\; \leq \; \frac{1}{2}
\  .
\end{eqnarray*}
So the sequence $\overline{S}_m(m)$ is bounded independently of $n_o$.

Then, we look for a bound for
$\underline{S}_m(m)$. For $m\geq 2$, we have the decomposition
$$
\underline{S}_m(m) =  \sum_{\ell=0}^{m-1}
\frac{1}{m^2-\ell^2}\frac{n_{z\ell} }{n_{zm}}
$$
With (\ref{23}), this gives
$$
\underline{S}_m(m) \leq 
\frac{1}{m^2}\frac{n_{z0} }{n_{zm}}
+
\frac{1}{m^2-1}\frac{n_{z1} }{n_{zm}}
+
\sum_{\ell=2}^{m-2}
\frac{1}{m^2-\ell^2}\frac{m^2}{\ell^2}
+
\frac{1}{m^2-(m-1)^2}\frac{n_{z(m-1)} }{n_{zm}}
$$
Since the function $s\mapsto  \frac{1}{m^2-s^2}\frac{m^2}{s^2}$ is decreasing on
$]0,\frac{m}{\sqrt{2}}]$ and increasing on $[\frac{m}{\sqrt{2}},m[$, we have
$$
\sum_{\ell=2}^{m-2}
\frac{1}{m^2-\ell^2}\frac{m^2}{\ell^2}\leq 
\int_1^{m-1}
\frac{1}{m^2-s^2}\frac{m^2}{s^2} ds
$$
Then, with
$$
\frac{1}{m^2-s^2}\frac{m^2}{s^2} \;=\; 
\frac{1}{s^2}+\frac{1}{2m}\left(\frac{1}{m-s}+\frac{1}{m+s}\right)
$$
we obtain
\begin{eqnarray*}
\int_1^{m-1}
\frac{1}{m^2-s^2}\frac{m^2}{s^2} ds
&=&
\left.\left[
-\frac{1}{s}
+\frac{1}{2m}\log\left(\frac{m+s}{m-s}\right)\right]\right|_1^{m-1}
\\
&=&
1-\frac{1}{m-1}
+\frac{1}{2m}
\log\left(\frac{(2m-1)(m-1)}{m+1}\right)
\end{eqnarray*}
This yields
\begin{eqnarray*}
\underline{S}_m(m) 
&\leq &
\frac{1}{m^2}\frac{n_{z0} }{n_{zm}}
+
\frac{1}{m^2-1}\frac{n_{z1}}{n_{zm}}
+
\left[
\frac{m-2}{m-1}
+\frac{1}{2m}
\log\left(\frac{(2m-1)(m-1)}{m+1}\right)
\right]
 +
\frac{1}{m^2-(m-1)^2}\frac{n_{z(m-1)} }{n_{zm}}
\  ,
\\
&\leq &\frac{1}{m^2}\frac{n_{z0} }{n_{zm}}
+
\frac{1}{m^2-1}\frac{n_{z1}}{n_{zm}}
+
2
 +
\frac{1}{m^2-(m-1)^2}\frac{n_{z(m-1)} }{n_{zm}}
\end{eqnarray*}
But with (\ref{23}), (\ref{27}) and $m\geq 2$, we have
$$
\frac{1}{m^2}\frac{n_{z0} }{n_{zm}} \leq \frac{n_{z0} }{n_{z1}} 
\quad ,\qquad 
\frac{1}{m^2-1}\frac{n_{z1}}{n_{zm}} \leq \frac{n_{z0} }{3n_{z2}}
\quad ,\qquad 
\frac{1}{m^2-(m-1)^2}\frac{n_{z(m-1)} }{n_{zm}}\leq \frac{m^2}{(m-1)^2(2m-1)}\leq \frac{4}{3}
$$
Hence,  $\underline{S}_m(m)$ is bounded 
independently of $n_o$ as
$$
\underline{S}_m(m) 
\; \leq \; \frac{n_{z0} }{n_{z1}}
+
\frac{n_{z0} }{3n_{z2}}
+
\frac{10}{3}
\  .
$$
In conclusion we have obtained the real number
\begin{equation}
\label{29}
\mathbf{S}\;=\; \frac{n_{z0} }{n_{z1}}
+
\frac{n_{z0} }{3n_{z2}}
+
\frac{10}{3}
\  ,
\end{equation}
independent of $n_o$, such that
\begin{equation}
\label{76}
|S_m(x)| 
\leq \left|\sum_{\ell=0}^{m-1}
\frac{1}{m^2-\ell^2}\frac{n_{z\ell} }{n_{zm}}
-
 \sum_{\ell=m+2}^{n_o}
\frac{1}{\ell^2-m^2}\frac{n_{z\ell} }{n_{zm}}\right|
\leq \mathbf{S}
\qquad \forall m\in \{0,\dots,n_o-1\}\  ,\quad \forall n_o
\end{equation}
Finally we look for a bound for $Q_m(x)$.
We decompose, with no care about the boundary effects for $m<2$ or $m>n_o-3$,
\\[1em]\null \quad $\displaystyle 
Q_m(x)\;=\; 
\sum_{\ell=0}^{m-2}
\frac{x^2+\ell^2}{(x^2-\ell^2)^2}n_{z\ell}
$\hfill\null\\[-0.5em]\null\hfill$\displaystyle
+
\frac{x^2+(m-1)^2}{(x^2-(m-1)^2)^2}n_{z(m-1)}
+
\frac{x^2+(m+2)^2}{(x^2-(m+2)^2)^2}n_{z(m+2)}
$\hfill\null\\[-0.7em]\null\hfill$\displaystyle
+
\sum_{\ell=m+3}^{n_0}
\frac{x^2+\ell^2}{(x^2-\ell^2)^2}n_{z\ell}
$\null\qquad \null \\[1em]
With (\ref{27}),
since the function $s\mapsto \frac{s^2+1}{(s^2-1)^2}$ is increasing for $s<1$ and decreasing 
for $1<s$, and $x$ is in $]m,m+1[$, we get, for $m\geq 1$,
$$
\frac{Q_m(x)}{n_{z0}}\; \leq \; 
\int_0^{m-1}
\frac{m^2+\ell^2}{(m^2-\ell^2)^2} d\ell
+ \frac{2m^2-2m+1}{(2m-1)^2}
+ \frac{2m^2+5m+5}{(2m+3)^2}
+
\int_{m+2}^{\infty }
\frac{(m+1)^2+\ell^2}{((m+1)^2-\ell^2)^2} d\ell
$$
Using the identity
$$
\frac{s^2 + a^2}{(s^2-a^2)^2}
\;=\;
 -\frac{d}{ds}\left\{\frac{s}{s^2-a^2}\right\}
$$
we obtain
$$
\int_0^{m-1}
\frac{m^2+\ell^2}{(m^2-\ell^2)^2} d\ell\;=\; \frac{m-1}{m^2-(m-1)^2} \;=\; \frac{m-1}{2m-1} 
$$
$$
\int_{m+2}^{\infty }
\frac{(m+1)^2+\ell^2}{((m+1)^2-\ell^2)^2} d\ell
\;=\; \frac{m+2}{((m+2)^2-(m+1)^2)}\;=\;  \frac{m+2}{2m+3}
$$
and therefore
$$
\frac{Q_m(x)}{n_{z0}}\; \leq \; 
\frac{m-1}{2m-1} 
+ \frac{2m^2-2m+1}{(2m-1)^2}
+ \frac{2m^2+5m+5}{(2m+3)^2}
+ \frac{m+2}{2m+3}
\; \leq \; \frac{7}{2}
$$
We have also
$$
Q_0(x) \; = \;\sum_{\ell=2}^{n_0}
\frac{x^2+\ell^2}{(x^2-\ell^2)^2}n_{z\ell}
\; \leq \; n_{z2}\,  \sum_{\ell=2}^{n_0}
\frac{1+\ell^2}{(1-\ell^2)^2}
\; \leq \; n_{z2}\, \left[
\frac{5}{9}
+
\int_2^\infty \frac{1+\ell^2}{(1-\ell^2)^2} d\ell
\right]
\; \leq \; n_{z2}\, \frac{11}{9},
$$
implying $Q_0(x)\leq Q_m(x)$ with (\ref{27}).
This gives readily the following bound for $|G_{n_o}(x)| $
\begin{equation} 
\label{eq:boundGno}
|G_{n_o}(x)| \; \leq \; Q_m(x)\; \leq \; \frac{7}{2}\,  n_{z0} 
\end{equation}
As desired this bound is independent of $n_o,m$.
\par\vspace{1em}

With all this,
to obtain a bound for $|F_{n_o}(x)|$, we start by simplifying the expression of the denominator in 
  (\ref{eq:Fno}). For this, we define the following  change of variables 
$$
z =  \Psi(x)
:=
- \dfrac{1}{X- m^2} + \dfrac{\mathcursive{r}(m)}{(m+1)^2 - X},
$$
where $X = x^2$. Since $x\in(m, m+1)$, we decompose the
interval ${\mathfrak{X}} :=(m^2, (m+1)^2)$
as
$
{\mathfrak{X}} 
 = 
 \mathfrak{X}_- \cup \mathfrak{X}_0 \cup \mathfrak{X}_+
$
where 
$$
\begin{array}{rcl}
\mathfrak{X}_- &=& (m^2, X_a ]
\\
\mathfrak{X}_0 &=& [X_a , X_b
]
\\
\mathfrak{X}_+ &=& [ X_b, (m+1)^2)
\end{array}
\qquad
\begin{array}{rclcl}
X_a &= &(1-\lambda)m^2 + \lambda (m+1)^2 &= & m^2 + \lambda (2m+1)
\\[.5em]
X_b & = &  \lambda m^2 + (1-\lambda)(m+1)^2 & = &m^2 + (1-\lambda )(2m+1)
\end{array}
$$
with $\lambda \in (0,\tfrac12]$ to be selected with in mind the fact that when $\lambda $ is close to $0$,
$X_a$ is close to $m^2$ and $X_b$ close to 
$(m+1)^2$.
These sets are mapped, via $\Psi$, into the sets 
$$
\begin{array}{rcl}
\mathfrak{Z}_- &=& \left(-\infty,  \mathfrak{z}_a
\right]
\\[.5em]
\mathfrak{Z}_0 &=& \left[\mathfrak{z}_a, 
\ 
\mathfrak{z}_b
\right]
\\[.5em]
\mathfrak{Z}_+ &=& \left[ \mathfrak{z}_b, \
+\infty \right)
\end{array}
\qquad
\begin{array}{rcl}
\mathfrak{z}_a & = & 
- \dfrac{(1-\lambda){\mathfrak{m}_{p}} - \lambda \mathcursive{r}(m){\mathfrak{m}_{m}}}{\lambda(1-\lambda)
{\mathfrak{m}_{p}} {\mathfrak{m}_{m}}}
\\
\mathfrak{z}_b & = &
\left( \dfrac{\mathcursive{r}(m)}{1+\mathcursive{r}(m)} - \lambda
\right)\dfrac{1+\mathcursive{r}(m)}{\lambda(1-\lambda)}
\dfrac{1}{\mathfrak{m}_{m}}
\end{array}
$$
with the notation 
$$
{\mathfrak{m}_{p}} = [(m+1)^2+m^2]
\qquad
{\mathfrak{m}_{m}} = [(m+1)^2-m^2] = 2m+1
$$
Note that function $\Psi$ is injective
and its  inverse transformation is continuous and given by 
\begin{eqnarray}
\label{82}
X(z)
&= &\dfrac{{\mathfrak{m}_{p}} z
 - (1 + \mathcursive{r}(m))
+
\sqrt{
\Big(z {\mathfrak{m}_{m}} + (1-\mathcursive{r}(m))\Big)^2
+4 \mathcursive{r}(m)
}}{2z}
\qquad\mbox{if}\  z\neq 0
\\\nonumber
X(0)&=&
\dfrac{(m+1)^2+ m^2 \mathcursive{r}(m)}{1+\mathcursive{r}(m)}
\end{eqnarray}
%
About the choice of $\lambda $, we gather here all the constraints we shall impose later on.
\begin{enumerate}
\item We want $X(0)$ be in  ${\mathfrak{X}}_0$,
or equivalently
$\mathfrak{z}_a < 0 < \mathfrak{z}_b$. This motivates the following inequalities
\begin{equation}
\label{lambda0}
X_a =
m^2 + \lambda(2m+1)
\ < \ 
m^2 + (2m+1)
\dfrac{ 1}{1+\mathcursive{r}(m)}
\ 
< \
m^2 + (1-\lambda)(2m+1)
= X_b
\end{equation}
This
is always verified for any $m\geq1$ if we select 
$$
\lambda \ < \ 
\dfrac{ 1}{1+\mathcursive{r}(m)} \ <
\
1-\lambda
\  .
$$
With (\ref{boundR}), that is
\begin{equation}
\label{lambda1}
\lambda \ < \  \dfrac{1}{5} 
\ .
\end{equation}
\item
With \eqref{76}, the triangle inequality gives
$$
\left|S_m(x) + z \right| \geq |z|-\mathbf{S}
$$
We want to have
\begin{equation}
\label{77}
\left|S_m(x) + z \right| \geq |z|-\mathbf{S}\geq \frac{1}{\sqrt{2}} |z|
\qquad \forall z\in \mathfrak{Z}_-\cup \mathfrak{Z}_+
\end{equation}
We want also
\begin{equation}
\label{80} 
\frac{(1+\mathcursive{r}(m))^2}{2\mathcursive{r}(m){\mathfrak{m}_{m}}} 
\leq |z|
\qquad \forall z\in \mathfrak{Z}_-\cup \mathfrak{Z}_+
\end{equation}
This holds if
$$
\max\left\{
(2+\sqrt{2})\mathbf{S}
\,  ,\,  
\frac{25}{8} 
\right\}
\leq \min\{-\mathfrak{z}_a,\mathfrak{z}_b\}
$$
Let
\begin{equation}
\label{30}
\mathfrak{a} = \max\left\{
(2+\sqrt{2})\mathbf{S}
\,  ,\,  
\frac{25}{8} 
\right\}.
\end{equation}
With (\ref{29}), this number does not depend on $n_o$.
We have $-\mathfrak{z}_a\geq  \mathfrak{a}$ if
$$
\dfrac{(1-\lambda){\mathfrak{m}_{p}} - \lambda \mathcursive{r}(m){\mathfrak{m}_{m}}}{\lambda(1-\lambda)
{\mathfrak{m}_{p}} {\mathfrak{m}_{m}}} 
\geq \mathfrak{a}
$$
which gives
$$
\lambda^2 - \left( 1+
\dfrac{\mathfrak{m}_{p} +\mathcursive{r}(m){\mathfrak{m}_{m}} }{\mathfrak{m}_{p}\mathfrak{m}_{m}\mathfrak{a}}
\right)\lambda + \dfrac{1}{\mathfrak{m}_{m}\mathfrak{a}}
\geq 0
$$
and by recalling that $\lambda\in(0,\tfrac12]$, 
we obtain that the inequality is verified
for $\lambda\in(0,\lambda_a]$
when
\begin{equation*}
\label{lambda_a}
 \lambda_a = \dfrac{1}{2} \left( 1+ \dfrac{\mathfrak{m}_{p} +\mathcursive{r}(m){\mathfrak{m}_{m}} }{\mathfrak{m}_{p}\mathfrak{m}_{m}\mathfrak{a}}\right)
- \dfrac12\sqrt{
\left(1+\dfrac{\mathfrak{m}_{p} +\mathcursive{r}(m){\mathfrak{m}_{m}} }{\mathfrak{m}_{p}\mathfrak{m}_{m}\mathfrak{a}}
\right)^2 - \dfrac{4}{\mathfrak{m}_{m}\mathfrak{a}}}
\end{equation*}

Similarly, we have $\mathfrak{z}_b \geq \mathfrak{a}$ if$$
\left( \dfrac{\mathcursive{r}(m)}{1+\mathcursive{r}(m)} - \lambda
\right)\dfrac{1+\mathcursive{r}(m)}{\lambda(1-\lambda)}
\dfrac{1}{\mathfrak{m}_{m}}
\geq \mathfrak{a}
$$
which gives
$$
\lambda^2 - \left( 1+
\dfrac{1+\mathcursive{r}(m)}{\mathfrak{a} \mathfrak{m}_{m}}
\right) \lambda
+ \dfrac{\mathcursive{r}(m)}{\mathfrak{a} \mathfrak{m}_{m}}
\geq0
$$
which is satisfied for all $\lambda\in(0,\lambda_b]$
when
\begin{equation*}
\label{lambda_b}
 \lambda_b = \dfrac{1}{2} \left( 1+
\dfrac{1+\mathcursive{r}(m)}{\mathfrak{a}\mathfrak{m}_{m}}\right)
- \dfrac12\sqrt{
\left(1+
\dfrac{1+\mathcursive{r}(m)}{\mathfrak{m}_{m}\mathfrak{a}}
\right)^2 - \dfrac{\mathcursive{r}(m)}{\mathfrak{m}_{m}\mathfrak{a}}}
\end{equation*}

Now, we look for a lower bound for $\lambda_a,\lambda_b$. 
To this end, note that they
satisfy
\begin{align*}
\lambda_i & = \frac{1}{2}\left[T_{1i} - \sqrt{T_{1i}^2-T_{2i}} 
\right]
= \frac12\dfrac{T_{1i}^2 - \left(\sqrt{T_{1i}^2-T_{2i}}\right)^2}{T_{1i} + \sqrt{T_{1i}^2-T_{2i}} }
\\
 & = \frac12 \dfrac{T_{2i}}{T_{1i} + \sqrt{T_{1i}^2-T_{2i}}}
 \\
 & \geq    
\frac14 \dfrac{T_{2i}}{T_{1i}}
\end{align*}
where  
$$
\begin{array}{rclrcl}
T_{1a} & = & 
1+ \dfrac{\mathfrak{m}_{p} +\mathcursive{r}(m){\mathfrak{m}_{m}} }{\mathfrak{m}_{p}\mathfrak{m}_{m}\mathfrak{a}},
 \qquad &
T_{2a} & = &\dfrac{4}{\mathfrak{m}_m \mathfrak{a}},
\\
T_{1b} & = &1+\dfrac{1+\mathcursive{r}(m)}{\mathfrak{m}_m \mathfrak{a}},
&
T_{2b} & = & \dfrac{\mathcursive{r}(m)}{\mathfrak{m}_m \mathfrak{a}}.
\end{array}
$$
By further using 
$$
T_{1a} = 
1 + \dfrac{1}{\mathfrak{m}_{m}\mathfrak{a}} +
\dfrac{\mathcursive{r}(m)}{\mathfrak{m}_{p}\mathfrak{a}} 
\leq 1 + \dfrac{1}{\mathfrak{a}} + \dfrac{1}{\mathfrak{a}}
=     \dfrac{2+\mathfrak{a}}{\mathfrak{a}}
$$
and similarly, 
$$
T_{1b} \leq   \dfrac{2+\mathfrak{a}}{\mathfrak{a}}
$$
we finally obtain
\begin{align*}
\lambda_a  & \geq \dfrac{1}{2+\mathfrak{a}} \dfrac{1}{\mathfrak{m}_{m}},
\\
\lambda_b & \geq 
\frac14 
\dfrac{\mathcursive{r}(m)}{2+\mathfrak{a}} \dfrac{1}{\mathfrak{m}_{m}}
\geq  
\frac{1}{16}
\dfrac{1}{2+\mathfrak{a}} \dfrac{1}{2m+1}.
\end{align*}
Since $\mathfrak{a}$ does not depend on $m$ nor $n_o$, 
by selecting $\lambda $, depending on $m\geq 1$ and satisfying
\begin{equation}
\label{31}
\frac{1}{5} \geq  \lambda  \geq \dfrac{\varpi}{m}, 
\qquad 
\varpi := \dfrac{1}{48(\mathfrak{a}+2)}
\end{equation}
we have always satisfied
$\lambda \in (0, \min\{\lambda_a,\lambda_b,\tfrac{1}{5}\}]$.
Note that, according to (\ref{30}) and (\ref{29}), $\varpi$ does not depend on $m$ or $n_o$.

\item
We want to have
\begin{equation}
\label{79}
\sqrt{(z{\mathfrak{m}_{m}} + (1 -\mathcursive{r}(m)))^2 + 4 \mathcursive{r}(m) }  \geq  (1 +\mathcursive{r}(m))
\qquad \forall z\in \mathfrak{Z}_-\cup \mathfrak{Z}_+
\end{equation}
This holds if $z$ is not in
$\left]-2\frac{1-\mathcursive{r}(m)}{\mathfrak{m}_{m}},0\right[$ and therefore if we have
$$
- \dfrac{(1-\lambda){\mathfrak{m}_{p}} - \lambda \mathcursive{r}(m){\mathfrak{m}_{m}}}{\lambda(1-\lambda)
{\mathfrak{m}_{p}} {\mathfrak{m}_{m}}}=\mathfrak{z}_a \leq -2\frac{1-\mathcursive{r}(m)}{\mathfrak{m}_{m}}
$$
This inequality reduces to
$$
2[1-\mathcursive{r}(m)] {\mathfrak{m}_{p}}\lambda ^2
-
\left({\mathfrak{m}_{p}}
+
\mathcursive{r}(m){\mathfrak{m}_{m}}
+
2[1-\mathcursive{r}(m)] {\mathfrak{m}_{p}}
\right)
 \lambda 
+
{\mathfrak{m}_{p}} 
\geq 0
$$
It is satisfied for all 
$\lambda\in(0,\min\{\tfrac12,\lambda_c\}]$
with 
\begin{equation*}
\label{lambda_c}
 \ \lambda_c = \dfrac{1}{2} \left( 1+
\dfrac{\mathfrak{m}_{p} +\mathcursive{r}(m){\mathfrak{m}_{m}} }{2[1-\mathcursive{r}(m)]\mathfrak{m}_{p}}
\right)
- \dfrac12\sqrt{
\left(1+ \dfrac{\mathfrak{m}_{p} +\mathcursive{r}(m){\mathfrak{m}_{m}} }{2[1-\mathcursive{r}(m)]\mathfrak{m}_{p}}
\right)^2 - \dfrac{2}{1-\mathcursive{r}(m)}}.
\end{equation*}
By following similar computations to those made for 
$\lambda_a, \lambda_b$, previous expression gives also  
$$
\lambda_c  \geq 
\frac{1}{2[1-\mathcursive{r}(m)]}
\frac{1}{ 1+ \dfrac{\mathfrak{m}_{p} +\mathcursive{r}(m){\mathfrak{m}_{m}} }{
2
[1-\mathcursive{r}(m)]\mathfrak{m}_{p}}}
\geq 
\frac{1}{ 3 -\mathcursive{r}(m)[2-\frac{\mathfrak{m}_{m}}{\mathfrak{m}_{p}}]}
\geq 
 \dfrac{1}{3}.
$$
\end{enumerate}
Collecting all these constraints, we conclude 
that $\lambda$ can be selected in the interval
\begin{equation}
\label{eq:ineq_lambda}
\dfrac{\varpi}{m} 
=
\dfrac{1}{48(\mathfrak{a}+2)}\frac{1}{m}
\leq \lambda \leq \frac{1}{5}.
\end{equation}
After these preliminaries, we are ready to derive bounds for $|F_{n_o}(x)|$.
For the case where
$x$ is in $\mathfrak{X}_0$ or equivalently $z$ in $\mathfrak{Z}_0$, we have
$$
(x^2-m^2)^2\geq (X_a-m^2)^2
\quad ,\qquad 
((m+1)^2-x^2)^2 \geq ((m+1)^2-X_b)^2
\  .
$$
This gives
\begin{eqnarray}
\nonumber
|F_{n_o}(x)|
&\leq &
n_{zm}
\left[\dfrac{x^2+m^2}{(x^2-m^2)^2}
+\dfrac{x^2+(m+1)^2}{((m+1)^2-x^2)^2}\mathcursive{r}(m)
\right]
\\\nonumber
 & \leq  &
2n_{zm} (m+1)^2
\left[\dfrac{1}{(X_a-m^2)^2}
+\dfrac{1}{((m+1)^2-X_b)^2}\mathcursive{r}(m)
\right]
\\\nonumber
& \leq &
2n_{zm} (m+1)^2 
\left[
\dfrac{1}{\lambda^2 (2m+1)^2}
+ \dfrac{1}{(1-\lambda)^2 (2m+1)^2}
\right]
\\ \notag
&\leq & 
2n_{z0}   \left(\dfrac{m+1 }{2m+1}\right)^2\left[\frac{1}{\lambda ^2}+\frac{1}{(1-\lambda )^2}\right]
\end{eqnarray}
So, with (\ref{eq:ineq_lambda}), we get
\begin{equation}
\label{boundFno0}
|F_{n_o}(x)| \; \leq \; \dfrac {4n_{z0}}{\varpi^2}  
m^2 
\qquad
\forall \ x\in \mathfrak{X}_0
\  .
\end{equation}
For the case where $x$ is in $ \mathfrak{X}_- \cup \mathfrak{X}_+$, or equivalently $z$ is in 
$\mathfrak{Z}_-\cup \mathfrak{Z}_+$, we have, with (\ref{82}),
$$
x^2 -m^2  \  =   \  \dfrac{z {\mathfrak{m}_{m}} 
 -c(z)}{2z}\ ,
\qquad
(m+1)^2 -x^2  \ = \  \dfrac{z {\mathfrak{m}_{m}} +
c(z)}{2z} \ ,
$$
with the notation 
$$
c(z) = 1 +\mathcursive{r}(m)-\sqrt{(z {\mathfrak{m}_{m}} + (1 -\mathcursive{r}(m)))^2 + 4 \mathcursive{r}(m) } 
\  .
$$
where $c(z)$ is non positive according to (\ref{79}).
So, by using $\mathcursive{r}(m)\leq 1$ and
$m < x < m+1$,
and ${\mathfrak{m}_{m}} = 2m+1$,
we obtain
\begin{eqnarray*}
|{F}_{n_o}(x)| & \leq  &
 \dfrac{
8n_{zm}(m+1)^2 
\left[\dfrac{z^2}{(z {\mathfrak{m}_{m}}  - c(z))^2}
+\dfrac{z^2}{(z {\mathfrak{m}_{m}}  + c(z))^2}
\right]
}{1+\mu^2 m^2 n_{zm}^2
\left(S_m(x) + z
\right)^2 
}
\end{eqnarray*}
But (\ref{77}) gives
\begin{eqnarray*}
\dfrac{
n_{zm}(m+1)^2 z^2
}{1+\mu^2 m^2 n_{zm}^2
\left(
S_m(x)
 + z
\right)^2 
}&\leq&
n_{z0}\qquad \mbox{if}\  m=0
\\
 &\leq &\dfrac{
2n_{zm}(m+1)^2 z^2
}{
2+\mu^2 m^2 n_{zm}^2z^2 
}
\qquad \forall z\in \mathfrak{Z}_-\cup \mathfrak{Z}_+
\qquad \mbox{if}\  m\geq 1
\\
&\leq &
\frac{1}{n_{zm}}\dfrac{(m+1)^2}{m^2}\frac{2}{\mu ^2}
\qquad \forall z\in \mathfrak{Z}_-\cup \mathfrak{Z}_+
\qquad \mbox{if}\  m\geq 1
\end{eqnarray*}
This yields
\begin{eqnarray*}
|{F}_{n_o}(x)| 
& \leq &
\frac{1}{n_{zm}}\frac{(m+1)^2}{m^2}\frac{32}{\mu ^2}
\dfrac{(z {\mathfrak{m}_{m}})^2  + c(z)^2}
{((z {\mathfrak{m}_{m}})^2  - c(z)^2)^2}
\qquad \forall z\in \mathfrak{Z}_-\cup \mathfrak{Z}_+
\\
|{F}_{n_o}(x)| & \leq  &
\frac{1}{n_{zm}}\frac{32}{\mu ^2}\frac{1}{z^2 {\mathfrak{m}_{m}}^2}
\dfrac{1+ [\tfrac{c(z)}{z {\mathfrak{m}_{m}}}]^2}
{(1  - [\tfrac{c(z)}{z {\mathfrak{m}_{m}}}]^2)^2}
\qquad \forall z\in \mathfrak{Z}_-\cup \mathfrak{Z}_+
\end{eqnarray*}
Since the function $\chi^2 \mapsto \frac{1+\chi^2}{(1-\chi^2)^2}$ is non decreasing on 
$[0,1 [$ we go on by looking for an upperbund for $\left|\tfrac{c(z)}{z {\mathfrak{m}_{m}}}\right| $.
With (\ref{79}), we have
$$
\left|\tfrac{c(z)}{z {\mathfrak{m}_{m}}}\right| \;=\; 
\frac{
\sqrt{(z{\mathfrak{m}_{m}} + (1 -\mathcursive{r}(m)))^2 + 4 \mathcursive{r}(m) }  - (1 +\mathcursive{r}(m))
}{
|z|{\mathfrak{m}_{m}} 
}
$$
Then, the inequality
$$
\sqrt{1+a}\leq 1+\frac{a}{2}\qquad \forall a\geq -1
$$
allows us to write
$$
\left|\tfrac{c(z)}{z {\mathfrak{m}_{m}}}\right| 
\; \leq \; 
1+\frac{1-\mathcursive{r}(m)}{z{\mathfrak{m}_{m}}}+\frac{(1+\mathcursive{r}(m))^2}{2z^2{\mathfrak{m}_{m}}^2}
- \frac{1+\mathcursive{r}(m)}{|z|{\mathfrak{m}_{m}}}\,.
$$
But on one hand, we have
$$
\renewcommand{\arraystretch}{2}
\begin{array}{rclcl}
\displaystyle 
\frac{1-\mathcursive{r}(m)}{z{\mathfrak{m}_{m}}}
- \frac{1+\mathcursive{r}(m)}{|z|{\mathfrak{m}_{m}}} &=&
\displaystyle -\frac{2}{|z|{\mathfrak{m}_{m}}} 
\; \leq \; -\frac{2\mathcursive{r}(m)}{|z|{\mathfrak{m}_{m}}} 
&\mbox{if}&\displaystyle  z < 0
\  ,
\\
&=&\displaystyle 
-\frac{2\mathcursive{r}(m)}{|z|{\mathfrak{m}_{m}}} 
&\mbox{if}&\displaystyle  0 < z\ .
\end{array}
$$
On another hand, we have~:
$$
\frac{(1+\mathcursive{r}(m))^2}{2z^2{\mathfrak{m}_{m}}^2}  \leq  
\frac{\mathcursive{r}(m)}{|z|{\mathfrak{m}_{m}}}
\qquad \mbox{if}\quad 
\frac{(1+\mathcursive{r}(m))^2}{2\mathcursive{r}(m){\mathfrak{m}_{m}}}  \leq  
|z| \ .
$$
This yields
\begin{eqnarray*}
\left|\tfrac{c(z)}{z {\mathfrak{m}_{m}}}\right| 
\; \leq \;  1- \frac{\mathcursive{r}(m)}{|z|{\mathfrak{m}_{m}}}
\qquad \mbox{if}\quad  \frac{(1+\mathcursive{r}(m))^2}{2\mathcursive{r}(m){\mathfrak{m}_{m}}}  \leq  |z| \ .
\end{eqnarray*}
So, with (\ref{80}), we have obtained
$$
|{F}_{n_o}(x)| \;  \leq  \; 
\frac{1}{n_{zm}}\frac{32}{\mu ^2}\frac{1}{z^2 {\mathfrak{m}_{m}}^2}
\dfrac{1+ [1- \frac{\mathcursive{r}(m)}{|z|{\mathfrak{m}_{m}}}]^2}
{(1  - [1- \frac{\mathcursive{r}(m)}{|z|{\mathfrak{m}_{m}}}]^2)^2}
\;=\; 
\frac{1}{n_{zm}}\frac{32}{\mu ^2}\frac{1}{\mathcursive{r}(m)^2}
\dfrac{1+ [1- \frac{\mathcursive{r}(m)}{|z|{\mathfrak{m}_{m}}}]^2}
{
[2- \frac{\mathcursive{r}(m)}{|z|{\mathfrak{m}_{m}}}]^2
}
\qquad \forall z\in \mathfrak{Z}_-\cup \mathfrak{Z}_+ \ .
$$
We have
$$
\dfrac{1+ [1- \chi]^2}
{
[2- \chi]^2
}\leq 1
\qquad \forall \chi \leq 1
$$
and, because of (\ref{boundR}),
$$
\frac{\mathcursive{r}(m)}{|z|{\mathfrak{m}_{m}}}\leq \frac{(1+\mathcursive{r}(m))^2}{2\mathcursive{r}(m)|z|{\mathfrak{m}_{m}}}
\leq 1
\qquad \forall z: \frac{(1+\mathcursive{r}(m))^2}{2\mathcursive{r}(m){\mathfrak{m}_{m}}}  \leq  |z| \ .
$$
We conclude that (\ref{80}) gives finally
\begin{equation}
\label{boundFno}
|{F}_{n_o}(x)| \leq 
\frac{512}{\mu ^2}\frac{1}{n_{zm}}
\qquad \forall z\in \mathfrak{Z}_-\cup \mathfrak{Z}_+ \ .
\end{equation}
Hence, by putting together
\eqref{boundFno0} and
(\ref{boundFno}) we  obtain
\begin{equation}
\label{boundFno2}
|F_{n_o}(x)|
\ \leq  \ 
\max\left\{ \dfrac {4n_{z0}}{\varpi^2}  m^2 
\,  ,\,  
\frac{512}{\mu ^2}\frac{1}{n_{zm}}
\right\}
\qquad
\forall \, x \in (m,m+1)\ .
\end{equation}
Finally, by using
(\ref{23}),
\eqref{eq:zeta_pk_simplified3},
\eqref{eq:boundGno},
and
\eqref{boundFno2}, we obtain,
for each $x$ in $(m,m+1)$ with $m$ in $\{0,\dots,n_o-1\}$,
\begin{eqnarray}
\notag
\mathcursive{T}(x)
 & \leq &  
\left(\frac{7}{2} n_{z0}
+
\dfrac {4n_{z0}}{\varpi^2}  m^2 
+ 
\frac{512}{\mu ^2}\frac{1}{n_{zm}}
\right)  \\ \notag
 & \leq &   
\left(\frac{7}{2} n_{z0}
+
\left[\dfrac {4n_{z0}}{\varpi^2}  
+ 
\frac{512}{\mu ^2}\frac{1}{n_{z1}}\right]m^2
\right) \  ,
\\
\label{boundLemma1}
 & \leq &   
\left(\frac{7}{2} n_{z0}
+
\left[\dfrac {4n_{z0}}{\varpi^2}  
+ 
\frac{512}{\mu ^2}\frac{1}{n_{z1}}\right]x^2
\right) \  .
\end{eqnarray}

\paragraph*{Case b) $n_o< x < n_o+1$.}~
This case is a variant of the preceding one, with a (fictitious) singularity on left side only.
In particular, the expression 
\eqref{eq:zeta_pk_simplified}
simplifies as 
\begin{eqnarray*}
\mathcursive{T}(x) 
& =  &
G_{n_o}(x) + F_{n_o}(x) 
\end{eqnarray*}
where
$$
G_{n_o}(x) := \dfrac{\displaystyle
\sum_{\ell=0}^{n_o-1} 
\frac{\ell^2 + x^2}{\left(x^2-\ell^2\right)^2}n_{z\ell}}
{1 + \displaystyle\mu^2 x^2 \left(\sum_{\ell=0}^{n_o} \frac{1}
{x^2 -\ell^2}n_{z\ell}\right)^2} \ , 
\qquad
F_{n_o}(x): =  \dfrac{\dfrac{n_o^2 + x^2}{(x^2-n_o^2)^2} n_{zn_0}}
 {1 + \displaystyle\mu^2 x^2 \left(\sum_{\ell=0}^{n_o} \frac{1}
{x^2  -\ell ^2}n_{z\ell}\right)^2}\ .
$$
Concerning $G_{n_o}$,
by recalling that
$ n_{z0} >  n_{z\ell}$,for any $\ell\geq0$, and that
the function $\chi \mapsto \frac{\chi +1}{(\chi -1)^2}$ is increasing on $[0,1[$ and  decreasing on 
$]1,+\infty [$,  we compute
\begin{eqnarray*}
|G_{n_o}(x)| &\leq &
\sum_{\ell=0}^{n_o-1} 
\frac{\ell^2 + x^2}{\left(\ell^2-x^2\right)^2}n_{z\ell}
\\ &\leq &
\left[\sum_{\ell=0}^{n_o-2} 
\frac{\ell^2 + n_o^2}{(\ell ^2 - n_o^2 )^2}
+
\frac{(n_o-1)^2 + n_o^2}{((n_o-1)^2-n_o^2)^2}
\right]
n_{z0}
\\  &\leq &
\left[\int_0^{n_o-1}
\frac{\ell^2+n_o^2}{(\ell^2-n_o^2)^2} d\ell
+ 
\frac{(n_o-1)^2+n_o^2}{(2n_o-1)^2}
\right]   n_{z0}
\\ &\leq & 
\left[
\frac{n_o-1}{2n_o-1} 
+ 
\frac{(n_o-1)^2+n_o^2}{(2n_o-1)^2}
\right]   n_{z0}
\; \leq \;  n_{z0} \ .
\end{eqnarray*}
Then, concerning
$F_{n_o}$, we note that,
since $x>n_o$, all the terms in the sum of the denominator
are positive. As a consequence, we can write
$$
F_{n_o}(x)
\; \leq \; 
\dfrac{\dfrac{n_o^2 + x^2}{(x^2-n_o^2)^2} n_{zn_0}
 }{1 + \displaystyle\mu^2   \frac{x^2}
{(x^2  -n_o ^2)^2}n_{zn_0}^2}
\; \leq \; \dfrac{\dfrac{ 2x^2}{(x^2-n_o^2)^2} n_{zn_0}}
 {\displaystyle\mu^2   \frac{x^2}
{(x^2  -n_o ^2)^2}n_{zn_0}^2}
\; \leq \; 
 \dfrac{2}{\mu^2n_{zn_0}}
 \qquad \forall x\in ]n_o,n_o+1[.
$$
By combining all the bounds and (\ref{23}), we
finally  obtain, for each $x$ in $]n_o,n_o+1[$,
\begin{eqnarray}
\nonumber
\mathcursive{T}(x) 
&\leq &
\left(n_{z0}
+ 
\dfrac{2}{\mu^2 n_{zn_0}}
\right)
\  ,
\\\notag
&\leq & 
\left(n_{z0}
+ 
\dfrac{2n_0^2}{\mu^2 n_{z1}}
\right)
\\
\label{boundLemma2}
&\leq &
\left(n_{z0}
+ 
\dfrac{2}{\mu^2 n_{z1}} x^2
\right) \ .
\end{eqnarray}

\paragraph*{Case c) $n_o+1 \leq x$.}~ 
Since the function $\chi \mapsto \frac{\chi +1}{(\chi -1)^2}$ is increasing on $[0,1[$ and  decreasing on 
$]1,+\infty [$,
from 
\eqref{eq:zeta_pk_simplified},  we obtain,
 for each $x$ larger than $n_o+1$,
\begin{eqnarray}
\notag
\mathcursive{T}(x) 
&=&
\dfrac{\displaystyle\sum_{\ell=0}^{n_o-1} 
\frac{\ell^2 + x^2}
{\left(x^2-\ell^2\right)^2 }n_{z\ell}
+
\frac{n_o^2 + x^2}
{\left(x^2-n_o^2\right)^2 }n_{z\ell}
}
{1 + \displaystyle\mu^2 x^2 \left(\sum_{\ell=0}^{n_o} \frac{1}
{x^2-\ell^2  }n_{z\ell}\right)^2}
\; \leq \;
\left[
\int_0^{n_o}
\dfrac{x^2+\ell^2}{(x^2-\ell^2)^2}
d\ell 
+
\dfrac{(n_o+1)^2+n_o^2}{((n_o+1)^2-n_o^2)^2}
\right]
n_{z0}
\\
\notag
&\leq  &n_{z0}
\left[\left.
\dfrac{\ell}{x^2- \ell^2}
\right|_0^{n_o} 
+
1
\right] n_{z0}
\; \leq \;
\left[\dfrac{n_o}{x^2- n_o^2}+1\right] n_{z0}
\; \leq \;
\left[\dfrac{n_o}{(n_o+1)^2- n_o^2}+1\right] n_{z0}
\\
& \leq &  n_{z0}
\dfrac{3n_o+1}{2n_o+1}
\ \leq \  \tfrac32  n_{z0}.
\label{boundLemma3}
\end{eqnarray}

 \paragraph*{End of the proof.}~
Finally, we can now conclude the proof of the lemma.
For this, we combine 
bounds 
(\ref{eq:bound_zetaNzeta_exact}),
\eqref{boundLemma1}, 
\eqref{boundLemma2} and \eqref{boundLemma3}.
\begin{itemize}
\item[$\bullet$] For each $x$ integer in 
$\{0,\ldots,n_o\}$, inequality (\ref{eq:bound_zetaNzeta_exact})  gives
$$
\mathcursive{T}(x)
\; \leq  \; \frac{2}{\mu ^2 n_{z1}}\,  x^2
\  .
$$
\item[$\bullet$]  For each $x$ in $(m,m+1)$ with $m$ in $\{0,\dots,n_o-1\}$, 
inequality  \eqref{boundLemma1} gives
$$
\mathcursive{T}(x)
\; \leq \; \left(\frac{7}{2} n_{z0}
+
\left[\dfrac {4n_{z0}}{\varpi^2}  
+ 
\frac{512}{\mu ^2}\frac{1}{n_{z1}}\right]x^2
\right)
\  ,
$$
with $\varpi$ defined in (\ref{31}) does not depend on $x$ nor $n_o$.
\item[$\bullet$]  
For each $x$ in $]n_o,n_o+1[$,  inequality \eqref{boundLemma2} 
$$
\mathcursive{T}(x)
\; \leq \;
n_{z0}
+ 
\dfrac{2}{\mu^2 n_{z1}} x^2
\  .
$$
\item[$\bullet$]  For each $x$ larger than $n_o+1$, 
inequality  \eqref{boundLemma3} gives
$$
\mathcursive{T}(x)
\; \leq \; \tfrac32  n_{z0} 
\  .
$$
\end{itemize}
So we have established \eqref{32} with
$$
\kappa _0\;=\; \frac{7}{2} n_{z0}
\quad ,\qquad 
\kappa _1\;=\; 
\dfrac {4n_{z0}}{\varpi^2}  
+ 
\frac{512}{\mu ^2}\frac{1}{n_{z1}}
$$
which do not depend on $n_0$, thus concluding the proof.\qed
\end{proof}

\end{appendix}

\bibliographystyle{spmpsci}      
\bibliography{biblio}   

%


%
%

\end{document}